\def\@tocline#1#2#3#4#5#6#7{\relax
  \ifnum #1>\c@tocdepth 
  \else
    \par \addpenalty\@secpenalty\addvspace{#2}%
    \begingroup \hyphenpenalty\@M
    \@ifempty{#4}{%
      \@tempdima\csname r@tocindent\number#1\endcsname\relax
    }{%
      \@tempdima#4\relax
    }%
    \parindent\z@ \leftskip#3\relax \advance\leftskip\@tempdima\relax
    \rightskip\@pnumwidth plus4em \parfillskip-\@pnumwidth
    #5\leavevmode\hskip-\@tempdima
      \ifcase #1
       \or\or \hskip 1em \or \hskip 2em \else \hskip 3em \fi%
      #6\nobreak\relax
      \dotfill
      \hbox to\@pnumwidth{\@tocpagenum{#7}}
    \par
    \nobreak
    \endgroup
  \fi}
\theoremstyle{plain}
\newtheorem{theorem}{Theorem}[section]
\newtheorem{lemma}[theorem]{Lemma}
\newtheorem{proposition}[theorem]{Proposition}
\newtheorem{assumption}[theorem]{Assumption}
\theoremstyle{remark}
\newtheorem{remark}[theorem]{Remark}
\numberwithin{equation}{section}
\DeclareMathOperator{\spec}{spec}
\DeclareMathOperator{\supp}{supp}
\DeclareMathOperator{\tr}{Tr}
\def\geqslant{\ge}
\def\leqslant{\le}
\def\bq{\begin{eqnarray}}
\def\eq{\end{eqnarray}}
\def\bqq{\begin{eqnarray*}}
\def\eqq{\end{eqnarray*}}
\def\eps{\varepsilon}
\newcommand{\norm}[1]{\left\lVert #1 \right\rVert}
\newcommand\1{{\ensuremath {\mathds 1} }}
\newcommand{\gammaP}{\gamma_{\Psi}}
\renewcommand{\epsilon}{\varepsilon}
\def\cF {\mathcal{F}}
\def\R {\mathbb{R}}
\def\C {\mathbb{C}}
\def\N {\mathcal{N}}
\def\E {\mathcal{E}}
\def\OO {\mathcal{O}}
\def\R {\mathbb{R}}
\def\C {\mathbb{C}}
\def\N {\mathcal{N}}
\def\gS{\mathfrak{S}}
\def\E {\mathcal{E}}
\def\gH{\mathfrak{H}}
\def\gHp{\mathfrak{H} _{\perp}}
\def\bH{\mathbb{H}}
\def\dd{\partial}
\newcommand{\Real}{\mathrm{Re}}
\newcommand{\dGamma}{{\ensuremath{\rm d}\Gamma}}
\newcommand\ket[1]{{\ensuremath{|#1\rangle}}}
\renewcommand{\leq}{\leqslant}
\renewcommand{\geq}{\geqslant}
\newcommand{\Psidloc}{\Psi_{\rm dloc}}
\newcommand{\Psiloc}{\Psi_{\rm loc}}
\newcommand{\gammai}{\gamma ^{(1)}}
\newcommand{\gammaidloc}{\gammai_{\rm dloc}}
\newcommand{\gammailoc}{\gammai_{\rm loc}}
\newcommand{\EH}{\E_{\rm H}}
\newcommand{\EHp}{\E_{{\rm H}+}}
\newcommand{\EHm}{\E_{{\rm H}-}}
\newcommand{\EHpm}{\E_{{\rm H}\pm}}
\newcommand{\eH}{e_{\rm H}}
\newcommand{\uH}{u_{\rm H}}
\newcommand{\uHm}{u_{{\rm H} -}}
\newcommand{\uHp}{u_{{\rm H} +}}
\newcommand{\uHpm}{u_{{\rm H} \pm}}
\newcommand{\uHmp}{u_{{\rm H} \mp}}
\newcommand{\muH}{\mu}
\newcommand{\HH}{H_{ \rm mf}}
\newcommand{\hH}{h_{ \rm mf}}
\newcommand{\EB}{\mathcal{E}_{\rm B}}
\newcommand{\eB}{e_{\rm B}}
\newcommand{\bHb}{\bH ^{\rm B}}
\newcommand{\PhiB}{\Phi ^{\rm B}}
\newcommand{\phiB}{\phi ^{\rm B}}
\newcommand{\EHt}{\mathcal{E}_{{\rm H},\delta_N}}
\newcommand{\EHtpm}{\tilde{\mathcal{E}}_{{\rm H} \pm}}
\newcommand{\uHt}{u_{{\rm H},\delta_N}}
\newcommand{\uHtp}{\tilde{u}_{{\rm H} +}}
\newcommand{\uHtpm}{\tilde{u}_{{\rm H} \pm}}
\newcommand{\eHt}{e_{{\rm H} , \delta_N}}
\newcommand{\eHtt}{\tilde{e}_{{\rm H}}}
\newcommand{\muHt}{\mu_{\delta_N}}
\newcommand{\HHt}{H_{{\rm mf},\delta_N}}
\newcommand{\EHN}{{\mathcal{E}}_{\rm H}^{\lambda'}}
\newcommand{\uHN}{u_{\rm H}^{\lambda'}}
\newcommand{\uHNp}{u_{{\rm H} +}^{\lambda'}}
\newcommand{\uHNm}{u_{{\rm H} -}^{\lambda'}}
\newcommand{\PhiBN}{\Phi^{{\rm B},\lambda'}}
\newcommand{\phiBN}{\phi^{{\rm B},\lambda'}}
\newcommand{\Vtp}{V_{\rm t} ^+}
\newcommand{\Vtm}{V_{\rm t} ^-}
\newcommand{\Vtpm}{V_{\rm t} ^{\pm}}
\newcommand{\tV}{V_{\delta_N}}
\newcommand{\chim}{\chi_-}
\newcommand{\chip}{\chi_+}
\newcommand{\chipm}{\chi_{\pm}}
\newcommand{\etam}{\eta_-}
\newcommand{\etap}{\eta_+}
\newcommand{\etapm}{\eta_{\pm}}
\newcommand{\HT}{\tilde{H}_1}
\newcommand{\Nmpe}{\N_- ^{\perp}}
\newcommand{\Nppe}{\N_+ ^{\perp}}
\newcommand{\Npe}{\N ^{\perp}}
\newcommand{\halfLN}{{\frac{L_N}{2}}}
\newcommand{\sym}{{\mathrm{sym}}}
\newcommand{\re}{\mathrm{Re}\, }
\newcommand{\I}{{\rm{i}}}
\newcommand{\D}{{\rm{d}}}
\newcommand{\dss}{\displaystyle}
\newcommand{\be}{\begin{equation}}
\newcommand{\ee}{\end{equation}}
\newcommand{\bea}{\begin{eqnarray}}
\newcommand{\eea}{\end{eqnarray}}
\def\real{{\mathbb{R}}}
\def\proba{{\rm I\kern -.18em P}}
\newcommand{\ie}{i.e.}
\newcommand{\RHS}{right-hand side\;\,}
\newcommand{\onehalf}{\frac{1}{2}}
\newcommand{\bra}[1]{\ensuremath{\langle{#1} |}}
\newcommand{\localized}{{\rm loc}}
\title[Interacting bosons in double wells]{Interacting bosons in a double-well potential : localization regime} 
\author[N. Rougerie]{Nicolas ROUGERIE}
\address{Universit\'e Grenoble Alpes \& CNRS,  LPMMC,  F-38000 Grenoble, France}
\email{nicolas.rougerie@lpmmc.cnrs.fr}
\author[D. Spehner]{Dominique SPEHNER}
\address{Universit\'e Grenoble Alpes \& CNRS, Institut Fourier \&  LPMMC,  F-38000 Grenoble, France\\
Present address: Departamento de Ingenier\'{\i}a Matem\'atica, Universidad de Concepci\'on, Chile}
\email{dspehner@ing-mat.udec.cl}
\date{March, 2018}
\begin{document}

\begin{abstract}
We study the ground state of a large bosonic system trapped in a symmetric double-well potential, letting the distance between the 
two wells increase to infinity with the number of particles. In this context, one should expect an interaction-driven
transition between a delocalized state (particles are independent and all live in both wells) and a localized state (particles are correlated, half of them live in each well). 
We start from the full many-body Schr\"odinger Hamiltonian in a large-filling situation where the on-site interaction and 
kinetic energies are comparable. When tunneling is negligible against interaction energy, we prove a localization estimate 
showing that the particle number fluctuations in each well are strongly suppressed.  The modes in which the particles condense
 are minimizers of nonlinear Schr\"odinger-type functionals. 
\end{abstract}

\maketitle

\setcounter{tocdepth}{2}
\tableofcontents

\section{Introduction}\label{sec:intro}
The Mott insulator/superfluid phase transition manifests itself by 
an interaction-driven drastic change in transport properties of a quantum system. Under conditions where the non-interacting 
system would be conducting, repulsive interactions can induce an insulating behavior if they dominate tunneling effects of electrons between ions in a crystal (in solid state systems), or of atoms between the wells of a magneto-optic trapping potential (in cold atomic gases).
Signatures of the transition have been observed experimentally in cold Bose gases trapped by periodic lattice potentials at low integer fillings (a few atoms per site)~\cite{BloDalZwe-08,Gre-etal-Blo-02,Ger-etal-Blo-06}. They include a sudden change in the fluctuations of the numbers of particles on each site and the relative phases at some critical value of the ratio between the tunneling and interaction energies~\cite{FisWeiGrFis-89}. 

In this paper, we mathematically investigate the  case of bosons confined in a double-well potential
 in a large filling situation, i.e. when one has many particles per well.
This situation corresponds to  current experiments in cold atom physics, the trapped atoms forming an
 external Bose-Josephson junction~\cite{Est-etal-Obe-08,Gro-etal-Obe-10,Rie-etal-Tre-10}. 
Like in the multiple-well case, one expects a transition between a delocalized and a localized regimes, the latter occurring
when the interactions between particles are stronger  than the energy needed for a particle to tunnel from one well into another. In the double-well situation this transition is not, however, expected to be a sharp transition. Instead, one expects for large atom numbers
a wide transition regime in which
the particle numbers and relative phase fluctuations change smoothly  (Josephson regime).

In theoretical studies of the Mott transition, it is customary to use a tight-binding approximation and work with a Hubbard
 model~\cite{FisWeiGrFis-89,Leggett_2001}. This relies on assuming that only the ground state of each potential well is 
occupied. At low filling (few particles per well), this is certainly reasonable for the interaction energy within one well
 will usually be smaller than the gap above the well's ground state energy. The physics is then reduced to particles
 hopping/tunneling between wells and subject to on-site interactions.
In a large filling situation, it is not so clear that one can rely on such a simplified
model: the interactions between particles on a given site can (and will) change the mode in which particles condense. Nevertheless, some conditions of applicability of the two-mode approximation have been
worked out~\cite{MilCorWriWal-97} and
the two-mode Hubbard model has been used extensively in the physics
literature to study  external Bose-Josephson junctions (see for instance~\cite{Ferrini2011})
and has been successful in explaining
experimental results with a hundred up to thousand atoms per well~\cite{Gati_Oberthaler_2007,Est-etal-Obe-08}. 
The problem of going beyond the Bose-Hubbard description,
which has been also considered in the physics literature (see for example~\cite{GarDouCar-11}), does not seem to have previously been studied from a mathematical standpoint.

We here start from the full many-body Schr\"odinger Hamiltonian 
\begin{equation}\label{eq:hamil depart}
H_N = \sum_{j=1} ^N \left( -\Delta_j + V_N (x_j) \right) + \frac{\lambda}{N-1} \sum_{1\leq i < j \leq N} w(x_i-x_j) 
\end{equation}
for $N$ interacting particles in $\real^d$ ($d=1,2,3$) and consider the large $N$ limit of its ground state in the case 
where $V_N$ is a symmetric double-well potential. As appropriate for bosons, we consider the action of $H_N$ on the symmetric tensor product space 
$$\gH ^N := \bigotimes_{\rm sym} ^N L ^2 (\R^d)\simeq L^2_{\rm sym} (\R ^{dN})$$
and study its lowest eigenvalue and associated eigenfunction.

The first sum in the Hamiltonian~\eqref{eq:hamil depart} describes the kinetic  and 
potential energies of the bosons in presence of the external trapping potential $V_N$, with 
$x_j \in \R^d$ and  $\Delta_j$ standing for the position of the $j$-th particle 
and the corresponding Laplacian. 
The second sum in \eqref{eq:hamil depart} describes interactions among the particles, assumed to be repulsive. 
The fixed coupling constant $\lambda >0$ is multiplied by a scaling factor of order $1/N$, in such a way that 
interactions have a leading order effect in the limit $N\to \infty$, while the ground state energy per 
particle remains bounded (mean-field regime). The choice of fixing the range of the potential (mean-field limit) is mostly out of simplicity. One should certainly expect our results to remain true in a dilute limit (see e.g.~\cite[Chapter~7]{Rougerie-LMU} or~\cite[Chapter~5]{Rougerie-hdr} for a discussion of the distinction). 

We will not aim at a great generality for the interaction potential $w$.
In what follows, we denote by $\hat{w}$ its Fourier transform, $\| w \|_\infty=\sup_{x\in\R ^d}| w(x)|$ its sup
norm and by $B(0,R)$ 
 a ball of $\real^d$ of radius $R$ centered at the origin.

\begin{assumption}[\textbf{The interaction potential}]\label{assumptions-w}\mbox{}\\
The interaction potential $w$ is  positive, of positive type, symmetric, and bounded with compact support:
\begin{equation}\label{eq:asum w}
w >  0, \quad \hat{w} \geq 0, \quad w(x) = w(-x),\quad  \| w \|_\infty < \infty, \quad \supp (w) \subset B(0,R_w)
\end{equation}
for some $R_w>0$.
\end{assumption}

It is well-known~(see \cite{Lewin-ICMP,LewNamRou-14,Rougerie-LMU,Rougerie-spartacus} and references therein) that if the one-body potential 
$V_N \equiv V$ in~\eqref{eq:hamil depart} does not depend on $N$ and $V (x)$ goes to infinity when 
$|x| \rightarrow \infty$,  the lowest eigenvalue $E(N)$ of the Hamiltonian~\eqref{eq:hamil depart}
is given in  the large $N$ limit by  
\begin{equation}\label{eq:Hartree lim ener}
\lim_{N\to \infty} \frac{E(N)}{N} = \eH ( \lambda)\;,
\end{equation}
where 
$\eH( \lambda)$ is the Hartree energy, \ie, the minimum of the functional
\begin{equation}\label{eq:Hartree func}
\EH^\lambda [u] = \int_{\R ^d} \big( |\nabla u | ^2 + V |u| ^2 \big) \D x
+ \frac{\lambda}{2} \iint_{\R ^d \times \R ^d} |u(x)| ^2 w(x-y) |u(y)| ^2 \D x \D y 
\end{equation}
under the constraint $\| u\|_{L^2}^2 = \int_{\R^d} |u|^2 =1$.
The Hartree functional $\EH^\lambda [u]$ is obtained from the energy 
of the mean-field state $u ^{\otimes N}$ describing $N$ independent particles  in the same state $u$
as 
$$\EH^\lambda [u] = \frac{1}{N} \langle u ^{\otimes N} | H_N | u ^{\otimes N} \rangle\;.$$
The rationale behind~\eqref{eq:Hartree lim ener} is that the ground state $\Psi_N \in \gH ^N$
of the $N$-body Hamiltonian $H_N$ roughly behaves as  
\begin{equation}\label{eq:Hartree heurist}
 \Psi_N \underset{N \to \infty}\approx ( \uH^\lambda ) ^{\otimes N} 
\end{equation}
where $\uH^\lambda$ is the minimizer of  the Hartree functional.
Note that the latter is, under Assumption~\ref{assumptions-w}, unique modulo a constant phase and can be chosen to be positive.

\medskip

The situation changes when the trapping potential $V_N$ in  the Hamiltonian~\eqref{eq:hamil depart} is allowed to depend on $N$, which is the case of interest in this work. We shall consider a model with a symmetric double-well potential:
\begin{equation}\label{eq:double well}
V_N (x) = \min \left\{ V\left( x - \mathbf{x}_N\right), V\left(x + \mathbf{x}_N\right) \right\}\;,
\end{equation}
where $V$ is a fixed radial potential and the two localization centers $\pm \mathbf{x}_N=\pm ( L_N/2,0,\cdots,0)$ are along the first coordinate axis. We can with our methods deal with rather general radial confining potentials, but shall for simplicity stick to the model case of a power-law potential:
\begin{equation}\label{eq:hom pot}
V (x) = |x| ^s,\quad s \geq 2 \; .  
\end{equation}
To mimic a potential with two deep and well-separated wells, we let the inter-well distance 
$$L_N = 2 |\mathbf{x}_N| \to \infty$$
in the limit $N\to \infty$.  By scaling, this situation is equivalent to the one where the distance $L_N$ stays fixed  and the range of the interaction goes to $0$, but the potential barrier $V_N (0)$ goes to infinity.

In the following,  the coupling constant $\lambda$  is kept fixed and we will
often omit it in the upper index to simplify notation. We denote by $\EHm [u]$ and $\EHp [u]$
the functionals~\eqref{eq:Hartree func} in which  $V_N$ is replaced by the  left and right
 potential wells, given respectively by
$$ 
V_N ^- (x) = V\left(x + \mathbf{x}_N \right), \quad V_N ^+ (x) = V\left(x - \mathbf{x}_N\right)\;.
$$ 
Obviously, the positive minimizers  $\uHpm$ of $\EHpm [u]$ are equal to the same   
fixed function $\uH$ modulo a translation by  $\pm \mathbf{x}_N$ and the corresponding Hartree energies are 
equal,
$$
\EHpm [ \uHpm ] := \inf_{ \| u\|_{L^2} = 1} \big\{ \EHpm [ u] \big\}
 =  \eH \;. 
$$ 
When the inter-well distance $L_N$ is very large, $\uHm$ and $\uHp$ become almost  orthogonal to one another since they are localized in far-apart potential wells.

\medskip

In the case of a single particle with Hamiltonian $H_1=-\Delta + V_1$, it is well-known that 
the lowest energy state $\Psi_1$ is close in the limit $L_1 \to \infty$ to the symmetric superposition
\begin{equation}\label{eq:ansatz deloc 1p}
\Psi_1 \approx C_{\mathrm{dloc}} ^1 \left( \frac{u_{-} +u_{+}}{\sqrt{2}}  \right)\;,
\end{equation}
where $u_{\pm}$ is the ground state of $-\Delta + V_1^\pm$ and $C_{\mathrm{dloc}} ^1 $ is a normalization factor.
More precisely, denoting by $e$ the lowest eigenvalue of the  Hamiltonian $-\Delta + V$ in a single well,
 it can be shown~\cite{AveSeil-75,Daumer-96,Davies-82,ComSei-78,Harrell-78,Harrell-80,HelSjo-84,HelSjo-85,MorSim-80} that the spectrum of $H_1$ in an interval of length of order one
centered around $e$ consists of exactly two eigenvalues,
which converge to $e$ as $L_N \rightarrow  \infty$, and that the eigenfunction $\Psi_1$ associated to the lowest of these
eigenvalues satisfies 
$$\norm{\Psi_1 - \frac{u_{-} + u_{+} }{\sqrt{2}}} \to 0$$ 
as $L_1 \to \infty$. This embodies the fact that a particle in the ground state of $H_1$ has equal probabilities of being in the left or right well, \ie, the ground state~\eqref{eq:ansatz deloc 1p} is delocalized over the two wells. Extensions of this result to nonlinear models are given in~\cite{Daumer-91,Daumer-94}.

\medskip

When there is more than one particle, the situation may change completely due to the repulsive interactions. If both $N$ and $L_N$ are large, one should expect a transition between: 

\medskip

\noindent $\bullet$ A regime where a delocalized ground state akin to~\eqref{eq:Hartree heurist} is preferred,
occurring when  $L_N$ is not too large.
Actually, if $L_N$ is small enough so that tunneling dominates over interactions,
a reasonable approximation is to replace the Hartree minimizer in~\eqref{eq:Hartree heurist} by a symmetric superposition, in analogy with \eqref{eq:ansatz deloc 1p}. This leads to the heuristic
\begin{equation}\label{eq:ansatz deloc}
\Psi_N \approx  \Psidloc = C_{\mathrm{dloc}} ^N \left( \frac{\uHm + \uHp }{\sqrt{2}} \right) ^{\otimes N}  
\end{equation}
with the normalization factor
$$C_{\mathrm{dloc}}^N = \left( 1 + \left\langle \uHm , \uHp \right\rangle\right)^{-N/2} \to 1$$ 
when $L_N \to \infty$. In the $N$-body state $\Psidloc$, all particles are independent and identically distributed in the same quantum state, delocalized over the two wells. 

\smallskip

\noindent $\bullet$ A regime where a localized state emerges to reduce on-site interactions, occurring
for larger inter-well distances $L_N$. 
An ansatz for such a state can be taken of the  form (hereafter we assume that $N$ is even)\footnote{The definition of the symmetrized
tensor product $\otimes_{\mathrm{sym}}$ is recalled below, see~\eqref{eq-symm_tensor_prod}.}
\begin{equation}\label{eq:ansatz loc}
\Psi_N \approx \Psiloc = C_{\mathrm{loc}}^N \, \uHm^{\otimes N/2} \otimes_{\mathrm{sym}} \uHp^{\otimes N/2}\;.
\end{equation}
As above,  the normalization factor  $C_{\mathrm{loc}}^N \to 1$  when $L_N \to \infty$, with small corrections of the order of $\langle \uHm , \uHp \rangle$.
The ansatz $\Psiloc$ is a correlated state where  half of the particles live in the left well $V_N^-$ and 
the other half in the right well $V_N^+$. Note that the ansatz~\eqref{eq:ansatz loc} involving two one-body wave functions has a kinship with states used to describe two component Bose-Einstein condensates, see e.g.~\cite{AnaHotHun-17,MicOlg-16,MicOlg-17,Olgiati-17}.
The physics is however very different, and so shall our analysis be.

\medskip

In this paper, we focus on the regime where localization prevails. Note that this should \emph{not} be interpreted as~\eqref{eq:ansatz loc} being very close to the true ground state, even in the sense of reduced density matrices (see Remark~\ref{rem:DMs} and Section~\ref{sec:heuristic2}
below). The simple ansatz~\eqref{eq:ansatz loc} is in fact motivated by what happens very deep in the localization regime. In the regime close to the transition, which is our concern here, localization is not as strong as in~\eqref{eq:ansatz loc} and one should be careful about what it actually means.

We formulate localization as follows. Denote by $a^\ast(u)$ and $a(u)$ the bosonic creation and annihilation operators\footnote{The definition is recalled below, see~\eqref{eq:anni cre}.} in a mode $u \in L^2 ( \R^d)$. Let
\begin{equation}\label{eq:part num op}
\N_- = a^\ast (\uHm ) a(\uHm ) \: \mbox{ and } \: \N_+ = a^\ast (\uHp ) a(\uHp ) 
\end{equation}
be the number operators in the modes $\uHm$ and $\uHp$ localized in the left and right wells, respectively. We say that the system is localized if, in the large $N$ limit, the variance of $\N_\pm$ satisfies\footnote{
  By symmetry of the potential $V_N$, it is easy to see that the expectation of $\N_\pm$ in the ground state $\Psi_N$ is $\bra{\Psi_N} \N_{\pm} \ket{\Psi_N} = N/2$. Thus the quantity in the
  left-hand side of~\eqref{eq:part num op} is the variance of $ \N_\pm$.
}  
\begin{equation}\label{eq:intro main result}
\boxed{\left\langle \Psi_N \left|  \left(\N_\pm - \frac{N}{2}\right) ^2 \right| \Psi_N \right \rangle \ll N \;}
\end{equation}
with $\Psi_N$ the ground state of the many-body Hamiltonian. Then the fluctuations of $\N_-$ and $\N_+$  are reduced with respect to the case of independent particles~\eqref{eq:ansatz deloc}, where they would be of order $N$  since 
$$\left\langle u^{\otimes N} \, \big|\, \N_\pm^2 \, \big|\, u^{\otimes N}\right\rangle  - \left\langle u^{\otimes N} \, \big|\, \N_\pm \, \big|\, u^{\otimes N} \right\rangle^2  = N \left\langle \uHpm , u \right\rangle$$ 
for any  $u \in L^2 ( \R^d)$. The reduced fluctuations in~\eqref{eq:intro main result} constitute a violation of the central limit theorem and show the occurrence of strong correlations, akin to those of~\eqref{eq:ansatz loc}, in the ground state of the system. 

Our main result in this paper shows that, for a fixed $\lambda$, localization in the sense of~\eqref{eq:intro main result} occurs when $N\to \infty$ and $L_N \rightarrow \infty$ satisfy
\begin{equation}\label{eq:loc regime}
\log N  \leq 2 ( 1- \eps) A \Big(\frac{L_N}{2} \Big)
\end{equation}
for some arbitrarily small fixed $\eps >0$, where  
\begin{equation}\label{eq:Agmon}
A(r)=  \int_0^{r} \sqrt{V(r')} \D r' 
\end{equation}
is the Agmon distance (at zero energy) from semiclassical analysis~\cite{Agmon-82}. In the model case~\eqref{eq:hom pot} we have 
\begin{equation}\label{eq:Agmon hom} 
A (r) = \left( 1+ \frac{s}{2} \right) ^{-1} r ^{1 + s/2}. 
\end{equation}

Note that, although the localized and delocalized states $\Psiloc$ and $ \Psidloc$ in 
\eqref{eq:ansatz deloc} and \eqref{eq:ansatz loc} 
have very different physical properties, distinguishing them in 
the large $N$ limit is not as easy as one might think.
Actually, as we shall see in Section~\ref{sec:heuristic}, the difference between the interaction energies per particle in the states $\Psiloc$ and $ \Psidloc$ is of order $1/N$.
This is of the same order as  the next-to-leading order term in the large $N$ expansion  of the ground state energy in a single well, due to Bogoliubov fluctuations~\cite{BocBreCenSch-17,CorDerZin-09,Seiringer-11,GreSei-13,LewNamSerSol-13,DerNap-13,NamSei-14,LewNamSerSol-13}. Indeed, if the potential $V_N \equiv V$ in~\eqref{eq:hamil depart} is independent of $N$, one can go beyond \eqref{eq:Hartree lim ener} and prove that
\begin{equation}\label{eq:Hartree lim 2}
\frac{E(N)}{N} = \eH (\lambda) + N ^{-1} \eB (\lambda) + o (N ^{-1})\;,
\end{equation}
where 
$\eB (\lambda)$ is the ground state energy of the Bogoliubov Hamiltonian, obtained from $H_N$ by a suitable expansion around the condensed state $\uH ^{\otimes N}$ (see Section~\ref{sec:bog} below for more details). We will prove that Bogoliubov fluctuations, even though they must be taken into account in the  analysis of the problem,  do not play an important role in deciding which of the localized and delocalized states has the smallest energy.

\bigskip 

\noindent \textbf{Acknowledgments.} Warm thanks to Phan Th\`anh Nam for many detailed discussions on Bogoliubov's theory. We were financially supported by the French ANR project ANR-13-JS01-0005-01.
\section{Main results and discussion}\label{sec:main res}
\subsection{Heuristics} \label{sec:heuristic}

The order of magnitude (as a function of $N$) of the inter-well distance $L_N$ at which the localized state
(\ref{eq:ansatz loc}) has a lower energy than the delocalized state (\ref{eq:ansatz deloc})
can be derived heuristically as follows.
%
Let us consider
the \emph{tunneling energy}
\begin{align}\label{eq:tunneling term}
\nonumber 
T_N  &= \left\langle \uHm |  ( -\Delta + V_N ) | \uHp \right\rangle  \\
&= \bigg\langle \frac{\uHm+\uHp}{\sqrt{2}} \bigg| (  - \Delta + V_N  ) \bigg|  \frac{\uHm+\uHp}{\sqrt{2}} \bigg\rangle \nonumber\\
 &- \frac{1}{2} \langle \uHm |  ( - \Delta + V_N ) | \uHm\rangle - \frac{1}{2} \langle \uHp |  ( - \Delta + V_N ) | \uHp \rangle\;. 
\end{align}
Recall the variational equations satisfied by $\uHpm$:
\begin{equation} \label{eq:var eq Hartree}
\left[ -\Delta + V_N^\pm  + \lambda w*|\uHpm | ^2 \right]  \uHpm = \mu \uHpm  
\end{equation}
with $*$ denoting convolution and $\mu$ the chemical potential (Lagrange multiplier), 
\begin{equation} \label{eq-chem pot}
\mu = \eH + \frac{\lambda}{2} \iint_{\R^d \times \R^d} |\uHpm (x)| ^2 w(x-y) |\uHpm (y)| ^2 \D x \D y
\;.
\end{equation}
Inserting~\eqref{eq:var eq Hartree} in~\eqref{eq:tunneling term} we obtain 
\begin{equation}\label{eq:tunneling termbis}
T_N = \int_{\R ^d} \uHm \Vtp \uHp  = \int_{\R ^d} \uHp \Vtm \uHm \;,
\end{equation}
where $\Vtpm$  are the tunneling potentials
\begin{equation}\label{eq:tunneling pot}
\Vtpm = V_N - V_N ^\pm  - \lambda w * |\uHpm | ^2   + \mu\;.
\end{equation}
From~\eqref{eq:tunneling termbis} one can derive that $T_N \leq 0$ for large enough inter-well distances $L_N$
(see Proposition~\ref{cor:tunneling} below). Going back to~\eqref{eq:tunneling term},
this is equivalent to the symmetric delocalized state
$(\uHp + \uHm)/\sqrt{2}$ having a lower one-body (\ie, kinetic and potential)
energy than the states $\uHm$ and $\uHp$ localized
in the left and right wells.

Actually, we recall that the energy of $N$ bosons in a state $\Psi \in \gH ^N $ is given by 
\begin{equation}\label{eq:ener DM}
 \langle \Psi | H_N | \Psi \rangle 
= E^{({\rm kin+pot})}_\Psi + E^{({\rm int})}_\Psi \;,
\end{equation}
where the energy components are 
\begin{equation} \label{eq-energy_with_density_matrices}
E^{({\rm kin+pot})}_\Psi =
\tr \bigl[ ( -\Delta + V_N ) \gammaP ^{(1)} \bigr] 
\quad , \quad 
E^{({\rm int})}_\Psi
=   \frac{\lambda}{2(N-1)} \tr \bigl[  w \, \gammaP ^{(2)} \bigr] \;
\end{equation}
and the $k$-body density matrices $\gammaP^{(k)}$ are  defined by
\begin{equation}\label{eq:def DM}
\gammaP ^{(k)} =  \frac{N!}{(N-k)!} \tr_{k+1\to N} \left[ \,|\Psi \rangle \langle \Psi | \,\right]
\end{equation}
or, equivalently~\cite[Section 1]{Lewin-11},
\begin{equation} \label{eq:other def DM}
\left\langle v_1 \otimes_{\mathrm{sym}} \cdots \otimes_{\mathrm{sym}} v_k , \gammaP^{(k)}  u_1 \otimes_{\mathrm{sym}} \cdots \otimes_{\mathrm{sym}} u_k \right\rangle
= k! \left\langle \Psi | a^\ast ( u_1) \cdots a^\ast (u_k) a(v_1)\cdots a(v_k) | \Psi \right\rangle
\end{equation}
for any $k=1,\cdots, N$ and $u_1,v_1,\cdots,u_k,v_k \in \gH$.
Hereafter,  the symmetric tensor product $\otimes_{\mathrm{sym}}$  is defined   by\footnote{Note that
$\Psi_1 \otimes_\mathrm{sym} \Psi_2$ is not normalized even if this is the case for $\Psi_1$ and $\Psi_2$, for instance
$u \otimes_\mathrm{sym} u = \sqrt{2} u^{\otimes 2}$.}
\begin{multline} \label{eq-symm_tensor_prod}
 \Psi_1 \otimes_\mathrm{sym} \Psi_2 (x_1,\ldots, x_{N}) :=  \frac{1}{\sqrt{N_1! N_2! N!}} \\ 
 \sum_{\sigma \text{ permutation of } \{ 1, \ldots, N\}} \Psi_1 ( x_{\sigma(1)} , \ldots, x_{\sigma(N_1)})
 \Psi_2  ( x_{\sigma(N_1+1)} , \ldots, x_{\sigma(N)}) 
 \end{multline}
for any  $\Psi_1 \in \gH ^{N_1}$ and  $\Psi_2 \in \gH ^{N_2}$ with $N = N_1 + N_2$.

A simple calculation shows that 
 the 1-body density matrix $\gammaidloc$
 and $\gammailoc$ in the delocalized and localized states  $\Psidloc$ and $\Psiloc$
are given by
\begin{align}
\gammaidloc &\simeq \frac{N}{2} \left( |\uHm \rangle \langle \uHm | + |\uHp \rangle \langle \uHp | +  |\uHm \rangle \langle \uHp | + |\uHp \rangle \langle \uHm |  
\right) \label{eq:DM 1 dloc} \\
\label{eq:DM 1 loc}
\gammailoc &\simeq \frac{N}{2} \left(  |\uHm \rangle \langle \uHm | + |\uHp \rangle \langle \uHp | \right) , 
\end{align}
up to small corrections of order $N \langle \uHm \,,\, \uHp\rangle$. These can be neglected in the limit $L_N \rightarrow \infty$. One then infers from~\eqref{eq:ener DM},~\eqref{eq:DM 1 dloc}, and~\eqref{eq:DM 1 loc} that
\begin{equation}
 E^{({\rm kin+pot})}_{\rm dloc} - E^{({\rm kin+pot})}_{\rm loc} 
= N T_N < 0\;.
\end{equation}
As a result of tunneling between the two wells, $\Psidloc$ has a lower one-body energy  than  $\Psiloc$.  

On the other hand, $\Psiloc$ has a lower interaction energy than  $\Psidloc$. 
Indeed, it is easy to see that the 2-body density matrices of both states
have ranges in the 3-dimensional subspace with basis $\{ \ket{\uHp^{\otimes 2}}, \ket{\uHm^{\otimes 2}}, \ket{\uHm \otimes_\sym \uHp} \}$.
Since $\uHm$ and $\uHp$ are well-separated in space and $w$ is short-ranged, one can neglect in the limit $L_N \to \infty$
all the matrix elements of $w$ in this basis save for the two elements 
\begin{eqnarray*}
  \langle  \uHpm^{\otimes 2} , w \, \uHpm^{\otimes 2} \rangle = \langle \uH^{\otimes 2}  , w  \, \uH^{\otimes 2} \rangle =
  \iint_{\R^d \times \R^d} | \uH |^2 (x) w (x-y)  | \uH |^2(y) \D x \D y 
\end{eqnarray*}
corresponding to on-site interactions (see Remark~\ref{rem:bound_int_energy_term} below).
The first equality follows from the translation invariance and parity of $w$. 
By using~\eqref{eq:part num op}, (\ref{eq:other def DM}), and the commutation relations of $a$ and $a^\ast$ one finds
\begin{eqnarray*}  
\langle \uHpm^{\otimes 2}, \gamma_\Psi^{(2)} \uHpm^{\otimes 2} \rangle = \bra{\Psi} \N_\pm ( \N_\pm -1 ) \ket{\Psi} \;. 
\end{eqnarray*}
We infer from \eqref{eq-energy_with_density_matrices} that
\begin{equation} \label{eq-interaction_energy_general_Psi}
  E_\Psi^{({\rm int})} =  \frac{\lambda}{2(N-1)} \langle \uH^{\otimes 2} , w  \, \uH^{\otimes 2} \rangle
  \bra{\Psi} \left(  \N_+ ( \N_+ -1 ) + \N_{-} ( \N_{-} -1 )  \right) \ket{\Psi}\;.
\end{equation}
The interaction energies of the delocalized and localized 
states are thus given by
\begin{equation}
E_{\rm dloc}^{({\rm int})} =
\Bigl( 1 + \frac{1}{N-2} \Bigr)  E_{\rm loc}^{({\rm int})} 
= \frac{\lambda N}{4}
\langle \uH^{\otimes 2} , w \, \uH^{\otimes 2} \rangle
\end{equation}
The localized state thus favors the interaction energy, but only by a small amount, $O(N ^{-1})$ in the energy per particle. 

We deduce from this discussion that the 
limits of larges $N$ and $L_N$ (with fixed $\lambda$)
for which the localized state (\ref{eq:ansatz loc}) has a lower energy than
the delocalized state is given by 
\begin{equation}\label{eq:loc regime pre}
\text{ Localization regime: } N\to \infty, \quad  ,\quad \lambda N ^{-1} \gg |T_N| 
\end{equation}
We warn the reader that while this limit is obtained by comparing the energies of
the two ground states $\Psiloc$ for zero tunneling and $\Psidloc$ for vanishing interactions,
the true ground state of $H_N$ differs significantly
from both $\Psiloc$ and $ \Psidloc$ when $\lambda N^{-2} \ll |T_N| \ll \lambda$
(see Remark~\ref{rem:DMs} and Section~\ref{sec:heuristic2} below).
Although a better definition would be given by the
localization criterion \eqref{eq:intro main result},
 we hereafter refer to the limit~\eqref{eq:loc regime pre} as the ``localization regime'' since
 we are able to prove~\eqref{eq:intro main result} in this limit. We do not claim optimality however, see the 
better estimates of the ground state in Section~\ref{sec:heuristic2} below.

As we shall later see, due to the presence of the non-linearity we are 
not able to  evaluate exactly the order of magnitude of $|T_N|$, but we get in  Proposition~\ref{cor:tunneling}
a rather precise estimate: for any $\varepsilon >0$, 
\begin{equation}\label{eq:estim tunnel}
 c_\varepsilon \exp\left( -2 (1 + \varepsilon)  A \Bigl( \frac{L_N}{2} \Bigr) \right) \leq |T_N| 
 \leq C_\varepsilon \exp\left( -2 (1 - \varepsilon)  A \Bigl( \frac{L_N}{2} \Bigr)  \right) \;,
\end{equation}
where $A(r)$ is the Agmon distance~\eqref{eq:Agmon} associated with the single-well potential $V$ and $c_\varepsilon$ and $C_\varepsilon$ are positive constants depending only on~$\varepsilon$.
In view of~\eqref{eq:estim tunnel} and since $A(L_N/2)\to \infty$ as $L_N \to \infty$, 
the localization condition~\eqref{eq:loc regime pre} is satisfied when for any $\varepsilon >0$ and fixed $\lambda$,
\begin{equation}\label{eq:loc regime 2}
\boxed{\text{Localization: } N\to \infty, \quad L_N \rightarrow \infty , \quad \log N  \leq 2 ( 1 - \varepsilon) A \left( \frac{L_N}{2} \right).} 
\end{equation}

 In the sequel,
 when we will write that~\eqref{eq:loc regime} or~\eqref{eq:loc regime 2} holds, this will always mean that it does so for some 
$\varepsilon >0$ that one can choose arbitrarily small, independently of $N$.

\subsection{Main theorem}

One difficulty is apparent from the previous discussion: we are trying to capture a transition governed by a correction of order $N^{-1}$ to the ground state energy per particle. On-site fluctuations  are responsible for another correction of the 
same order of magnitude, cf~\eqref{eq:Hartree lim 2}. The intuition discussed above is nevertheless correct and a localized state will  be preferred in the 
regime~\eqref{eq:loc regime 2}. A rigorous proof of this fact requires a detailed analysis taking into account on-site Bogoliubov fluctuations.

To state our main result we recall that the single-well Hartree energy $\eH(\lambda)$ at coupling constant $\lambda$ 
is defined as the minimum of the energy functional~\eqref{eq:Hartree func} with $V$ the single-well potential~\eqref{eq:hom pot}. 
The Bogoliubov energy $\eB (\lambda)$ is obtained as the lowest eigenvalue of the second quantization of the Hessian of $\EH^\lambda$ around its minimum, see Section~\ref{sec:bog} for details.

\begin{theorem}[\textbf{Localized Regime}]\label{thm:main loc}\mbox{}\\
Let $\lambda \geq 0$ be a fixed constant. In the limit~\eqref{eq:loc regime}, we have
\begin{itemize}
\item \emph{(Energy asymptotics):} the ground state energy $E(N)$ of $H_N$ satisfies
\begin{equation}\label{eq:ener lim loc}
\boxed{\frac{E(N)}{N} = \eH  \left( \Delta_N \frac{\lambda}{2} \right) + \frac{2}{N} \eB\left( \frac{\lambda}{2} \right) + o (N ^{-1})} 
\end{equation}
where
\begin{equation}\label{eq:delta N}
\Delta_N = 1 - \frac{1}{N-1}. 
\end{equation}
\item \emph{(Particle number fluctuations):} the ground state  $\Psi_N$ of $H_N$ satisfies 
\begin{equation}\label{eq:sup fluctu}
\boxed{\left\langle  \Psi_N \left| \left(\N_+ - \frac{N}{2}\right) ^2 \right| \Psi_N \right\rangle
 +  \left\langle  \Psi_N \left| \left(\N_- - \frac{N}{2}\right) ^2 \right| \Psi_N \right\rangle \ll N}
\end{equation}
where $\N_\pm$ are the particle number operators in the left and right wells, defined in~\eqref{eq:part num op}.
\end{itemize}
\end{theorem}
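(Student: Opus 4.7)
The plan is to prove matching upper and lower bounds for $E(N)$ at accuracy $o(1)$, then extract the fluctuation estimate from the sharpness of the lower bound. Throughout I use the effective per-well coupling $\tilde\lambda := \Delta_N\, \lambda/2$: the factor $1/2$ reflects that each well carries roughly $N/2$ particles, while $\Delta_N$ corrects for the mismatch between the $1/(N-1)$ normalization in $H_N$ and the $1/(N/2-1)$ normalization natural for a single-well Hartree functional of $N/2$ particles. The tunneling estimate \eqref{eq:estim tunnel}, applied with some $\eps' < \eps$ in the regime \eqref{eq:loc regime}, yields $|T_N| = O(N^{-1-\delta})$ for some $\delta>0$, so purely tunneling-driven contributions are already $o(N^{-1})$.

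For the \textbf{upper bound} I would use the Bogoliubov-decorated localized trial state
\begin{equation*}
\Psi_{\mathrm{trial}} = \mathcal{Z}^{-1}\, \Phi_{-} \otimes_\sym \Phi_{+},
\end{equation*}
where each $\Phi_\pm \in \gH^{N/2}$ is a standard single-well Bogoliubov trial state built around $\uHpm$ at effective coupling $\tilde\lambda$; by the known expansion \eqref{eq:Hartree lim 2}, each side contributes $(N/2)\eH(\tilde\lambda) + \eB(\tilde\lambda) + o(1)$. The cross terms of $\langle \Psi_{\mathrm{trial}} | H_N | \Psi_{\mathrm{trial}} \rangle$ fall into two groups: cross one-body (kinetic + potential) terms of size $O(N|T_N|) = o(1)$, and cross interaction terms involving matrix elements of $w$ between modes whose supports are essentially at distance $L_N \gg R_w$, hence exponentially small. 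This reaches \eqref{eq:ener lim loc} from above.

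For the \textbf{lower bound} I would perform a two-well IMS-type localization at the many-body level. Pick a smooth partition $\chi_{-}^2 + \chi_{+}^2 = 1$ with $\chi_\pm$ concentrated near $\pm\mathbf{x}_N$ and gradient scale $L_N/2$, and apply in $H_N$ the one-body identity $-\Delta = \chi_{-}(-\Delta)\chi_{-} + \chi_{+}(-\Delta)\chi_{+} - |\nabla\chi_{-}|^2 - |\nabla\chi_{+}|^2$. This reduces $H_N$ to a sum of two single-well Hamiltonians, one on each half-space, plus three error terms: an IMS commutator of order $L_N^{-2}$, a potential-replacement term $V_N - V_N^\pm$ controlled by Agmon decay of low-energy one-body eigenfunctions outside their well, and a cross-interaction term exponentially small in $L_N$. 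On each block I would apply the single-well Bogoliubov lower bound at coupling $\tilde\lambda$ for variable particle number $N_\pm$, subject to $\N_{+} + \N_{-} = N$; strict convexity of $\lambda \mapsto \eH(\lambda)$ then produces a coercive penalty of order $(N_\pm - N/2)^2/N$ for asymmetric splittings. Summing matches the upper bound up to $o(N^{-1})$.

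The fluctuation bound \eqref{eq:sup fluctu} then falls out: feeding $\Psi_N$ into the coercive lower bound and comparing with the matching upper bound forces the penalty to be $o(1)$ per particle, hence $\langle \Psi_N | (\N_\pm - N/2)^2 | \Psi_N\rangle = o(N)$. \emph{The main obstacle will be the lower bound.} The quantity that decides localization, the on-site Bogoliubov correction, and the next-to-leading single-well correction are all of the same order $N^{-1}$ per particle, so the IMS commutator, the Agmon estimates for the Hartree minimizer and for low-lying one-body eigenfunctions, the cross-interaction bounds, and the single-well Bogoliubov lower bound must be assembled at the precise accuracy $o(N^{-1})$. In particular, the tunneling term must exploit the Agmon reserve built into \eqref{eq:loc regime} to stay strictly below the $N^{-1}$ threshold, and the single-well Bogoliubov bound has to be uniform in the per-well particle number $N_\pm$. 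Handling these contributions simultaneously, together with the grand-canonical character of the two-well splitting, is the delicate technical point.
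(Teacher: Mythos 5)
Your overall architecture coincides with the paper's (a Bogoliubov-decorated localized trial state for the upper bound; an IMS two-well splitting plus single-well Bogoliubov lower bounds plus optimality of the even splitting for the lower bound; fluctuations extracted from the matched bounds), but two steps of your lower bound would fail as written. First, the localization error. You treat the IMS term as a uniform error of order $L_N^{-2}$, with cutoffs varying on scale $L_N/2$. Summed over the $N$ particles this is $O(N L_N^{-2})$, and in the regime~\eqref{eq:loc regime} the particle number may be as large as $e^{2(1-\eps)A(L_N/2)}$, so $N L_N^{-2}\to\infty$: this alone destroys the total accuracy $o(1)$ needed for~\eqref{eq:ener lim loc}. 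The same problem hits the replacement $V_N\to V_N^\pm$ wherever the transition region comes within distance $\ll L_N/2$ of a well, since the available Agmon decay $e^{-2A(r)}$ there cannot beat the factor $N$. The mechanism that actually closes the argument (Sections~\ref{sec:geom loc} and~\ref{sec:pert well} of the paper) is to confine the transition to a strip of width $\ell\ll L_N$ around the midpoint, absorb both $-|\nabla\chi_\pm|^2$ and the potential replacement into perturbed single-well potentials $\tilde V_N^\pm$ as in~\eqref{eq:mod pots}--\eqref{eq:dif pot pert}, and compare perturbed and unperturbed Hartree problems (cost $O(|T_N|^{1-\eta})$ per particle, so $N|T_N|^{1-\eta}=o(1)$ precisely thanks to~\eqref{eq:loc regime}); moreover, for a general many-body state the part of this error carried by non-condensed particles cannot be controlled by decay of the Hartree minimizer alone and requires an a priori bound on the excitation energy, which is the role of~\eqref{eq:fluctu number}. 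A further slip: the coercive penalty for uneven splittings cannot come from ``strict convexity of $\lambda\mapsto\eH(\lambda)$'', since that map is concave (an infimum of affine functions of $\lambda$); the correct input is strict convexity of the Hartree energy in the mass, combined with the scaling relation~\eqref{eq:scaling Hartree}, which is exactly what yields $E_{n,N-n}^{\mathrm{loc}}\geq E_{N/2,N/2}^{\mathrm{loc}}+\frac{C}{N}\left(n-\frac{N}{2}\right)^2$ in Appendix~\ref{sec:app 2}.

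Second, the extraction of~\eqref{eq:sup fluctu} is too quick. The convexity penalty, confronted with the matching upper bound, only controls the fluctuations of the number of $\chi_\pm$-localized particles, i.e.\ of the spatial splitting; it does not by itself bound $\langle(\N_\pm-N/2)^2\rangle_{\Psi_N}$, where $\N_\pm=a^*(\uHpm)a(\uHpm)$ counts particles in the specific condensate modes. You need in addition to retain, from the single-well Bogoliubov lower bound~\eqref{eq:bog low}, the coercive term $n^{-1}(\N^{\perp}_\pm)^2$ in each particle-number sector, which controls how many localized particles sit outside the modes $\uHpm$, and then a cutoff-removal step comparing the number operator in the mode $\chim\uHm$ with $\N_-$, as in Lemma~\ref{lem:remove chi} together with $\int\chip^2|\uHm|^2=O(|T_N|)\ll N^{-1}$ (see~\eqref{eq-cross_int}). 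On the upper bound your symmetrized product is workable but generates exchange terms; the paper avoids them by using the unsymmetrized product~\eqref{eq:trial loc}, admissible because the bosonic ground state energy coincides with the unrestricted one, and it also needs the decay of the Bogoliubov ground state density (Lemma~\ref{lem:decay bog}) to control interactions between the condensate in one well and the excitations in the other, a point your ``exponentially small cross terms'' glosses over.
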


\begin{remark}[Composition of the energy]\label{rem:energy}\mbox{}\\
  The energy expansion~\eqref{eq:ener lim loc} is a first signature of a transition to a localized state.
  It coincides (up to errors of order $o(N^{-1})$)
  with the ground state energy of two independent bosonic gases localized infinitely far apart
  in the left and right wells, having $N/2$ particles each.
  Note that, since the coupling constant in the Hamiltonian \eqref{eq:hamil depart} is  $\lambda (N-1)^{-1}$ instead of
  $\lambda (N/2-1)^{-1}$, the parameter $\lambda$ should be renormalized as $\lambda \rightarrow \lambda \Delta_N/2$,
  so that the ground state energy of each gas is equal to
  $$
  (N/2) \eH  ( \Delta_N \lambda/2 ) + \eB ( \lambda/2) + o(1).$$
  The energy in the right-hand side of~\eqref{eq:ener lim loc} is therefore
 equal to the sum of the
 lowest energies of the two gases in the left and right wells, up to errors of order $o(1)$. 
  Indeed,~\eqref{eq:ener lim loc} can 
  be obtained using as trial state a refinement of~\eqref{eq:ansatz loc}, taking into account on-site Bogoliubov fluctuations.

In the regime~\eqref{eq:loc regime} one can see that the delocalized ansatz~\eqref{eq:ansatz deloc}, supplemented by the appropriate Bogoliubov fluctuations, has a larger energy per particle, by an  amount $O(N ^{-1})$. Indeed, the first term in~\eqref{eq:ener lim loc} then becomes $\eH  \left( \frac{\lambda}{2} \right)$ while the second one is unchanged at leading order\footnote{This is not so easy to show, see the expressions in~\cite{GreSei-13}.} and the tunneling contribution is negligible. \hfill$\diamond$
\end{remark}

\begin{remark}[Strong correlations]\label{rem:correlation}\mbox{}\\
To appreciate that~\eqref{eq:sup fluctu} is a signature of correlations, the following considerations are helpful. First note that $\N_-$ and $\N_+$ could be seen as random variables, and that, roughly speaking, 
$$ \N_- = \sum_{j=1} ^N X_j$$
where $X_j$ is a random variable which takes the value $1$ if particle $j$ is in the $-$ well and~$0$ if it is in the $+$ well. Each $X_j$ has mean $1/2$ so that 
$$ \left\langle  \Psi_N \big| \N_- \big| \Psi_N \right\rangle = \frac{N}{2}.$$
Now, if correlations between particles could be neglected, the $X_j$'s would be independent random variables. Using the central limit theorem, one would expect the variance 
$$ \left\langle  \Psi_N \left| \left(\N_- - \frac{N}{2}\right) ^2 \right| \Psi_N \right\rangle$$
to scale like $N$ when $N\to \infty$. Our result~\eqref{eq:sup fluctu} rules this out, and thus the $X_j$'s cannot be independent. Note that weakly correlated particles usually also satisfy central limit theorems, see for example~\cite{BenKirSch-13,BucSafSch-13}. Thus,
\eqref{eq:sup fluctu} implies that  the bosons
in the double-well potential must be strongly correlated in the localized regime. In contrast, in the delocalized regime
one expects weak correlations, and thus
particle number fluctuations of the order of $\sqrt{N}$.
\hfill$\diamond$

\end{remark}

\begin{remark}[More general single-well potentials $V$]\mbox{}\\
As mentioned in the Introduction, the results of
Theorem~\ref{thm:main loc} and the estimate \eqref{eq:estim tunnel} on the tunneling energy $T_N$ are in fact
valid for more general single-well potentials $V(x)$, not necessarily given by power laws.
For instance, one can easily generalize
all the estimates of Section~\ref{sec:hartree} and the proof in the subsequent sections to radial potentials $V(r)$ 
satisfying the following assumptions:
\begin{itemize}
\item[(a)]
$V(r) \geq 0$ and $V(r)$ is increasing on $(r_0,\infty)$ for some fixed $r_0>0$;
\item[(b)]
$\dss\lim_{r \to \infty}  \frac{\D}{\D r} \sqrt{ V(r)}$ exists and belongs to $(0,\infty]$;
\item[(c)]
$V'(r)/V''(r) \rightarrow \infty$, 
$V(r) /V'(r) \rightarrow \infty$ and $A(r)/\sqrt{V(r)} \rightarrow \infty$
when $r \rightarrow \infty$, with $A(r)$ the Agmon distance~\eqref{eq:Agmon} associated to $V$;
\item[(d)] $\dss \lim_{r \to \infty} \frac{r V'(r)}{V(r)} = s$ exists and belongs to $[2,\infty]$.
\end{itemize}
\hfill$\diamond$
\end{remark}

\begin{remark}[Ground state in the localization regime] \label{rem:DMs} \mbox{}\\
  The heuristic arguments presented in Section~\ref{sec:heuristic} to identify the localization regime
are very rough since we have merely compared the energies of two states $\Psiloc$ and $\Psidloc$, corresponding
respectively  to the true ground states in the absence of tunneling ($L_N = \infty$) and for vanishing interactions ($\lambda=0$).
 It turns out that
 the ground state of the many-body Hamiltonian \eqref{eq:hamil depart}
is not close to the purely localized ansatz~\eqref{eq:ansatz loc} in the whole
localization regime $|T_N| \ll \lambda N^{-1}$.
In fact, its one-body density matrix $\gamma_{\Psi_N}^{(1)}$ does not even have two macroscopic eigenvalues of order $N$, and a fortiori 
is not close to the diagonal density matrix $\gamma_{\rm loc}^{(1)}$ given by \eqref{eq:DM 1 loc}.
As we will see in the next subsection and Appendix~\ref{app:BH}, closeness to  $\gamma_{\rm loc}^{(1)}$ should be expected to hold
for lower tunneling energies  $|T_N| \ll \lambda N^{-2}$ only. For higher $|T_N|$,  $\gamma_{\Psi_N}^{(1)}$ is instead expected to be
close to the density matrix $\gamma_{\rm dloc}^{(1)}$  of the {\emph{delocalized}} state, given by \eqref{eq:DM 1 dloc}.

Even if Theorem~\ref{thm:main loc} does not provide a full characterization of the ground state,
it captures its most physically important feature, namely the
reduced  fluctuations of particle numbers in each well (squeezing),
which implies as mentioned before the presence of strong correlations between particles
(such correlations are of course not seen in the one-body density matrix).
One may conjecture from heuristic
arguments (see the next subsection) that this property holds more generally
in the limit $N \to \infty$, $L_N \to \infty$, $\lambda$ fixed, \ie, 
it also occurs for smaller inter-well distances $L_N$ which do not satisfy~\eqref{eq:loc regime}.
Proving this is, however, out of reach from the methods presented in Section~\ref{sec:low bound loc}. 
\hfill$\diamond$
\end{remark}

\subsection{More precise heuristics} \label{sec:heuristic2}

The properties of the ground state of interacting bosons in a symmetric double-well potential
have been studied extensively in the physics literature (see e.g. the review articles~\cite{Gati_Oberthaler_2007,Leggett_2001}).
We summarize them in Table~\ref{tab1} and derive them heuristically in this subsection and in Appendix~\ref{app:BH}, neglecting on-site Bogoliubov fluctuations as in Section~\ref{sec:heuristic}. 

The \emph{main conjectures} we wish to argue for in this subsection are that
\begin{itemize}
\item localization in the sense of~\eqref{eq:sup fluctu} holds when the tunneling energy satisfies
  $|T_N| \ll \lambda$  (compare with~\eqref{eq:loc regime pre}). 
\item this is essentially sharp, i.e.~\eqref{eq:sup fluctu} fails for $|T_N| \gg \lambda$. 
\end{itemize}
Proving these conjectures remains out of reach of our present method, for this would require much finer estimates of the tunneling contribution to the ground-state energy.
Note that the first conjecture implies that
localization  in the sense of~\eqref{eq:sup fluctu} always occurs  in the limit $N \to \infty$, $L_N \to \infty$, $\lambda$ fixed
(in fact, one has $T_N \to 0$ as $L_N \to \infty$).

\begin{table}[h]
\begin{tabular}{|c||c|c|c|}
\hline
Limit                             &  \begin{tabular}{c}  {\it Fock regime}  \\ $N \to \infty , |T_N| \ll \lambda N^{-2}$ \end{tabular}
& \begin{tabular}{c} {\it Josephson regime} \\ $N \to \infty, \lambda N^{-2} \ll |T_N| \ll \lambda$ \end{tabular} 
&  \begin{tabular}{c} {\it Rabi regime} \\ $N \to \infty , \lambda \ll |T_N|$ \end{tabular}
\\[1mm]
\hline
\begin{tabular}{c}  expected\\  ground state \end{tabular}
&  Fock state $\Psiloc$    &  Squeezed state $\Psi_{\rm sq}$    &   Coherent state $\Psidloc$
\\[1mm]
\hline
\begin{tabular}{c}  particle number\\  fluctuations \end{tabular}
&  $\langle (\Delta \N_{-})^2 \rangle = O (1)$ & $\langle (\Delta \N_{-})^2 \rangle \ll N$
& $\langle (\Delta \N_{-})^2 \rangle = O (N)$
\\[1mm]
\hline
 tunneling factor    & $\langle \uHm, \gamma^{(1)}_{\Psi_N} \uHp \rangle = O(1)$     &   $\langle \uHm, \gamma^{(1)}_{\Psi_N} \uHp \rangle \approx \frac{N}{2}$   & $\langle \uHm, \gamma^{(1)}_{\Psi_N} \uHp \rangle \approx \frac{N}{2}$
\\[1mm]
\hline
\end{tabular}

\bigskip

\caption{\label{tab1}
  Expected properties of the ground state of the many-body Hamiltonian \eqref{eq:hamil depart} for large $N$ and $L_N$ (see e.g.~\cite{Leggett_2001,Gati_Oberthaler_2007}).
  Here $|T_N|$ is the tunneling energy, decaying with $L_N$ roughly as $e^{-2 A(L_N/2)}$, where $A(r)$ is the Agmon distance (see \eqref{eq:estim tunnel}), and $\lambda N^{-1}$ is the coupling constant for inter-particle interactions. 
  We prove rigorously in this paper the reduced particle number fluctuations in the limit $N \to \infty$, $|T_N| \ll N^{-1}$, $\lambda$ fixed,
  that is,
  from the Fock regime up to the middle of the Josephson regime.}
\end{table}

Instead of investigating the many-body Hamiltonian $H_N$, most studies in the physics literature
deal with the simpler two-mode Bose-Hubbard  Hamiltonian, which is obtained by restricting $H_N$ to the subspace $\gH_{\rm BH} \subset \gH ^N$ spanned by the $N+1$ Fock states
\begin{equation*}
\ket{n,N-n} = \frac{1}{\sqrt{n! (N-n)!}} (a_{-}^\ast)^n (a_+^\ast)^{N-n} \ket{0} =  
C_n \uHm^{\otimes n} \otimes_\mathrm{sym} \uHp^{\otimes (N-n)} 
\end{equation*}
where $n=0,\cdots, N$, $\ket{0}\in \gH ^N$ denotes the vacuum state,
$a_{-} = a ( \uHm)$ and $a_{+} = a ( \uHp)$ are the annihilation operators in the states
$\uHm$ and $\uHp$ minimizing the Hartree functionals in the left and right wells, and $C_n$ is a normalization factor\footnote{
  A few (mainly numerical) works in the physics literature go beyond the two-mode approximation.
  For instance, perturbative and exact diagonalization approaches have been used in Ref.~\cite{GarDouCar-11}   
  to include also the first excited state in each well.
  }.
The energy of a general state in $\gH_{\rm BH}$, 
\begin{equation} \label{eq-general_states_Bose_Hubbard_subspace}
  \ket{\Psi}= \sum_{n=0}^N c_n \ket{n,N-n}\;,
\end{equation}
can be evaluated as we now explain. First, since $H_N$ is invariant under the exchange of the two wells (thanks to the symmetry of $V_N$),
its non-degenerate ground state is invariant under the exchange of $\uHp$ and $\uHm$, i.e., it satisfies
$c_{N-n} = c_n$ for any $n=0,\cdots,N$.

By using (\ref{eq-energy_with_density_matrices}) and the fact that the one-body density matrix $\gamma_\Psi^{(1)}$ has a two-dimensional range spanned by $\uHm$ and $\uHp$, 
the kinetic and potential energies of the state \eqref{eq-general_states_Bose_Hubbard_subspace} reads 
\begin{equation*}
  E_\Psi^{\rm (kin+pot)} = e_+ \langle \uHp, \gamma_\Psi^{(1)} \uHp \rangle
  + e_- \langle \uHm , \gamma_\Psi^{(1)} \uHm \rangle + 2 T_N \re \langle \uHp, \gamma_\Psi^{(1)} \uHm \rangle
\end{equation*}
with $e_\pm = \langle \uHpm , ( -\Delta + V_N ) \uHpm \rangle$.
From  (\ref{eq:other def DM}) and  the identity $c_n=c_{N-n}$
one concludes that  
\begin{equation} \label{eq-diagonal_elements_one-body_density_matrix}
\langle \uHpm, \gamma_\Psi^{(1)} \uHpm \rangle= \bra{\Psi} a_\pm^\ast a_\pm \ket{\Psi}= \frac{N}{2}\;.
\end{equation}
Calculating on the other hand
$$
\langle \uHp, \gamma_\Psi^{(1)} \uHm \rangle= \bra{\Psi} a_+^\ast a_{-} \ket{\Psi},
$$
one gets
\begin{equation} \label{eq-energy_kin_pot_general_Psi}
  E_\Psi^{\rm (kin+pot)} 
  =  E_{\rm loc}^{\rm (kin+pot)}  + 2 T_N \sum_{n=0}^N \sqrt{n ( N-n+1 )} \re \{ \overline{c_n} c_{n-1} \}
\end{equation}
with $E_{\rm loc}^{\rm (kin+pot)} = ( e_+ + e_- ) N /2$.

By arguing as in Section~\ref{sec:heuristic}, discarding all matrix elements of the interaction $w$ save for those
between $\uHpm^{\otimes 2}$ and  $\uHpm^{\otimes 2}$, one obtains from \eqref{eq-interaction_energy_general_Psi}
the interaction energy
\begin{equation}  \label{eq-energy_int_general_Psi}
 E_\Psi^{\rm (int)} 
=
   U_N
     \left(
     \frac{N (N-2)}{4} + \left\langle \left(\Delta \N_{-}\right)^2\right\rangle_\Psi
     \right)
     ,\quad
U_N=\frac{\lambda \langle \uH^{\otimes 2} ,  w \, \uH^{\otimes 2} \rangle}{N-1}      
\end{equation}  
with
$$
\left\langle \left(\Delta \N_{\pm}\right)^2\right\rangle_\Psi = \left\langle  \Psi \left| \left(\N_{\pm} - N/2 \right) ^2 \right| \Psi \right\rangle
$$
the square fluctuation of $\N_{\pm} = a_\pm^\ast a_\pm$. Thus, the total energy of a state of the form~\eqref{eq-general_states_Bose_Hubbard_subspace} is given by
$$ E_\Psi = E_\Psi^{\rm (kin+pot)} + E_\Psi^{\rm (int)} =  E_{\rm loc} + 2T_N \sum_{n=0}^N \sqrt{n ( N-n+1 )} \re \{ \overline{c_n} c_{n-1} \} + U_N \left\langle \left(\Delta \N_{-}\right)^2\right\rangle_\Psi\;,$$
where $ E_{\rm loc}$ is the energy of the localized state, see Section~\ref{sec:heuristic}.
From these considerations, we moreover
deduce that the problem of finding the  state ${\Psi}$ in the subspace $\gH_{\rm BH}$  with minimal energy
$E_\Psi$
is equivalent to determining the ground state of the following two-mode Bose-Hubbard Hamiltonian acting on $\gH_{\rm BH}$
\begin{equation*}
  H_{\rm BH}  = 
  e_+ \N_+ + e_- \N_{-} + T_N ( a_{-}^\ast a_+ + a_+^\ast a_{-} ) +
  \frac{U_N}{2} \left( a_+^\ast a_+^\ast a_{+} a_{+} + a_{-}^\ast a_{-}^\ast a_{-} a_{-} \right)\;.
 \end{equation*}
Note that $\N_{-}+\N_+ = N \1$ (here $\1$ denotes the identity operator)
since we are neglecting 
all Bogoliubov excitations outside the one-particle subspace spanned by $\uHm$ and $\uHp$.

To obtain the transition values of Table~\ref{tab1}, consider a trial state  $\Psi$ given by 
\begin{equation} \label{eq-Gaussian_state}
    c_n = \frac{1}{{\mathcal{Z}}_N} e^{-(n-N/2) ^2 / \sigma_N^2}
\end{equation}
with $\sigma_N$ setting the scale of the particle number fluctuations and ${\mathcal{Z}}_N$ a normalization constant. Assuming squeezed particle number fluctuations, $1 \ll \sigma_N \ll N^{1/2}$, simple calculations and estimates give, to leading order in $N$,
\begin{equation}\label{eq:ener sigma}
E_\Psi^{\rm (kin+pot)} + E_\Psi^{\rm (int)} \approx E_{\rm loc} + T_N N \left(1-\frac{1}{2\sigma_N ^2}\right) + U_N \frac{\sigma_N^2}{4}.
\end{equation}
On the other hand, from the computations of Section~\ref{sec:heuristic} we have
$$ E_{\rm dloc} \approx E_{\rm loc} + T_N N + \frac{\lambda}{4} \langle \uH^{\otimes 2} ,  w \, \uH^{\otimes 2} \rangle\;.$$
To minimize~\eqref{eq:ener sigma} in $\sigma_N$, we pick (recall that $T_N < 0$)
$$  
\sigma_N = \left( \frac{2 |T_N| N}{U_N} \right)^{1/4}  \ll N^{1/2} \;\;\mbox{ if }\;\; |T_N| \ll \lambda 
$$
and obtain, for two fixed numbers $a_1,a_2 >0$
$$ E_\Psi - E_{\rm dloc} = a_1 (\lambda |T_N| )^{1/2} - a_2 \lambda < 0 \;\;\mbox{ if }\;\; |T_N| \ll \lambda$$
and the other way around if $|T_N| \gg \lambda$.
One can similarly
show that the state $\Psi$ has a smaller energy than  $\Psiloc$ when
$|T_N| \gg \lambda N^{-2}$ and the other way around if $|T_N| \ll \lambda N^{-2}$.
This
leads to the transitional values of Table~\ref{tab1}. A more precise guess (spin-squeezed state) can be made for the ground state in the Josephson regime, see Appendix~\ref{app:BH}. 

Further note that, both the trial state above and the spin-squeezed state discussed in Appendix~\ref{app:BH} have tunneling factors close to that of the delocalized state (as indicated in~Table~\ref{tab1}). This implies that their one-body density matrix are close to that of the delocalized state to leading order in $N$, and we expect the same for the true ground state.

\subsection{Organization of the proofs}

The rest of the paper is organized as follows:
\begin{itemize}
 \item Section~\ref{sec:hartree} contains useful estimates on the Hartree minimizers to be used throughout the paper, in particular sharp decay estimates.
 \item Section~\ref{sec:bog} recalls those elements of Bogoliubov's theory we shall need in the proofs of our main results, following mainly~\cite{GreSei-13,LewNamSerSol-13}.
 \item Section~\ref{sec:up bound} is concerned with the construction of a trial state having energy~\eqref{eq:ener lim loc}, thus providing the desired  upper bound on the ground state energy.
 \item In Section~\ref{sec:low bound loc} we present the core of the proof of our main theorem, namely the energy lower bound and the estimates on particle number fluctuations that follow from it.
 \item Appendix~\ref{sec:app} contains, for the convenience of the reader, elements of proofs for the results on Bogoliubov's theory we use in the paper. We make no claim of originality here and refer to~\cite{GreSei-13,LewNamSerSol-13,Seiringer-11} for full details.
 \item We present in Appendix~\ref{sec:app 2} the proof of a lemma used in Section~\ref{sec:low bound loc} about the optimal way of distributing particles between the two wells.
 \item Finally, some details on squeezed states are given in Appendix~\ref{app:BH}.
\end{itemize}

\section{Bounds on the minimizers of the mean-field functionals}\label{sec:hartree}
\subsection{Hartree minimizer in a single well}\label{sec:single well}

The Hartree functional that we shall study is 
\begin{equation}\label{eq:Hartree func 2}
\EH[u] = \int_{\R ^d} \Bigl( |\nabla u | ^2 + V |u| ^2 
+ \frac{\lambda}{2} |u| ^2 w \ast |u| ^2 \Bigr) \;, 
\end{equation}
where $V(x)$ is the single well potential~\eqref{eq:hom pot}.
 We shall denote the minimizer of $\EH[u]$ by
 $\uH$. We will later apply the results of this section 
to $\uHm$ and $\uHp$, that are just translates of $\uH$.

Given Assumption~\ref{assumptions-w} on the interaction $w$, 
the existence and uniqueness of the minimizer of  the Hartree functional \eqref{eq:Hartree func 2}
under the unit mass constraint is an easy exercise.
In fact, since $w$ is assumed to be of positive type, the functional \eqref{eq:Hartree func 2} is strictly convex in $|u| ^2$.
It follows from the identity $| \nabla u |^2 = ( \nabla | u |)^2 + |u|^2 ( \nabla \varphi)^2 $ with $u = |u| e^{\I \varphi}$ that 
$\EH[u] \geq \EH [ |u|]$, with equality if and only if $\varphi$ is constant.
Thus the minimizer $\uH$ is unique up to a constant phase factor, which can be chosen such that 
$\uH >0$. One can also show that $\uH$ is radial (see e.g.~\cite{LieSeiYng-00} for details on these claims).
By exploiting the elliptic character of the variational equation satisfied by $\uH$ (see~\eqref{eq:var eq Hartree}), 
one shows in the usual way that $\uH$ is a smooth function.

\begin{proposition}[\textbf{Decay estimates on the Hartree minimizer}]\label{pro:Hartree}\mbox{}\\
Let $V(x) = |x| ^s$ with $s \geq 2$,  $A(r)$ be the Agmon distance~\eqref{eq:Agmon}, and 
\begin{equation}\label{eq:alpha}
\alpha= \displaystyle \begin{cases}
\frac{2d-2+s}{4s} & \mbox{ if } s>2 \\[1ex]
\frac{2d-2+s}{4s} - \frac{\mu}{2s} & \mbox{ if } s=2\;,         
        \end{cases}
\end{equation}
with $\mu$ the chemical potential in~\eqref{eq:var eq Hartree}. 

For any $0<\eps<1$, and any $|x|\geq R_0$ large enough, $\uH$ satisfies the pointwise estimates
\begin{equation} \label{eq:decay Hartree}
c_\eps \frac{e^{-A(|x|)}}{V(x)^{\alpha+\eps}}  \leq \uH (x) \leq C_{\eps} \frac{e^{-A(|x|)}}{V(x)^{\alpha-\eps}}
\;,
\end{equation} 
where  $c_{\eps}>0$ and $C_{\eps}>0$ are two constants depending only on $\varepsilon$.
\end{proposition}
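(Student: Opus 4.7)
The plan is to exploit the Euler--Lagrange equation for the Hartree minimizer,
\[
\bigl(-\Delta + V + \lambda \, w \ast |\uH|^2 - \muH \bigr) \uH = 0,
\]
and reduce the sharp pointwise estimate to a question on the decay of a positive solution of a \emph{linear} Schr\"odinger equation. The key point is that the self-consistent potential $W_{\mathrm{nl}} := \lambda \, w \ast |\uH|^2$, because $w$ has compact support and $\uH$ itself will already have some exponential decay, falls off super-exponentially and is therefore negligible against $V(x) = |x|^s$ far from the origin. A preliminary Agmon / Persson estimate (e.g.\ via a positive commutator with the weight $e^{(1-\eta)A(|x|)}$, using that $V - \muH \geq (1-\eta)V$ outside a large ball) yields $\uH(x) \leq C_\eta \, e^{-(1-\eta)A(|x|)}$ for any $\eta > 0$. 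Since discarding the positive term $\lambda W_{\mathrm{nl}}\uH$ only enlarges the upper bound, this step is insensitive to the nonlinearity; in turn it gives $W_{\mathrm{nl}}(x) = O(e^{-2(1-\eta)A(|x|)})$, decaying faster than any inverse power of $V(x)$.

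The sharp decay is then obtained by constructing radial super- and sub-solutions of the linear operator $-\Delta + V - \muH$ of the form $\psi_\pm(x) = e^{-A(|x|)} V(x)^{-\alpha \pm \eps}$. Using $A'(r) = \sqrt{V(r)}$ and the radial expression $\Delta f = f''(r) + \frac{d-1}{r}f'(r)$, a direct computation gives, schematically,
\[
\frac{(-\Delta + V - \muH)\psi_\mp}{\psi_\mp} = \pm 4 \eps \, A''(r) \bigl(1+o(1)\bigr) + \Bigl(4\alpha \, A''(r) - \tfrac{d-1}{r} A'(r) - A''(r) + \muH\Bigr) + o\bigl(A''(r)\bigr).
\]
The choice of $\alpha$ in \eqref{eq:alpha} is precisely what cancels the middle bracket to leading order: for $V = r^s$ one has $A''(r) = \tfrac{s}{2} r^{s/2-1}$ and $\tfrac{d-1}{r}A'(r) = (d-1) r^{s/2-1}$, so when $s > 2$ the constant $\muH$ is subleading and $\alpha = (2d-2+s)/(4s)$ matches the polynomial orders, while when $s = 2$ the three quantities $A''$, $A'/r$ and $\muH$ are all of order $1$ and one needs the extra $-\muH/(2s)$ correction. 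Consequently, for $|x| \geq R_\eps$, $(-\Delta + V - \muH) \psi_+ \geq c\eps \, A'' \, \psi_+ > 0$ and $(-\Delta + V - \muH)\psi_- \leq -c \eps \, A'' \, \psi_- < 0$.

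I then close the argument by the weak maximum principle. For the upper bound, pick $C_\eps$ so that $C_\eps \psi_+ \geq \uH$ on $\{|x| = R_\eps\}$; using $\lambda W_{\mathrm{nl}} \uH \geq 0$,
\[
(-\Delta + V - \muH)(C_\eps \psi_+ - \uH) \geq c\eps \, A''\, C_\eps \psi_+ - \lambda W_{\mathrm{nl}} \uH \geq 0 \quad \text{for } |x| \geq R_\eps,
\]
and since $V - \muH > 0$ outside a large ball and $C_\eps \psi_+ - \uH \to 0$ at infinity, the weak maximum principle forces $\uH \leq C_\eps \psi_+$. For the lower bound one reverses the roles: the strict negativity $(-\Delta + V - \muH) \psi_- \leq -c \eps A'' \psi_-$ comfortably absorbs the super-exponentially small perturbation $\lambda W_{\mathrm{nl}} \psi_-$, so $\uH - c_\eps \psi_-$ is a supersolution of the full equation $(-\Delta + V + \lambda W_{\mathrm{nl}} - \muH)\cdot = 0$ and the same maximum principle delivers $\uH \geq c_\eps \psi_-$.

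The main obstacle is the careful bookkeeping in the asymptotic expansion of $(-\Delta + V - \muH)\psi_\pm / \psi_\pm$: one must verify that the $\eps$-shift produces a definite sign that dominates \emph{all} the subleading junk, namely the centrifugal piece $\tfrac{(d-1)(d-3)}{4r^2}$ coming from the radial Laplacian, the derivatives of the $V^{-\alpha \mp \eps}$ prefactor, and the tiny $W_{\mathrm{nl}}$. The borderline case $s=2$ is the delicate one, since the constant $\muH$ then enters the leading balance and forces the $\mu$-dependent refinement of $\alpha$ in \eqref{eq:alpha}; one must also check that the preliminary rough decay is strong enough to make this balance consistent.
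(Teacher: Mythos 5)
Your proposal is correct and follows essentially the same route as the paper: radial super/sub-solutions of the form $e^{-A(|x|)}\,V(x)^{-\alpha\pm\eps}$, with $\alpha$ fixed by cancelling the subleading polynomial balance (the $\mu$-dependent correction entering only for $s=2$), the mean-field potential $\lambda\, w\ast|\uH|^2$ being harmless since it decays at infinity, and a maximum-principle comparison outside a large ball to conclude (the paper phrases this via the location of a global max/min rather than the weak maximum principle, which is equivalent). Only cosmetic slips: in your schematic display the $\pm/\mp$ pairing is flipped (it is $\psi_+$ that is the supersolution, as you correctly use afterwards), and in the upper-bound step the nonlinear term actually enters with the favorable sign $+\lambda\,(w\ast|\uH|^2)\,\uH$, so no absorption argument is needed there.
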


In the special case of a harmonic trap $V(x)=|x|^2$, 
this shows that $\uH(x)$ decays like a Gaussian when $|x| \to \infty$: 
\begin{equation}\label{eq:decay Hartree prime}
\uH (x) \sim C \exp\left( - \onehalf |x|^{2} \right)
\end{equation}
up to some power-law corrections.

\begin{proof}
We set, for some number $\beta \in \R$,
\begin{equation}\label{eq:sub solution}
f (x) = \exp\left( -  A (|x|) \right) V(x) ^{-\beta/s} = \exp\left( - \left( 1+\frac{s}{2} \right) ^{-1} |x| ^{1+s/2}\right) |x| ^{-\beta}\;. 
\end{equation}
Then, setting $r = |x|$,
\begin{equation}\label{eq:Delta f}
\Delta f (x) = \left[ r^s + \left(2\beta - \frac{s}{2} - d + 1 \right) r ^{s/2 -1} + \left( \beta ^2 + 2 \beta - d \right) r ^{-2} \right] f (x)\;.
\end{equation}
Since $|\uH| ^2$ decays at infinity and $w$ has compact support, $w* |\uH| ^2$ also decays at infinity. We deduce that, for $r$ large enough, 
\begin{equation}\label{eq:sup sub equation}
\left(-\Delta f + V + \lambda w* |\uH| ^2- \mu\right) f  \geq 0 \quad (\mbox{respectively } \leq 0)
\end{equation}
if one picks $\beta = s\alpha -\eps$ (respectively $\beta = s\alpha + \eps$). The result is obtained by using the above functions as super/sub-solutions for the variational equation~\eqref{eq:Hartree eq} and a maximum principle argument. 

Pick first $\beta = s\alpha - \eps$, define $R_-$ to be some radius large enough for~\eqref{eq:sup sub equation} and 
\begin{equation}\label{eq:ball condition}
 r ^s > \mu - \lambda w*|\uH| ^2 (x) 
\end{equation}
to hold whenever $r\geq R_-$. Let $f_-$ be equal to $f$ outside $B(0,R_-)$ and smoothly extended to a function bounded away from $0$ inside $B(0,R_-)$. Further set  
$$ C_\eps = \max_{|x|< R_{-}} \Bigl\{ \frac{\uH (x)}{f_- (x)} \Bigr\} > 0 $$
and 
$$ g_- = \uH - C_\eps f_- \;.$$
The latter being a smooth function, decaying at infinity, it must reach a global maximum. We have the following alternative:
\begin{enumerate}
\item Either $g_- $ reaches its maximum at a point $x_0$ inside $B(0,R_-)$, then by construction 
$$\uH (x) - C_\eps f_- (x)  \leq \uH (x_0) - C_\eps f_{-}(x_0) \leq 0$$
for all $x$.

\item Or $g_- $ reaches its maximum  at some point $x_0$ outside $B(0,R_{-})$. Then, according to~\eqref{eq:sup sub equation} and the variational equation~\eqref{eq:var eq Hartree}, we have
$$
\left(-\Delta + V +\lambda w \ast | \uH|^2 - \mu \right) g_- (x_0) = - C_\eps \left(-\Delta + V  +\lambda w \ast | \uH|^2 - \mu \right) 
f_{-} (x_0) 
\leq 0
$$
But $\Delta g_{-} (x_0)\leq 0$ because $x_0$ is a maximum of $g_{-}$, thus 
$$ \left( |x_0|^s +\lambda w \ast | \uH|^2 (x_0) - \mu \right) g_- (x_0) \leq 0$$
and this implies $g_{-} (x_0) \leq 0$ upon inserting~\eqref{eq:ball condition}. Hence   
$$\uH (x) - C_\eps f_- (x)  \leq \uH (x_0) - C_\eps f_{-}(x_0) \leq 0$$
for all $x$ again.
\end{enumerate}
In both cases one has $g_{-} (x) \leq g(x_0) \leq 0$ for all $x \in \R^d$, which yields the upper bound in~\eqref{eq:decay Hartree} because $f_- (x)= f (x)$ for  $|x|$ large enough.

The lower bound in~\eqref{eq:decay Hartree} is proven similarly, picking now $\beta = s\alpha + \eps$, defining $f_+$ similarly as before and setting
$$c_\eps = \min_{|x| < R_{+}} \left\{ \frac{\uH (x)}{f_{+}(x)}\right\}, \quad g_{+}  = \uH  - c_\eps f_{+}. $$ 
The latter function, being smooth and decaying at infinity, must reach a global minimum or else be everywhere positive. In the latter case there is nothing to prove, while in the former one can argue exactly as above, switching some signs where appropriate.
\end{proof}

The pointwise estimates~\eqref{eq:decay Hartree} yield a simple but useful corollary, namely a control of the mean-field
potential  generated by $|\uH| ^2$ via $w$,
\begin{equation}\label{eq:MF pot}
\hH := w* |\uH| ^2.
\end{equation}

\begin{lemma}[\textbf{Local control of the mean-field potential}]\label{lem:control pot}\mbox{}\\
For any $\eta >0$, there is a constant $C_{\eta }>0$ such that 
\begin{equation}\label{eq:decay pot}
 \hH \leq C_\eta |\uH| ^{2-\eta}. 
\end{equation}
\end{lemma}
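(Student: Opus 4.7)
The strategy is to combine the compact support of $w$ with the two-sided pointwise decay bounds from Proposition~\ref{pro:Hartree}. Since $\supp w \subset B(0,R_w)$, we first note that
\begin{equation*}
\hH(x) = \int_{\R^d} w(x-y) |\uH(y)|^2 \D y \le \|w\|_\infty \int_{B(x,R_w)} |\uH(y)|^2 \D y ,
\end{equation*}
so the integral is effectively localized in a ball of fixed radius $R_w$ around $x$. The question thus reduces to comparing $|\uH(y)|^2$ for $y$ close to $x$ with $|\uH(x)|^{2-\eta}$ when $|x|$ is large.

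For $|x|\ge R_0+R_w$ with $R_0$ the radius from Proposition~\ref{pro:Hartree} and for $y \in B(x,R_w)$, I would use the mean-value bound
\begin{equation*}
A(|y|) \ge A(|x|-R_w) = A(|x|) - \int_{|x|-R_w}^{|x|} \sqrt{V(r)}\,\D r \ge A(|x|) - R_w\sqrt{V(|x|)} ,
\end{equation*}
together with the upper bound in~\eqref{eq:decay Hartree} to get, for any small $\eps>0$,
\begin{equation*}
\int_{B(x,R_w)} |\uH(y)|^2 \D y \;\le\; C'_\eps \,|B(0,R_w)|\, \frac{e^{-2A(|x|)}\,e^{2R_w\sqrt{V(|x|)}}}{V(|x|-R_w)^{2(\alpha-\eps)}} .
\end{equation*}
Applying next the lower bound in~\eqref{eq:decay Hartree} to $|\uH(x)|^{2-\eta}$, the ratio $\hH(x)/|\uH(x)|^{2-\eta}$ is controlled by
\begin{equation*}
C\, e^{-\eta A(|x|)}\, e^{2R_w\sqrt{V(|x|)}}\, \frac{V(x)^{(\alpha+\eps)(2-\eta)}}{V(|x|-R_w)^{2(\alpha-\eps)}} .
\end{equation*}
Since $V(x)=|x|^s$ with $s\ge 2$, we have $A(r)=(1+s/2)^{-1}r^{1+s/2}$ and $\sqrt{V(r)}=r^{s/2}$, so $A(r)/\sqrt{V(r)}\to\infty$. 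Thus $\eta A(|x|)$ strictly dominates $2R_w\sqrt{V(|x|)}$ plus any polynomial correction, and the ratio actually tends to $0$ as $|x|\to\infty$.

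It remains to handle $|x|\le R_0+R_w$: since $\uH$ is smooth and strictly positive (it satisfies the elliptic equation~\eqref{eq:var eq Hartree} and can be chosen positive), both $\hH$ and $|\uH|^{2-\eta}$ are continuous and bounded from below on this compact region, so the ratio is bounded there as well. Combining both regimes yields the claimed inequality for some $C_\eta>0$. The only delicate point, which is where the use of the Agmon distance is essential, is to check that the exponential penalty $e^{2R_w\sqrt{V(|x|)}}$ produced by moving $A(|y|)$ back to $A(|x|)$ is indeed absorbed by the gain $e^{-\eta A(|x|)}$; this is exactly the content of hypothesis $A(r)/\sqrt{V(r)}\to\infty$ (automatic in the power-law case), so no additional work is needed.
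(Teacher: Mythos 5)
Your proof is correct and follows essentially the same route as the paper's: bound $\hH(x)$ by the supremum of $|\uH|^2$ over the ball $B(x,R_w)$, compare it with $|\uH(x)|^{2-\eta}$ via the Agmon-type pointwise estimates of Proposition~\ref{pro:Hartree}, and absorb the error $e^{2R_w\sqrt{V(|x|)}}$ together with the polynomial factors using $A(r)/\sqrt{V(r)}\to\infty$. The paper merely phrases this with the cruder all-$x$ bound~\eqref{eq:decay Hartreebis} and a tailored choice of $\eps$ instead of your sharper bounds plus a separate compact-region argument, which is only a cosmetic difference.
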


Note that if $w$ was a contact potential $w=\delta_0$, \eqref{eq:decay pot} would be an equality with $\eta = 0$, $C_\eta = 1$. What the lemma says is that the decay of the mean-field potential is not much worse than in the case of purely local interactions.

\begin{proof}
Since $w$ is bounded with a compact support included in the ball $B(0,R_w)$, one has
$$ 
\hH (x) = \int_{\R ^d} w(y) |\uH (x-y)| ^2 \D y \leq \| w \|_\infty | B(0,R_w)| \sup_{y\in B(0,R_w)} |\uH (x-y)| ^2 
\;.
$$
Let $\eps \in (0,1)$ be such that
$$\frac{1+\eps}{\left(1-\eps\right)^2} = \left(1 - \frac{\eta}{2}\right)^{-1}.$$
We can use the following estimate on the Hartree minimizer, which is less precise than~\eqref{eq:decay Hartree}: 
\begin{equation} \label{eq:decay Hartreebis}
c_{\eps} \exp\left( - (1 + \eps) A ( |x|)  \right) \leq \uH (x) \leq C_{\eps} 
\exp\left( -  (1 - \eps) A(|x|) \right)\quad  \quad x\in \R ^d\;,
\end{equation}
where  $c_{\eps}>0$ and $C_{\eps}>0$ are two constants depending on $\varepsilon$.
Therefore, we have for any $x \in \R^d$, $|x|\geq R_w$,
\begin{eqnarray} \label{eq-proof_lem:control pot} 
\nonumber
\sup_{y\in B(0,R_w)} \frac{|\uH (x-y)|}{|\uH (x)| ^{1-\eta/2}} 
& \leq & 
C_{\eps}' \sup_{y\in B(0,R_w)} \exp\Big( - (1-\eps)  A(|x-y|)+ (1+\eps)\big(1-\frac{\eta}{2}\big) A ( |x|) \Big)
\\
& \leq &
C_{\eps}' \exp\Big( - (1-\eps) \big( A(|x|-R_w)- (1-\eps) A ( |x|) \big) \Big)
\end{eqnarray}
for some $C_\eps'>0$. It is easy to check that the exponential in the second line of \eqref{eq-proof_lem:control pot} is bounded in $x$, 
which yields the desired result. 
\end{proof}

A further consequence of Proposition~\ref{pro:Hartree} is the estimate on the tunneling energy announced in~\eqref{eq:estim tunnel}. Let $\uHm(x) = \uH (x+\mathbf{x}_N)$ and $\uHp (x-\mathbf{x}_N)$ be the Hartree minimizers corresponding to the left and right 
trapping potentials $V_N^-$ and $V_N^+$, respectively.
%
  
\begin{proposition}[\textbf{Bounds on tunneling terms}]\label{cor:tunneling}\mbox{}\\
For any $\eps >0$, one can find some positive constants $c_\epsilon$, $C_\varepsilon$, and $L_\eps >0$ such that
for any $L_N > L_\eps$, 
\begin{equation}\label{eq:tunn 1}
c_\eps \exp\left( - 2 (1 + \eps) A \Big( \halfLN \Big) \right) \leq \int_{\R ^d } \uHm \uHp \leq 
C_\eps \exp\left( - 2 (1 - \eps) A \Big( \halfLN \Big) \right).
\end{equation}
Moreover, the tunneling energy defined in~\eqref{eq:tunneling term} is negative for $N$ large enough and satisfies 
\begin{equation}\label{eq:tunn 2}
 c_\eps \exp\left( -  2 (1 + \eps) A \Big( \halfLN \Big) \right)  \leq | T_N | \leq 
C_\eps \exp\left( -  2 (1 - \eps) A \Big( \halfLN \Big) \right) \;.
\end{equation}
\end{proposition}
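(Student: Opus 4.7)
The strategy is to deduce both estimates in Proposition~\ref{cor:tunneling} from the sharp pointwise bounds of Proposition~\ref{pro:Hartree} combined with convexity of the Agmon distance. Since $A'(r)=\sqrt{V(r)}$ is non-decreasing for large $r$, the function $A$ is convex and increasing, and together with the triangle inequality $|x+\mathbf{x}_N|+|x-\mathbf{x}_N|\ge L_N$ this yields the key pointwise bound
\[
A(|x+\mathbf{x}_N|)+A(|x-\mathbf{x}_N|)\ge 2 A\!\left(\halfLN\right),
\]
with equality only on the segment $[-\mathbf{x}_N,\mathbf{x}_N]$ and a quadratic-in-$x$ gain off this segment. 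For the upper bound in~\eqref{eq:tunn 1}, plugging the pointwise upper bound of Proposition~\ref{pro:Hartree} into $\uHm(x)\uHp(x)$ and integrating, the excess in the Agmon exponent provides Gaussian-type concentration around the segment, so the polynomial factors from $V^{-\alpha+\eps}$ and from the volume are easily absorbed into an arbitrarily small loss in the exponential rate. For the lower bound, I would restrict the integral to a small box $B_\delta$ centered at the origin with $\delta \sim V(\halfLN)^{-1/2} \sim (\halfLN)^{-s/2}$, chosen precisely so that $A(|x\pm\mathbf{x}_N|)-A(\halfLN)=O(1)$ uniformly on $B_\delta$. Proposition~\ref{pro:Hartree} then yields $\int \uHm \uHp \ge c\, \delta^d\, V(\halfLN)^{-2(\alpha+\eps)} e^{-2A(\halfLN)}$, and the polynomial prefactors are absorbed into a harmless $e^{-2\eps A(\halfLN)}$ loss because $A(\halfLN)\to\infty$.

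For~\eqref{eq:tunn 2}, I would use the identity $T_N=\int \uHm\Vtp\uHp$ established in the excerpt, symmetrize it with $\int \uHp\Vtm\uHm$, and exploit $2V_N-V_N^+-V_N^-=-|V_N^+-V_N^-|$ to obtain
\[
T_N=\frac{1}{2}\int \uHm\uHp\Bigl(2\mu-|V_N^+-V_N^-|-\lambda(w\ast|\uHp|^2+w\ast|\uHm|^2)\Bigr).
\]
The upper bound $|T_N|\le C_\eps e^{-2(1-\eps)A(\halfLN)}$ follows by bounding the bracket in absolute value by $C(1+|x|^s+L_N^s)$ and invoking the upper bound of~\eqref{eq:tunn 1} enhanced by this polynomial weight (which is harmlessly absorbed into $\eps$). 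For the sign and matching lower bound, note that as soon as $|V_N^+-V_N^-|(x)\ge 4\mu + 4\lambda\|w\ast|\uH|^2\|_\infty$, the bracket is negative with absolute value at least $\tfrac12|V_N^+-V_N^-|$. By the linear expansion of $V_N^+-V_N^-$ at the origin, such points lie in any box of side $\delta$ centered at $(K_0(\halfLN)^{-(s-1)},0,\ldots,0)$ with $K_0$ large enough. Placing $B_\delta$ at this translated location (and possibly shrinking $\delta$ slightly when $s=2$) and applying the localization argument from the first paragraph then produces $|T_N|\ge c_\eps e^{-2(1+\eps) A(\halfLN)}$.

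The main obstacle is the careful balance between the polynomial corrections and the Agmon exponential: since $A'(\halfLN)=\sqrt{V(\halfLN)}$ diverges with $L_N$, any displacement of order one changes the exponent by a large amount, forcing all localization boxes to shrink like $V(\halfLN)^{-1/2}$. It is only the unbounded growth of $A(\halfLN)$ as $L_N\to\infty$ that allows the polynomial prefactors to be absorbed into an arbitrarily small $\eps$ in the Agmon exponent, recovering precisely the form of~\eqref{eq:tunn 1} and~\eqref{eq:tunn 2}. Handling the sign of $T_N$ cleanly is the most delicate point: one must locate a region where $|V_N^+-V_N^-|$ dominates the bounded positive contributions to the bracket while keeping $\uHm\uHp$ as large as $e^{-(2+\eps)A(\halfLN)}$, which requires using distinct scales along $x_1$ and the perpendicular directions.
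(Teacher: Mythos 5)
Your treatment of \eqref{eq:tunn 1} and of the upper bound in \eqref{eq:tunn 2} is essentially the paper's argument (Gaussian-type concentration of $\uHm\uHp$ around the mid-plane, polynomial prefactors absorbed into an $\eps$-loss in the Agmon exponent), and your symmetrized identity $2T_N=\int \uHm\uHp\bigl(2\mu-|V_N^+-V_N^-|-\lambda(w\ast|\uHp|^2+w\ast|\uHm|^2)\bigr)$ is correct and a nice variant of the paper's splitting $T_N=\int\uHm(V_N-V_N^+)\uHp+\int\uHm(\mu-\lambda w\ast|\uHp|^2)\uHp$. The gap is in the sign and the lower bound on $|T_N|$. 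Your bracket is \emph{positive} (of size up to $2\mu$) on the slab $\{|V_N^+-V_N^-|\lesssim \mu\}$ around the hyperplane $x^1=0$, which has width $\sim L_N^{-(s-1)}$ in $x^1$ and is exactly where $\uHm\uHp$ attains its maximum. Exhibiting a box where the integrand is negative and bounding its contribution from below does not prove $T_N<0$: you must show that this negative contribution \emph{dominates} the positive contribution of the slab, and your argument never makes that comparison. Worse, with your box of side $\delta\sim V(\halfLN)^{-1/2}$ the comparison actually fails: the negative contribution you capture is of order $K_0\,\delta^d\,V(\halfLN)^{-2\alpha-O(\eps)}e^{-2A(\halfLN)}$, while the slab's positive part is of order $\mu\,L_N^{-(s-1)}L_N^{(d-1)(1/2-s/4)}V(\halfLN)^{-2\alpha+O(\eps)}e^{-2A(\halfLN)}$; the ratio carries the polynomial factor $L_N^{\,s(3-d)/4-(d+1)/2}$, which is $\leq 1$ for $d=1,\ s=2$ and tends to $0$ for $d=3$ (and for $d=2$ unless $s\geq 6$), so the constant $K_0$ and the $V^{\pm\eps}$ losses from Proposition~\ref{pro:Hartree} cannot be beaten. (There is also a scale error: for $s>2$ the slab width $L_N^{-(s-1)}$ is much smaller than $\delta\sim L_N^{-s/2}$, so a box of side $\delta$ centered at $x^1=K_0(\halfLN)^{-(s-1)}$ straddles the slab and contains points where the bracket is positive; the delicate case is not only $s=2$.)

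The repair, which is what the paper does, is to exploit that $|V_N^+-V_N^-|$ grows \emph{linearly} in $|x^1|$ (slope $\sim L_N^{s-1}$) away from the slab, while the Agmon exponent only deteriorates quadratically on the much larger scale $|x^1|\sim L_N^{1/2-s/4}$. Integrating the negative part over this natural region (rather than a box of side $\delta$) yields a contribution of order $L_N^{s/2}$ times larger (in the paper's splitting: the $V_N-V_N^+$ term is of size $L_N^{1+s/2-2\alpha\mp\eps}e^{-2A(\halfLN)}$ versus $L_N^{1-2\alpha\pm\eps}e^{-2A(\halfLN)}$ for the $\mu-\lambda w\ast|\uHp|^2$ term), and it is precisely this diverging polynomial margin that absorbs the $\eps$-losses of the pointwise bounds and forces $T_N<0$ together with the lower bound in \eqref{eq:tunn 2}. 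Your write-up correctly identifies this as the delicate point but does not carry out the two-region comparison, so as it stands the negativity of $T_N$ and the lower bound on $|T_N|$ are not established.
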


In fact, since we know the rate of decay of the Hartree minimizer down to polynomial corrections, we could reach a similar precision in the estimates~\eqref{eq:tunn 1} and~\eqref{eq:tunn 2}. We do not state this explicitly for conciseness. We will, 
however, need this information to prove that $T_N <0$ and get the lower bound on $|T_N|$ in \eqref{eq:tunn 2}.


\begin{proof}
We use the bounds~\eqref{eq:decay Hartree}, suitably translated by $\pm \mathbf{x}_N$. For the first estimate, the polynomial correction to the rate of decay obtained in~\eqref{eq:decay Hartree} is not relevant and one can just calculate integrals of the form
\begin{equation}\label{eq:integrals}
I_a = \int_{\R^d} e^{ -a |x-\mathbf{x}_N| ^{1+s/2} - a |x+\mathbf{x}_N| ^{1+s/2} } \D x 
\end{equation}
with $a =\left((1+s/2) ^{-1} \pm \eps \right)$, the $\pm \epsilon$ being used to absorb any additional polynomial term.

On the one hand, for $|x| \geq C L_N$ with a large enough constant $C>0$ we have, by the triangle inequality 
$$ |x-\mathbf{x}_N| ^{1+s/2} + |x+\mathbf{x}_N| ^{1+s/2} \geq 2 |x| ^{1+s/2} \left(1 - \frac{1}{2C}\right)^{1+s/2}\;. $$
Thus, provided $C$ is chosen large enough we obtain 
\begin{align*}
\int_{ \{ |x| \geq C L_N\} } e^{ -a |x-\mathbf{x}_N| ^{1+s/2} - a |x+\mathbf{x}_N| ^{1+s/2} } \D x  
&\leq \int_{\{ |x| \geq C L_N\} } e^{-2 a  \left(1 - \frac{1}{2C}\right)^{1+s/2} |x| ^{1+s/2} } \D x  \\
&\leq e^{- a  \left(1 - \frac{1}{2C}\right)^{1+s/2} C ^{1+s/2} L_N ^{1+s/2} } \int_{\{ |x| \geq C L_N\} } 
e^{- a  \left(1 - \frac{1}{2C}\right)^{1+s/2} |x| ^{1+s/2} } \D x  \\
&\leq e^{- 3  \left(1 - \eps \right) A \Big( \halfLN \Big) }
\end{align*}
which is much smaller than the precision we aim at in the desired result.

There remains to estimate the part of the integral located where $|x|\leq C L_N$. We write $x= (x_1,\ldots,x_d)$ and note that the function 
\begin{multline*}
 |x-\mathbf{x}_N| ^{1+s/2} + |x+\mathbf{x}_N| ^{1+s/2}  = \left( \left|x_1-\frac{L_N}{2}\right| ^2 + |x_2| ^2 + \ldots + |x_d| ^2  \right) ^{1/2 + s/4} 
 \\ + \left( \left|x_1+\frac{L_N}{2}\right| ^2 + |x_2| ^2 + \ldots + |x_d| ^2  \right) ^{1/2 + s/4} 
\end{multline*}
is even and convex in $x_1$. It thus takes its absolute minimum at $x_1 = 0$ and we have 
\begin{align*}
 |x-\mathbf{x}_N| ^{1+s/2} + |x+\mathbf{x}_N| ^{1+s/2}  &\geq 2 \left( \frac{L_N}{2} \right) ^{1 + s/2}  \left( 1 + \frac{4|x_2| ^2}{L_N^2} + \ldots + \frac{4|x_d| ^2}{L_N^2}  \right) ^{1/2 + s/4}\\
 &\geq 2 \left( \frac{L_N}{2} \right) ^{1 + s/2} + 2 \left( |x_2| ^{1+s/2} + \ldots + |x _d| ^{1+s/2} \right)
\end{align*}
and it follows that 
$$\int_{ \{ |x| \leq C L_N\} } e^{ -a |x-\mathbf{x}_N| ^{1+s/2} - a |x+\mathbf{x}_N| ^{1+s/2} } \D x \leq C L_N  e^{- 2 \left(1 + \eps \right) A \Big( \halfLN \Big) }\;, $$
where we separate the integrals in $x_2, \ldots,x_d$, which are all convergent. The prefactor $L_N$ comes from the integral in $x_1$ and can be absorbed in the exponential, changing slightly the value of $\eps$, to obtain the upper bound in~\eqref{eq:tunn 1}. For the lower bound we simply note that 
$$ \int_{\R^d} e^{ -a |x-\mathbf{x}_N| ^{1+s/2} - a |x+\mathbf{x}_N| ^{1+s/2} } \D x  \geq \int_{\{ |x| \leq L_N ^{\gamma} \} } 
e^{ -a |x-\mathbf{x}_N| ^{1+s/2} - a |x+\mathbf{x}_N| ^{1+s/2} } \D x $$
for any $\gamma$. In particular, taking $\gamma <1$, we have by a Taylor expansion 
$$ |x-\mathbf{x}_N| ^{1+s/2} + |x+\mathbf{x}_N| ^{1+s/2} = 2  \left( \frac{L_N}{2} \right) ^{1 + s/2} + O \left( L_N ^{\gamma + s/2} \right) \leq 2  \left( \frac{L_N}{2} \right) ^{1 + s/2} \left( 1+o(1) \right) $$
on the relevant integration domain, thus 
$$ \int_{\R^d} e^{ -a |x-\mathbf{x}_N| ^{1+s/2} - a |x+\mathbf{x}_N| ^{1+s/2} } \D x \geq e^{- 2  \left(1 - \eps \right) A \Big( \halfLN \Big) } L_N ^{d\gamma} $$
and we can again absorb the last factor $L_N ^{d\gamma}$ in the exponential, changing slightly $\eps$.

For the second estimate~\eqref{eq:tunn 2}, we use the expression~\eqref{eq:tunneling termbis}: 
$$ T_N = \int_{\R ^d } \uHm \Vtp \uHp = \int_{\R ^d} \uHm \left( V_N - V_N^+\right)  \uHp + \int_{\R ^d} \uHm \left( \mu - \lambda w * |\uHp| ^2 \right)  \uHp.$$
To get the bounds~\eqref{eq:tunn 2} one can estimate exactly as above, absorbing any polynomial growth coming from $V_N - V_N^+$ into exponential factors. To prove that $T_N <0$, a little more care is needed, and we use the full information contained in~\eqref{eq:decay Hartree}, namely that we know the rate of decay up to polynomial corrections. 

Estimating as previously, keeping track of polynomial factors, we obtain, for any $\eps >0$,
\begin{equation}\label{eq:merde au tunneling}
  \begin{array}{lcl}
    c_{\eps} L_N ^{1 - 2\alpha - \eps} \exp\left( -  2 A \Big( \halfLN \Big) \right) & \leq & \left| \int_{\R ^d} \uHm \left( \mu - \lambda w * |\uHp| ^2 \right)  \uHp \right| \\
    & & \hspace*{2cm} \leq C_{\eps} L_N ^{1-2\alpha + \eps} \exp\left( -  2 A \Big( \halfLN \Big) \right) 
  \end{array}
\end{equation}
where $\alpha$ is defined in~\eqref{eq:alpha}. On the other hand, since $V_N -V_N ^+$ is negative by definition we have for any $\gamma$
$$ \int_{\R ^d} \uHm \left( V_N - V_N^+\right)  \uHp \leq \int_{ \{ |x| \leq L_N ^{\gamma}\} } \uHm \left( V_N - V_N^+\right)  \uHp.$$
Taking $\gamma <1$ we have, on the latter integration domain,
$$ V_N - V_N^+  = 4 s \1_{\{x_1<0 \}} x_1 \left( \frac{L_N}{2}\right) ^{s/2 - 1} \left( 1 + o (1))\right)$$
using a Taylor expansion. Putting this together with~\eqref{eq:decay Hartree} and arguing as above we deduce that 
$$  \int_{\R ^d} \uHm \left( V_N - V_N^+\right)  \uHp \leq - C_\eps L_N ^{1+s/2 - 2 \alpha - \eps} \exp\left( -  2 A \Big( \halfLN \Big) \right).$$
Comparing with~\eqref{eq:merde au tunneling} we see that $T_N <0$ for $N$ large enough as we claimed and one deduces the lower bound on $|T_N|$.
\end{proof}

\begin{remark}[Bounds on the off-site interaction energies]\label{rem:bound_int_energy_term}\mbox{}\\
One can show in a similar way that the off-site terms appearing in the interaction energy of the 
localized and delocalized states in Sec.~\ref{sec:heuristic},  
$$
\langle \uHm \otimes \uHp , w\,  \uHm \otimes \uHp\rangle
\;,\; 
\langle \uHm^{\otimes 2} , w\,  \uHp^{\otimes 2} \rangle
\;,\;
\langle \uHpm^{\otimes 2} , w\,  \uHm \otimes \uHp \rangle
$$
are of the order of $\exp ( -  2 (1 - \eps) A ( L_N/2) )$
in the limit $L_N \to \infty$
(this follows immediately from \eqref{eq:tunn 1} and $\| w \|_\infty <\infty$ for the last two terms, and comes from the fact 
that $w$ has compact
support for the first term).\hfill$\diamond$
\end{remark}

\subsection{Hartree energy and minimizer in a perturbed well}\label{sec:pert well}

In the sequel, we shall be lead to consider a perturbation of the previous Hartree functional. This comes about when estimating tunneling effects in energy lower bounds. Essentially, we perturb the functional by a relatively small potential in a region far away form the bottom of the well, and we prove that this  does not change much the Hartree energy and minimizer. 

Consider 
\begin{equation}\label{eq:Hartree func pert}
\EHt [u] = \int_{\R ^d} \Big( |\nabla u | ^2 + \tV (x) |u| ^2 +\frac{\lambda}{2} |u| ^2 w \ast  |u| ^2 \Big)\;. 
\end{equation}
where the perturbed potential is of the form
\begin{equation}\label{eq:pert pot}
\tV (x) = V(x) ( 1 - \delta_N (x)) 
\quad , \quad | \delta_N (x) | \leq \delta \,\1_{\left|x^1 - \halfLN \right| \leq  \ell} (x)
\end{equation}
for some constant $\delta >0$ independent of $N$. Here, $x^1$ is the first coordinate of  $x$, and we thus perturb the original potential in a strip of 
width $\ell$ centered at a distance $L_N/2$ from the origin. The choice of $\ell$ will be discussed later, the point being that if $\ell \ll L_N$ we do not perturb the problem much.

We denote by $\uHt$ and $\eHt$ respectively the unique positive minimizer and the minimum of the above functional. 
They satisfy properties very similar to those of the unperturbed analogues. In particular, one has the same estimates 
on $\uHt$ as in~\eqref{eq:decay Hartreebis} in terms of the Agmon distance associated to the {\emph{unperturbed}} potential
 $V(r)$:

\begin{lemma}[\textbf{Pointwise estimates for the perturbed minimizer}]\label{lem:pert decay}\mbox{}\\
For any $0< \eps <1$, one can find some constant $C_\eps$, $c_\eps$, and $L_\eps >0$ such that for all $L_N > L_\eps$ and 
$x\in \R ^d$,
\begin{equation}\label{eq:decay pert}
c_{\eps} \exp\left( - (1 + \eps) A ( |x|)  \right) \leq \uHt (x) \leq C_{\eps} 
\exp\left( -  (1 - \eps) A(|x|) \right)
\end{equation}
\end{lemma}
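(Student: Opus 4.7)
The plan is to replicate the supersolution/subsolution argument from the proof of Proposition~\ref{pro:Hartree}, applied to the perturbed variational equation
\[
(-\Delta + \tV + \lambda w * |\uHt|^2 - \muHt)\, \uHt = 0\;,
\]
where $\muHt$ is the Lagrange multiplier. The key structural observation is that by \eqref{eq:pert pot} one has $\tV = V$ outside the strip $\{|x^1-L_N/2|\leq \ell\}$ and $(1-\delta)V \leq \tV \leq V$ everywhere. So the perturbed potential coincides with $V$ except on a set where it differs only by a multiplicative factor bounded by $\delta$, a small fixed constant.

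I would first collect the basic properties: existence, uniqueness and positivity of $\uHt$ follow from the same convexity arguments (in $|u|^2$) used for $\uH$, since $\tV$ is still nonnegative and $w$ is of positive type. Using $(1-\delta) V \leq \tV \leq V$ together with standard energy comparisons, the chemical potential $\muHt$ is bounded uniformly in $N$, and $\uHt$ is smooth by elliptic regularity. Then, as in~\eqref{eq:sub solution}--\eqref{eq:Delta f}, I consider test functions of the form
\[
f_\pm (x) = \exp\bigl(-(1\mp\eps) A(|x|)\bigr)\, V(x)^{-\beta_\pm/s}
\]
and compute, setting $r=|x|$,
\[
\Delta f_\pm = \Bigl[(1\mp\eps)^2\, r^s + O(r^{s/2-1})\Bigr] f_\pm \qquad \text{as } r\to\infty\;.
\]

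For the upper bound, I apply $(-\Delta + \tV + \lambda w*|\uHt|^2 - \muHt)$ to $f_+$. Outside the strip, the analysis is identical to that of Proposition~\ref{pro:Hartree} because $\tV=V$: the leading contribution is $[1-(1-\eps)^2]r^s = (2\eps-\eps^2)r^s$, which dominates all other terms for $r$ large. Inside the strip, the leading contribution becomes $[(1-\delta)-(1-\eps)^2]r^s$, which is still positive provided $\delta$ is small enough that $\delta < 1-(1-\eps)^2$. Since $\delta$ is fixed independently of $N$ (and, in the applications of the lemma, chosen small at the outset), this condition is satisfied for all $\eps$ in the relevant range, and one obtains
\[
\left(-\Delta + \tV + \lambda w*|\uHt|^2 - \muHt\right) f_+ \geq 0 \quad \text{for } |x|\geq R_\eps\;.
\]
The maximum-principle argument then runs exactly as in the proof of Proposition~\ref{pro:Hartree}: extend $f_+$ smoothly inside the ball $B(0,R_\eps)$ to a function $f_+^{\mathrm{ext}}$ bounded away from $0$ there, choose $C_\eps$ so that $\uHt \leq C_\eps f_+^{\mathrm{ext}}$ on $\overline{B(0,R_\eps)}$, and show that $g_+ = \uHt - C_\eps f_+^{\mathrm{ext}}$ cannot attain a positive maximum by distinguishing whether this maximum is realized inside or outside $B(0,R_\eps)$. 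The lower bound is obtained analogously with the subsolution $f_-$.

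The main obstacle is the behavior inside the strip: there the supersolution/subsolution property degrades because $\tV$ is strictly smaller than $V$, and one must verify that the margin $(1\mp\eps)$ built into $A$ is enough to absorb the multiplicative loss $\delta$. This forces $\delta$ to be chosen small enough relative to $\eps$, which is consistent with the statement of the lemma since $\delta$ is a small fixed constant. Apart from this point, the proof is a routine adaptation of Proposition~\ref{pro:Hartree} and no new ideas are needed.
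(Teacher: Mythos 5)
Your strategy is exactly the one the paper intends: its proof of this lemma is literally a one-line reference to ``minor modifications'' of the super/sub-solution and maximum-principle argument of Proposition~\ref{pro:Hartree}, and your adaptation (same comparison functions with $(1\mp\eps)A$, same dichotomy on the location of the extremum, plus the easy remark that the lower bound is unaffected because $\tV \leq V$) is the right skeleton.

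There is, however, one genuine issue in how you handle the strip. You make the supersolution inequality hold \emph{pointwise} inside $\{|x^1-L_N/2|\leq \ell\}$ by requiring $\delta < 2\eps-\eps^2$, and then assert this is ``satisfied for all $\eps$ in the relevant range.'' But in~\eqref{eq:pert pot} $\delta$ is just \emph{some} fixed constant, not assumed small, while the lemma claims the bounds for \emph{every} $0<\eps<1$; with your argument, for fixed $\delta$ you only obtain the upper bound for $\eps > 1-\sqrt{1-\delta}$. This is not a cosmetic restriction: downstream (Lemma~\ref{lem:pert enr}, Proposition~\ref{pro:comp Hartree min}, and Steps 2--3 of Section~\ref{sec:low bound loc}) one needs errors of order $|T_N|^{1-\eta}$ with $\eta$ arbitrarily small, so that $N|T_N|^{1-\eta}\to 0$ in the regime~\eqref{eq:loc regime}; this uses the pointwise estimates with arbitrarily small $\eps$. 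Two ways to close the gap: (i) in the actual application the amplitude $\|\delta_N^\pm\|_\infty$ tends to $0$ as $L_N\to\infty$, so for each $\eps$ one may enlarge $L_\eps$ until the pointwise condition holds -- but this uses information not contained in the abstract statement; or (ii) for genuinely fixed $\delta$, do not insist on the pointwise inequality in the strip: the strip is crossed at radius of order $L_N/2$ and has width $\ell \ll L_N$, so the Agmon mass lost there is $O\bigl((\ell/L_N)\,A(|x|)\bigr)$, hence $\leq \tfrac{\eps}{2}A(|x|)$ for $L_N>L_\eps$ (the same estimate $\ell\sqrt{V(L_N/2)}\ll \eps A(L_N/2)$ used in the proof of Lemma~\ref{lem:pert enr}). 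Concretely, replace $A$ in your comparison function by the Agmon distance built from $\tV$, or multiply $f_+$ past the strip by the bounded factor $e^{+\mathrm{deficit}}$; the supersolution inequality then holds in the strip with no smallness condition on $\delta$, and the loss is absorbed into the $\eps$-margin of the statement.
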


\begin{proof}
One only needs minor modifications to the arguments in the proof of Proposition~\ref{pro:Hartree}.
\end{proof}

\medskip

We next prove that the difference between the trapping energies in the perturbed and unperturbed potentials $V$ and $\tV$,
$$
\int_{\R^d} V \delta_N | \uH|^2 
$$
is of the order of the tunneling energy to some power arbitrary close to one. 

\begin{lemma}[\textbf{Difference in the trapping energies}]\label{lem:pert enr}\mbox{}\\
For any $0< \eta <1$, one can find some constant $C_\eta$, $c_\eta$, and $L_\eta >0$ such that for all $L_N > L_\eta$,
\begin{equation}\label{eq:difference trap energies}
\int_{\R^d} V |\delta_N|  |\uH|^2  
 \leq  C_{\eta}  |T_N|^{1-\eta}
\quad , \quad  
\int_{\R^d} V  |\delta_N| | \uHt |^2   
 \leq  C_{\eta}  |T_N|^{1-\eta}\;.
\end{equation}
\end{lemma}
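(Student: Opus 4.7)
My strategy for both estimates is identical; they differ only in whether we use the pointwise decay of $\uH$ from Proposition~\ref{pro:Hartree} or of $\uHt$ from Lemma~\ref{lem:pert decay}. Both provide $|\uH(x)|^2,|\uHt(x)|^2 \leq C_\eps\,e^{-2(1-\eps)A(|x|)}$ for any $\eps > 0$. Since $V(x)=|x|^s$ grows only polynomially while $A(|x|)\sim|x|^{1+s/2}$ grows faster, the bound $V(x)\leq C_\eps\,e^{\eps A(|x|)}$ holds for large $|x|$. Combining these with $|\delta_N|\leq \delta\,\1_{|x^1-L_N/2|\leq \ell}$, the problem reduces in both cases to estimating
\[ \int_{|x^1 - L_N/2|\leq \ell} e^{-(2-3\eps)A(|x|)}\,\D x. \]

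I would then write $x=(x^1,x')\in\R\times\R^{d-1}$. The convexity of $t\mapsto(1+t)^{(1+s/2)/2}$ for $s\geq 2$ yields a key separating bound
\[ A(|x|)\geq A(x^1)+\tfrac{1}{2}(x^1)^{s-1}|x'|^2, \]
valid for every $x'\in\R^{d-1}$ as long as $x^1>0$. This turns the $x'$-integration into a Gaussian with rate $(x^1)^{s-1}$ and yields a polynomial factor in $x^1\asymp L_N$. The remaining $x^1$-integration over $[L_N/2-\ell,L_N/2+\ell]$ then contributes at most $2\ell\cdot e^{-(2-3\eps)A(L_N/2-\ell)}$. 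In the implicit regime $\ell\ll L_N$, a Taylor expansion of $A$ gives $A(L_N/2-\ell)=A(L_N/2)(1+o(1))$, since the correction is $O(\ell\,(L_N/2)^{s/2})=o(A(L_N/2))$.

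To conclude, I would invoke Proposition~\ref{cor:tunneling}, which provides $|T_N|^{1-\eta}\geq c_\eta\,e^{-2(1+\eps)(1-\eta)A(L_N/2)}$. For any fixed $\eta>0$, choosing $\eps$ small enough that $(2-3\eps)>2(1+\eps)(1-\eta)$ leaves a positive gap in the exponent. Since $A(L_N/2)\to\infty$, this gap absorbs both the $o(1)$ correction from the Taylor expansion and all polynomial prefactors in $L_N$ and $\ell$, establishing the claim for $L_N$ large enough.

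I expect the main technical obstacle to be the transverse integration in $x'$; the convexity bound above circumvents any near/far splitting by giving Gaussian decay in one stroke, but care is needed to keep track of the exact polynomial factor in $x^1$ (which depends on $d$ and $s$). Beyond that, the argument is a routine comparison of exponential rates in the spirit of Proposition~\ref{cor:tunneling}, with the slack provided by the exponent $1-\eta$ (rather than $1$) in the target bound comfortably accommodating all the subdominant corrections.
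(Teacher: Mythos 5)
Your argument follows essentially the same route as the paper's proof: bound $V$ by $e^{\eps A}$, use the decay estimates \eqref{eq:decay Hartree} (resp.\ Lemma~\ref{lem:pert decay} for $\uHt$) on the strip $\{|x^1-L_N/2|\le\ell\}$, control the transverse integration by a polynomial (in fact bounded) factor, compare $A(L_N/2-\ell)$ with $A(L_N/2)$ using $\ell\ll L_N$, and conclude with the lower bound on $|T_N|$ from Proposition~\ref{cor:tunneling}, the slack $\eta$ absorbing all subexponential prefactors. One slip: your separating inequality should read $A(|x|)\ge A(x^1)+\tfrac12 (x^1)^{s/2-1}|x'|^2$; the exponent $s-1$ you wrote is too strong and the stated bound is false already for $s=2$ and $x^1>1$, where $A(|x|)=A(x^1)+|x'|^2/2$ exactly. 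The corrected bound follows from the very convexity argument you invoke, and since $s\ge2$ and $x^1\asymp L_N$ is large on the support of $\delta_N$, it still makes the transverse integral uniformly bounded, so nothing else in your proof needs to change.
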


\begin{proof}
Using  \eqref{eq:pert pot},
bounding $V(r)$ by $e^{\eps A (r)}$ for large $r$  and  using
the pointwise estimates~\eqref{eq:decay Hartree}, we have for large enough  $L_N$ 
\begin{eqnarray*}
& & \int_{\R^d} V |\delta_N|  \uH^2  
\leq  \delta   \int_{\{ |x^1- \halfLN | \leq \ell\} } V | \uH |^2   
\leq  
C_\eps^2 \delta   \int_{\{ |x^1- \halfLN | \leq \ell\} } \exp \Big( -2 ( 1 - \eps) A( |x| ) \Big) \D x
\\
& & \hspace*{1cm} \leq  
2 C_\eps^2 \delta   \ell e^{-2 ( 1 - \eps) A(L_N/2-\ell)}
 \int_{\R^{d-1}}
\exp \left( -2 ( 1 - \eps) \int_{\halfLN-\ell}^{\left( \left( L_N/2 - \ell\right)^2 + | x^\bot|^2\right) ^{1/2}} \sqrt{V(r)} \D r \right) \D x^\bot
\;.
\end{eqnarray*}
Arguing similarly as in the proof of Proposition~\ref{cor:tunneling}, the last integral can be bounded from
above by a polynomial function of $\halfLN - l$ and thus by $e^{\eps A (L_N/2)}$ for large enough $L_N$. We also have
$$
A \Big( \halfLN \Big) - A \Big( \halfLN - \ell \Big) \leq \ell \sqrt{V\Big( \halfLN \Big)} 
\leq \frac{\eps}{2(1-\eps)} A \Big( \halfLN \Big)\;.
$$
Collecting the above bounds and using the lower bound on the tunneling energy in \eqref{eq:tunn 2},
we get the desired result. One proceeds similarly for the proof of the second inequality in \eqref{eq:difference trap energies}, by
relying on Lemma~\ref{lem:pert decay}.  
\end{proof}

We can now prove the announced result that $\uHt$ is close to $\uH$ for large $L_N$. 
We could probably prove stronger estimates, but we refrain from doing so for shortness.

\begin{proposition}[\textbf{Difference between perturbed and unperturbed minimizers}]\label{pro:comp Hartree min}\mbox{}\\
For any $\eta >0$, one can find constants $C_\eta$, $c_\eta$, and $L_\eta>0$ such that if $L_N \geq L_\eta$,  then
\begin{equation}\label{eq:dif ener pert}
 |\eH - \eHt | \leq C_\eta |T_N| ^{1-\eta}
\end{equation}
and
\begin{equation}\label{eq:comp Hartree min}
\norm{\uH - \uHt}_{L^2 (\R ^d)}  \leq C_\eta |T_N| ^{1/2-\eta}. 
\end{equation}
\end{proposition}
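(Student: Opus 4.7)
The plan splits into an energy estimate and an $L^2$ estimate, the latter being the main difficulty. For the energy, the strategy is a two-sided variational comparison. Since $\tV = V(1-\delta_N)$, one has the identity $\EHt[u] = \EH[u] - \int_{\R^d} V\delta_N |u|^2$ for every admissible $u$. Plugging $\uH$ into $\EHt$ and $\uHt$ into $\EH$, one obtains
\[
\eHt \leq \EHt[\uH] = \eH - \int_{\R^d} V\delta_N \uH^2,
\qquad
\eH \leq \EH[\uHt] = \eHt + \int_{\R^d} V\delta_N \uHt^2.
\]
Applying Lemma~\ref{lem:pert enr} to bound both correction integrals by $C_\eta |T_N|^{1-\eta}$ immediately yields $|\eH - \eHt| \leq C_\eta |T_N|^{1-\eta}$.

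For the $L^2$ bound, the same two inequalities combined give the crucial upper bound
\[
\EH[\uHt] - \eH = (\eHt - \eH) + \int_{\R^d} V\delta_N \uHt^2 \leq 2 C_\eta |T_N|^{1-\eta}.
\]
The remaining task is to convert this energy excess into an $L^2$ distance, which I would do via a local coercivity estimate for $\EH$ at $\uH$ on the unit sphere. Writing $u = \uH + v$ with $v$ real (both $\uH$ and $\uHt$ can be chosen positive), the constraint $\|u\|_{L^2}=1$ forces $\langle \uH, v\rangle = -\|v\|^2/2$. Using the Hartree equation $(-\Delta + V + \lambda w*\uH^2)\uH = \mu \uH$ to eliminate the linear term, a second-order expansion gives
\[
\EH[u] - \eH = \langle v, (\hH_\star - \mu) v\rangle + 2\lambda \langle \uH v, w * (\uH v)\rangle + O(\|v\|_{H^1}^3),
\]
where $\hH_\star := -\Delta + V + \lambda w*\uH^2$. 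The second quadratic term is nonnegative because $w$ is of positive type. The operator $\hH_\star - \mu$ is nonnegative with kernel $\mathrm{span}(\uH)$ (nondegeneracy of the unique positive minimizer) and a spectral gap $\delta_0 > 0$ above; decomposing $v = c\uH + v_\perp$, the constraint gives $|c| = O(\|v\|^2)$, hence $\|v_\perp\|^2 \geq \|v\|^2/2$ for $\|v\|$ small, and therefore $\langle v, (\hH_\star - \mu) v\rangle \geq (\delta_0/2)\|v\|^2$.

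To apply this local coercivity to $v = \uHt - \uH$, one first needs $\|\uHt - \uH\|$ to be small. This follows from a soft compactness argument: by Lemma~\ref{lem:pert decay} and the structure of $\EHt$, the family $(\uHt)_{L_N}$ is bounded in $H^1(\R^d)$; any weak $H^1$ limit point is, by lower semicontinuity and $\EH[\uHt]\to\eH$ (which we just proved), a minimizer of $\EH$, hence equal to $\uH$ by uniqueness; energy convergence then upgrades this to strong $L^2$ convergence. Once $\|\uHt - \uH\|_{H^1}$ is small enough, the local coercivity gives $\|\uHt - \uH\|_{L^2}^2 \leq C(\EH[\uHt] - \eH) \leq C_\eta |T_N|^{1-\eta}$, yielding the claim with exponent $(1-\eta)/2$, which is $\geq 1/2 - \eta$ after renaming $\eta$. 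The main obstacle is establishing the Hessian coercivity cleanly: it hinges on the spectral gap of $\hH_\star - \mu$ above its (one-dimensional) kernel, together with absorbing the cubic remainder, which itself requires the uniform $H^1$ control of $\uHt$ from Lemma~\ref{lem:pert decay}.
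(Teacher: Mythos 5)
Your proposal is correct and its skeleton matches the paper's: the two-sided variational swap $\eHt \leq \EHt[\uH]$, $\eH \leq \EH[\uHt]$ combined with Lemma~\ref{lem:pert enr} for~\eqref{eq:dif ener pert}, followed by a quantitative conclusion via non-degeneracy (coercivity) of the Hessian of $\EH$ at $\uH$, exactly as in the paper's Steps~1 and~3. Where you genuinely diverge is the intermediate step that makes the local coercivity applicable, namely the qualitative statement $\norm{\uH - \uHt}_{L^2} \to 0$. The paper obtains it indirectly: a Feynman--Hellmann/concavity argument (perturbing $\tV$ by $\eta W$ and using concavity of the minimum in $\eta$) yields $|\uHt|^2 \to |\uH|^2$ in the weak-$*$ sense, hence convergence of the mean-field potentials and of the chemical potentials, and then a spectral-gap argument for the mean-field Hamiltonian $\HH$, viewing $\uHt$ as an approximate ground state, forces $\| P^\perp \uHt\| \to 0$. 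You instead run a direct compactness argument: uniform $H^1$ bounds plus the uniform decay of Lemma~\ref{lem:pert decay} (which gives the tightness needed to rule out mass loss at infinity), weak lower semicontinuity together with $\EH[\uHt]\to\eH$, and uniqueness of the positive minimizer. Both routes are sound; yours is arguably more standard and self-contained, while the paper's avoids any compactness discussion and extracts along the way the convergence of chemical potentials (used again elsewhere in its Step~2). You also sketch a proof of the coercivity inequality (spectral gap of $\hH_\star-\mu$ above its kernel $\mathrm{span}(\uH)$, positive-type quadratic term, constraint giving $\langle \uH, v\rangle = -\norm{v}^2/2$, cubic remainder controlled since $w$ is bounded), which the paper simply cites from~\cite{LewNamSerSol-13}; your remainder bound can in fact be taken in $L^2$ rather than $H^1$, so the absorption only needs the $L^2$ smallness you already established. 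The exponent bookkeeping ($|T_N|^{(1-\eta)/2}$, renamed to $1/2-\eta$) matches the statement.
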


\begin{proof}
We proceed in several steps:

\medskip

\noindent\textbf{Step 1.} We clearly have
$$ \eHt = \EHt \big[ \uHt \big] = \EH \big[ \uHt \big] + \int_{\R^d} V \delta_N | \uHt|^2
\geq \eH - \int_{\R ^d} V |\delta_N | |\uHt| ^2
$$
and similarly 
$$ \eH \geq \eHt - \int_{\R ^d} V |\delta_N|  |\uH| ^2\;.
$$
The bound \eqref{eq:dif ener pert} follows from these
inequalities and  from Lemma~\ref{lem:pert decay}.
Thus $\eHt \to \eH$ when $L_N \to \infty$. 
Let  $\eHt^{\eta}$ be the minimum of the energy functional \eqref{eq:Hartree func pert}
in which the potential $\tV$ is replaced by a perturbed potential $\tV + \eta W$, where
$W \in L^\infty (\real^d)$ is a bounded potential. 
By the same argument as above, $\eHt^{\eta}$
converges to $\eH$ when $(L_N^{-1},\eta) \to (0,0)$. 
But $\eHt^{\eta}$ is a concave function of $\eta$ (as infimum of an affine functional), so we deduce that 
its derivative with respect to $\eta$ converges  to the corresponding derivative
for $\delta_N =0$ when  $L_N \to \infty$. These derivatives are given by
$$\frac{\partial \eHt^{\eta} }{\partial \eta} \bigg|_{\eta=0} =  \int_{\R^d} W |\uHt| ^2 \quad \to \quad  
\frac{\partial \eH^{\eta} }{\partial \eta} \bigg|_{\eta=0}= \int_{\R^d} W |\uH| ^2\;.
$$
This being so for any $W \in L^\infty (\real^d)$, we deduce that 
$$ |\uHt| ^2 \to |\uH| ^2 \quad , \quad L_N \to \infty$$
in the $L^{\infty}$ weak-$*$ topology. In particular, since by Assumption~\ref{assumptions-w} the interaction potential $w$ is 
bounded, 
\begin{equation}\label{eq:MF pot conv}
 w * |\uHt| ^2 (x) \to w * |\uH| ^2 (x) 
\end{equation}
for almost all $x \in \R^d$. Furthermore, by the dominated convergence theorem,
\begin{equation} \label{eq-proof convergence chem pot}
\int_{\R^d} \big| \uHt \big|^2  w \ast \big| \uHt \big|^2 
 =  \int_{\R^d} \hat{w} (k) \bigg| \int_{\R^d} e^{\I k x} \big| \uHt \big|^2 \bigg|^2
 \to 
\int_{\R^d} | \uH |^2  w \ast  | \uH |^2 \;.
\end{equation}
As a result, the chemical potentials $\muHt$ and $\mu$ associated to $\uHt$ and $\uH$, respectively,
satisfy (see \eqref{eq-chem pot})
\begin{equation}\label{eq:dif chem pot}
\muHt  \to \mu \quad \text{when} \quad L_N \to \infty\;.
\end{equation}

\medskip

\noindent\textbf{Step 2.} The Hartree minimizer $\uH$ is an eigenfunction with zero eigenvalue
of the mean-field Hamiltonian
$$ 
\HH = -\Delta + V + \lambda w * |\uH| ^2  -\mu \;.
$$
Since it is positive, $\uH$ must be in fact the ground state of this Hamiltonian, which is non degenerate. 
Similarly, $\uHt$ is the non-degenerate ground state of the perturbed Hamiltonian
$$ 
\HHt = -\Delta + \tV  + \lambda w * |\uHt| ^2 -\muHt\;.
$$
By using Lemma~\ref{lem:pert decay} and \eqref{eq-proof convergence chem pot}, we have 
$$ 
\muHt =\big\langle \uHt , ( \HHt + \muHt )  \uHt \big\rangle 
=  \big\langle \uHt ,  ( \HH + \mu ) \uHt \big\rangle + o (1) 
$$
and since $\HH$ has a non-degenerate ground state we deduce 
$$ 
\muHt
\geq \mu + c  \| P^{\perp} \uHt \| ^2 + o (1)\;,
$$
where $c>0$ is the spectral gap of $\HH$ and $P^{\perp}$ the orthogonal projector onto $\uH$. 
One concludes from this inequality and from \eqref{eq:dif chem pot} that 
\begin{equation}\label{eq:MF min conv}
\big\| \uHt - \uH \big\|_{L^2 (\R^d)} \to 0
\quad \text{when} \quad L_N \to \infty\;. 
\end{equation}

\medskip

\noindent\textbf{Step 3.} It follows from  Assumption~\ref{assumptions-w} on $w$ that the Hessian of $\EH$ at $\uH$ is 
non degenerate  (see the related discussions in~\cite[Section 2]{LewNamSerSol-13}). 
Since we already know~\eqref{eq:MF min conv}, for $L_N$ sufficiently large so that $\| \uH - \uHt\|_{L^2 (\R ^d)}$ is small
enough, one has
$$
\EH [\uHt] \geq \EH [ \uH]  + a \norm{\uH - \uHt}_{L^2 (\R ^d)} ^2
$$
for some fixed constant $a>0$. Hence
\begin{multline*}
\eHt - \int_{\R ^d}V \delta_N |\uHt| ^2 = 
 \EH [ \uHt] \geq \EHt [ \uH ] - \int_{\R ^d}V \delta_N |\uH | ^2+ a \norm{\uH - \uHt}_{L^2 (\R ^d)} ^2
 \\ 
\geq  \eHt  - \int_{\R ^d} V \delta_N  |\uH| ^2 + a \norm{\uH - \uHt}_{L^2 (\R ^d)} ^2 
\;.  
\end{multline*}
Hence, one infers from Lemma~\ref{lem:pert enr} 
that for any $0< \eta <1$ and $L_N$ large enough,
\begin{align*}
 \norm{\uH - \uHt}_{L^2 (\R ^d)} ^2 \leq a^{-1} 
\int_{\R ^d} V \delta_N \left( |\uH| ^2 - |\uHt | ^2 \right) 
\leq 2 a^{-1} C_\eta  T_N ^{1-2 \eta}\;
\end{align*}
which is the desired result.
\end{proof}

\section{Elements of Bogoliubov theory}\label{sec:bog}

Here we recall elements of Bogoliubov's theory that are needed in the rest of the paper, following mainly~\cite{LewNamSerSol-13,Nam-thesis}. See also~\cite{CorDerZin-09,DerNap-13,Seiringer-11,GreSei-13,ZagBru-01} for other recent discussions.

\subsection{Bogoliubov Hamiltonian}\label{sec:bog def}

For clarity we first recall how the Bogoliubov Hamiltonian is constructed.

\subsubsection*{Second quantized formalism} Bogoliubov's approximation for the spectrum of a large bosonic system is usually described in a grand-canonical setting where the particle number is not fixed. This means that the Hamiltonian~\eqref{eq:hamil depart}
is extended to the Fock space 
\begin{equation}\label{eq:Fock space}
\cF = \cF (\gH) := \C \oplus \gH \oplus \gH ^2  \oplus \ldots \oplus \gH ^N \oplus \ldots
\end{equation}
in the usual way
\begin{equation}\label{eq:second quantized}
\bH = 0 \oplus H_1 \oplus H_2 \oplus \ldots \oplus H_N \oplus \ldots
\end{equation}
with (note the value of the coupling constant)
$$ H_M 
= \sum_{j=1} ^M \left( -\Delta_j + V (x_j) \right) + \frac{\lambda}{N-1} \sum_{1\leq i < j \leq M} w(x_i-x_j)\;.$$
It is convenient to express this Hamiltonian by using standard bosonic annihilation and creation operators.
 We denote by $u_i$, $i=0,1\ldots$, the vectors of an orthonormal basis of $\gH = L ^2 (\R ^d)$ with $u_0 = \uH$ the Hartree ground state corresponding to 
$H_N$, i.e. the minimizer of the functional~\eqref{eq:Hartree func}. Let $a^*_i = a^* (u_i)$ and $a_i= a_i (u_i)$ be 
respectively the annihilation and creation operators in the mode $u_i$, defined by
\begin{align}
(a^\ast_i \Psi ) (x_1,\ldots , x_{M+1} )& = ( u_i \otimes_{\mathrm{sym}} \Psi )(x_1,\ldots , x_{M+1} ) \in \gH^{M+1} \nonumber\\ 
(a_i \Psi )(x_1,\ldots , x_{M-1}) & =\sqrt{M} \int_{\R^d} \overline{u}_i(x) \Psi (x,x_1,\ldots, x_{M-1}) \D x \in \gH^{M-1}\label{eq:anni cre}
\end{align}
for any $\Psi\in \gH^M$, where the symmetrized tensor product is defined in \eqref{eq-symm_tensor_prod}. Then we have
\begin{equation}\label{eq:second quantized 2}
\bH - \muH \sum_{i=0} ^{\infty}a_i ^* a_i
= \sum_{i,j=0} ^{\infty} \big( h_{ij} -\delta_{ij} \muH )  a_i ^* a_j  + \frac{\lambda}{2(N-1)} \sum_{i,j,k,l=0} ^{\infty} w_{ijkl} \: a_i ^* a_j ^* a_k  a_l 
\end{equation}
with $\delta_{ij}=1$ if $i=j$ and $0$ otherwise and
\begin{align*}
h_{ij}&=\left\langle u_i |  -\Delta +V |  u_j \right\rangle\\ 
w_{ijkl}&=\left\langle u_i\otimes u_j | w | u_k \otimes u_l \right\rangle = w_{ji lk} = \overline{w}_{klij}\;.
\end{align*}
%

\subsubsection*{Bogoliubov's Hamiltonian} 
Bogoliubov's approximation consists in replacing $a^*_0$ and $a_0$ by $\sqrt{N}$ in the expression 
in the \RHS of \eqref{eq:second quantized 2} and then dropping all terms that are more than quadratic 
in the operators $a^*_i,a_i$, $i=1,2,\ldots$ As explained in~\cite{LewNamSerSol-13}, this amounts to 
second-quantizing the Hessian at $\uH$ of the Hartree functional~\eqref{eq:Hartree func}. 
Non-degeneracy of this Hessian is required and the assumption $\hat{w}\geq 0$ is a convenient way of ensuring this.

Removing a constant (coming from terms involving only $a_0 ^*, a_0$)
one ends up with the quadratic Hamiltonian
\begin{equation}\label{eq:bog hamil}
 \bHb := \sum_{i,j= 1} ^{\infty} \left(h_{ij} - \muH \delta_{ij} \right) \: a_i ^* a_j + \frac{\lambda}{2}  \sum_{i,j = 1} ^{\infty} 
\Bigl( w_{00ij} a_i a_j + \overline{w}_{00ij} a_i ^* a_j ^* + 
2 ( w_{0i0j} + w_{0ij0} ) a_i ^* a_j\Bigr)\;.
\end{equation}
Here, we  have used the variational equation satisfied by $\uH$, 
\begin{equation}\label{eq:Hartree eq}
(-\Delta + V ) \uH + \lambda (w\ast|\uH|^2) \uH = \muH \uH\;,
\end{equation}
to discard the linear terms in $a_i ^*$ and $a_i$ and we have neglected terms of the order of $1/\sqrt{N}$.

The above Hamiltonian acts on the Fock space of elementary excitations, namely the Fock space 
$$ \cF_\perp  = \cF (\gHp) := \C \oplus \gH_\perp \oplus \ldots \oplus \left( \gH_{\perp}\right) ^N \oplus \ldots $$
associated to the Hilbert space 
\begin{equation}\label{eq:excited space}
\gHp = \left\{ \uH \right\} ^{\perp}\;. 
\end{equation}
We denote by $\eB$ the lowest eigenvalue of $\bHb$ and write the associated eigenstate as  
\begin{equation}\label{eq:bog ground state}
 \PhiB = \phiB_0 \oplus \phiB_1 \oplus \ldots \oplus \phiB_n \oplus \ldots 
\end{equation}
with $\phiB_n \in \left(\gH_\perp \right)^n$. It is well-known that $\PhiB$ is a quasi-free state, i.e., is entirely characterized 
via Wick's theorem in terms of its generalized one-body density matrix. Given a state $\Gamma$ on $\gH^N_\perp$,
the latter is an operator combining the usual one-body density matrix
$$ \langle u, \gamma^{(1)}_{\Gamma} v \rangle = \tr [a^*(v) a(u) \Gamma] 
$$
with the pairing density matrix defined by 
$$ \langle u, \alpha_{\Gamma} J v \rangle = \tr [a (v) a(u) \Gamma]\;,$$
where $J$ is the complex conjugation and $u,v$ are arbitrary vectors in $\gH$. 
Note that $\gamma^{(1)}_{\Gamma}$ is a (self-adjoint) non-negative operator on $\gH_\perp$, 
whereas $\alpha_\Gamma$ should be interpreted as an operator from $J \gH_\perp$ to $\gH_\perp$ satisfying 
$\alpha_\Gamma ^* = J \alpha_\Gamma J$. 

\begin{remark}[Bogoliubov's energy at small coupling]\label{rem:bog low filling}\mbox{}\\
One can show  that
\begin{equation}\label{eq:estim bog}
- C \lambda ^2\ \leq  \eB \leq 0 
\end{equation}
for some $\lambda$-independent constant $C>0$.
The upper bound follows by simply taking the vacuum as a trial state. A sketch of the proof of the
 lower bound is given in Appendix~\ref{sec:app}.
Hence the Bogoliubov energy $\eB$ goes rapidly to $0$ when $\lambda \to 0$ and 
may be safely dropped when $\lambda$ is small.

\hfill$\diamond$
 
\end{remark}

\subsection{Useful results}\label{sec:bog known}

The proofs of our main results rely heavily on recent results of~\cite{GreSei-13,LewNamSerSol-13,Seiringer-11}
on the Bogoliubov fluctuations in the case of a single well potential $V_N \equiv V$. 
We summarize them here, and give 
 for completeness some elements of proof in~Appendix~\ref{sec:app}.

In this paper, the main application of Bogoliubov's theory will be to provide a control of quantum fluctuations out of the condensate. As in~\cite{LewNamSerSol-13}, we write any $N$-body wave 
function $\Psi_N \in \gH^N$ as 
\begin{equation}\label{eq:unitary 1}
\Psi_N = \sum_{j=0} ^N \uH^{\otimes (N-j)} \otimes_\sym  \varphi_j 
\end{equation}
with $\varphi_j \in \left(\gH_\perp \right)^j$. The convention here is that $\varphi_0$ is simply a number. 
Then one can define the unitary map
\begin{equation}\label{eq:unitary 2}
U_N : \Psi_N \mapsto \varphi_0 \oplus \ldots \oplus \varphi_N 
\end{equation}
which sends $\gH^N $ to the truncated Fock space $\cF ^{\leq N}_\perp$.
Let  $H_N$ be the $N$-body Hamiltonian~\eqref{eq:hamil depart} with a potential $V_N\equiv V$ independent of $N$.
We denote by
\begin{equation}\label{eq:hamil fock}
\bH_N  := U_N \left( H_N - N \eH \left( \lambda \right) \right) U_N ^*
\end{equation}
the corresponding Hamiltonian on $\cF^{\leq N}_\perp$ after subtraction of the mean field contribution.
The following results show that $\bH_N $ 
is closely related to the Bogoliubov Hamiltonian~\eqref{eq:bog hamil} in the limit $N\to \infty$. We denote by $\dGamma (h^\perp)$ 
the  second quantization of the one-body Hamiltonian $h^\perp = P^\perp (-\Delta +V) P^\perp$ acting on $\gH_\perp$, where
$P^\perp$ is the orthogonal projector onto $\{ \uH \}^\perp$. Let 
$$\Npe = \N - a^\ast ( \uH) a ( \uH )$$
be the particle number operator in $\gH_\perp$, with $\N$ the total particle number operator. 


\begin{proposition}[\textbf{Control of fluctuations out of the condensate}]\label{pro:bog low}\mbox{}\\
Let $\eB <0$ be the lowest eigenvalue of $\bHb $. For $N$ large enough, there is a constant $C>0$ such that, as operators on $\cF^{\leq N}_\perp$,
\begin{equation}\label{eq:bog low pre}
\bH_N  \geq C \left(\dGamma (  h^\perp  ) - C\right) 
\end{equation}
and
\begin{equation}\label{eq:bog low}
\bH_N  \geq \eB + N ^{-1}  \big( \Npe  \big)^2 - C N ^{-2/5}\;.
\end{equation}
\end{proposition}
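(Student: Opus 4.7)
The plan is to start from the second-quantized expression \eqref{eq:second quantized 2} for $H_N$ and to compute explicitly the conjugation by the unitary $U_N$ defined in \eqref{eq:unitary 1}--\eqref{eq:unitary 2}. A direct calculation, using that on $\cF^{\leq N}_\perp$ the operator $a_0^\ast a_0$ becomes multiplication by $N-\Npe$ and that $a_0,a_0^\ast$ applied to vectors in $\cF^{\leq N}_\perp$ produce factors $\sqrt{N-\Npe}$, yields a decomposition
\[
\bH_N \;=\; \bHb^{\leq N} \;+\; R_{\rm cub} \;+\; R_{\rm quar} \;+\; R_{\sqrt{\cdot}}\;,
\]
where $\bHb^{\leq N}$ is $\bHb$ restricted to $\cF^{\leq N}_\perp$ (so that its ground state energy differs from $\eB$ by $o(1)$), the linear terms in $a_i,a_i^\ast$ cancel by the Hartree equation \eqref{eq:Hartree eq}, $R_{\rm cub}$ collects cubic terms of the form $\lambda N^{-1/2}\sqrt{N-\Npe}/\sqrt{N}\,( w_{0ijk} a_i^\ast a_j^\ast a_k + {\rm h.c.})$, $R_{\rm quar}$ is of order $\lambda (N-1)^{-1} w_{ijkl}\,a_i^\ast a_j^\ast a_k a_l$, and $R_{\sqrt{\cdot}}$ gathers corrections coming from replacing $\sqrt{N-\Npe}$ by $\sqrt{N}$ in the quadratic part.

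For the a priori bound \eqref{eq:bog low pre}, I would exploit $\hat{w}\geq 0$ to drop $R_{\rm quar}$, and use the non-degeneracy of $\uH$ (the Hessian of $\cEH$ at $\uH$ is coercive on $\gHp$, cf.~\cite[Sec.~2]{LewNamSerSol-13}) to show that the quadratic part $\bHb^{\leq N}$ controls $c\,\dGamma(h^\perp)-C$ from below. The cubic remainder is then absorbed by Cauchy--Schwarz: for any $\delta>0$,
\[
|\langle \Psi, R_{\rm cub}\Psi\rangle| \;\leq\; \delta\,\langle \Psi, \dGamma(h^\perp)\Psi\rangle + C\delta^{-1} N^{-1}\,\langle \Psi,(\Npe)^3\Psi\rangle\;,
\]
and $(\Npe)^3\leq N^2\,\Npe\leq N^2\,\dGamma(h^\perp)/c_0$ with $c_0$ the spectral gap of $h^\perp$. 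Choosing $\delta$ small enough gives \eqref{eq:bog low pre}.

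For the refined bound \eqref{eq:bog low}, I would apply the $\N$-localization technique of Lewin: pick smooth functions $\chi^2+\eta^2=1$ with $\chi$ supported on $[0,M]$ and $\eta$ on $[M/2,\infty)$, and write
\[
\bH_N \;\geq\; \chi(\Npe/M)\,\bH_N\,\chi(\Npe/M) + \eta(\Npe/M)\,\bH_N\,\eta(\Npe/M) - CM^{-2}\,\bH_N\,.
\]
On the ``low sector'' $\chi(\Npe/M)\Psi$, using the explicit form above one checks that $R_{\rm cub}$, $R_{\rm quar}$ and $R_{\sqrt{\cdot}}$ are bounded by $C(M/\sqrt{N}+M^2/N)$ in operator norm on $\text{Ran}\,\chi$, and that $\bHb^{\leq N}\geq\eB+o(1)$. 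On the ``high sector'' $\eta(\Npe/M)\Psi$, the bound \eqref{eq:bog low pre} yields $\bH_N\gtrsim\dGamma(h^\perp)\gtrsim \Npe\gtrsim M$, which combined with $(\Npe)^2\geq M^2/4$ gives both $\eB$ and $N^{-1}(\Npe)^2$ up to errors controlled by the balance. Optimizing the cutoff at $M\sim N^{2/5}$ yields the error $N^{-2/5}$.

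The main obstacle is the tight bookkeeping of the cubic remainder $R_{\rm cub}$: it is only $O(N^{-1/2})$ times a trilinear expression and cannot be treated perturbatively without some a priori control on $\dGamma(h^\perp)$. This is precisely what forces the two-step structure (first establish \eqref{eq:bog low pre}, then use it to bound $R_{\rm cub}$ on the low sector) and fixes the exponent $2/5$ through the optimization $M/\sqrt{N}\sim N/M^2$. The secondary difficulty is showing that $\bHb^{\leq N}$ truncated to $\cF^{\leq N}_\perp$ has ground state energy $\eB+o(1)$, which follows from the fact that the true ground state of $\bHb$ on $\cF_\perp$ is a quasi-free state with exponentially small tails in $\Npe$.
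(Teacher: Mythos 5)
Your overall strategy is the same as the paper's (which follows \cite{LewNamSerSol-13}): expand $U_N(H_N-N\eH)U_N^*$ around $\bHb$, localize in the excitation number $\Npe$ at a scale $M$, compare $\bH_N$ with $\bHb$ on the low sector, and use coercivity on the high sector. However, there is a genuine gap in your proof of the a priori bound \eqref{eq:bog low pre}. You bound the cubic remainder by $\delta\,\dGamma(h^\perp)+C\delta^{-1}N^{-1}(\Npe)^3$ and then use $(\Npe)^3\leq N^2\Npe$; but this turns the error into $C\delta^{-1}N\,\Npe$, which grows linearly in $N$, and no choice of $\delta$ makes both $\delta$ and $\delta^{-1}N$ small. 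The failure is not an accident of the algebra: on sectors with $\Npe\sim N$ the cubic term is of the same size as $\dGamma(h^\perp)$ itself, so it cannot be absorbed globally on $\cF^{\leq N}_\perp$ by a small multiple of the quadratic part. This is precisely why the paper does \emph{not} prove \eqref{eq:bog low pre} as a separate global estimate: both \eqref{eq:bog low pre} and \eqref{eq:bog low} are extracted from the same particle-number localization, using \cite[Lemma~6.3]{LewNamSerSol-13} for the IMS-type splitting, \cite[Proposition~5.1]{LewNamSerSol-13} to replace $\bH_N$ by $\bHb$ on the sector $\Npe\leq M$ (error $C\sqrt{M/N}$), the bound $\bHb\geq C\,\dGamma(h^\perp+1)-C$ from the non-degenerate Hessian, and the positivity $w\geq0$ on the high sector.

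Two further points. First, your exponent bookkeeping is off: the balance you write, $M/\sqrt{N}\sim N/M^2$, gives $M\sim N^{1/2}$ (and a useless error of order one), not $M\sim N^{2/5}$; the paper balances $\sqrt{M/N}$ against the localization error $M^{-2}$, which fixes $M\sim N^{1/5}$ and yields the error $N^{-2/5}$ in \eqref{eq:bog low}. With your cruder low-sector error $M/\sqrt N$ you would only reach $N^{-1/3}$, which does not prove the statement as written. Second, a minor confusion: it is the pointwise positivity $w\geq0$ that lets you drop (or use as an absorbing term) the quartic part and gives $\bH_N\gtrsim\dGamma(h^\perp)$-type lower bounds, while $\hat w\geq0$ is what guarantees the non-degeneracy of the Hessian of $\EH$ and hence the coercivity of $\bHb$; both hold under Assumption~\ref{assumptions-w}, but they play different roles. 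Your treatment of the high sector and the comparison on the low sector are otherwise in line with the paper's argument, once \eqref{eq:bog low pre} is obtained by the localization route rather than by global absorption of the cubic term.
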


A lower bound on the first eigenvalue $E_N$ of $H_N$ follows easily from~\eqref{eq:hamil fock} and~\eqref{eq:bog low}. A matching upper bound can be obtained by using a trial state:

\begin{proposition}[\textbf{Upper bound for the single-well many-body energy}]\label{pro:GSE single well}\mbox{}\\
Let $E_N$ be the ground state energy of~\eqref{eq:hamil depart} with $V_N \equiv V$
 and consider the $N$-body wave 
function
$$ \Psi_N := c_M \sum_{j=0} ^M \uH ^{\otimes (N-j)} \otimes_\sym \phiB_{j}\;,$$
where $\phiB_j$   is the component in $(\gH_\perp)^j$ of the ground state of $\bHb$
(cf.~\eqref{eq:bog ground state}), 
$c_M$ a normalization constant, and $M\propto N^{1/5}$ when $N\to \infty$. Then for any $\eps>0$, there is a constant
$C_\eps>0$ such that
\begin{equation}\label{eq:ener exp single well}
E_N \leq \left\langle \Psi_N |  H_N | \Psi_N \right\rangle \leq N \eH + \eB + C_\eps N ^{-2/5 + \eps}\;. 
\end{equation}
\end{proposition}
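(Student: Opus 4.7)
The plan is to transport the trial state to the truncated Fock space via the unitary $U_N$ from \eqref{eq:unitary 2} and then compare the $N$-body Hamiltonian $\bH_N$ defined in \eqref{eq:hamil fock} with the Bogoliubov Hamiltonian $\bHb$ on a state whose total excitation number is controlled by $M$. Setting $\tilde\PhiB_M := \phiB_0 \oplus \phiB_1 \oplus \cdots \oplus \phiB_M \oplus 0 \oplus \cdots \in \cF^{\leq N}_\perp$, one has $U_N \Psi_N = c_M \tilde\PhiB_M$, so
\begin{equation*}
\langle \Psi_N | H_N | \Psi_N \rangle - N \eH \;=\; c_M^{2}\, \langle \tilde\PhiB_M | \bH_N | \tilde\PhiB_M \rangle,
\end{equation*}
and the task reduces to bounding this expectation by $\eB + C_\eps N^{-2/5 + \eps}$.

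First I would verify the normalization. The Bogoliubov ground state $\PhiB$ is quasi-free, and Proposition~\ref{pro:bog low} in the form $\bHb \gtrsim \dGamma(h^\perp)-C$, combined with Bogoliubov-diagonalization, implies that all moments of the excitation number $\Npe$ in $\PhiB$ are bounded independently of $N$. Hence $c_M^{-2} = \sum_{j\leq M} \|\phiB_j\|^2 = 1 - O(M^{-K})$ for any $K$, and for the same reason the truncation error $|\langle \tilde\PhiB_M | \bHb | \tilde\PhiB_M \rangle - \eB|$ is super-polynomially small in $M$.

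The main step is to write $\bH_N = \bHb + R_N$, where $R_N$ collects the cubic and quartic residues left by the Bogoliubov substitution $a_0, a_0^{\ast} \leftrightarrow \sqrt{N - \Npe}$, together with the corrections arising from Taylor-expanding $\sqrt{N - \Npe}$ around $\sqrt{N}$ and from the truncation to $\cF^{\leq N}_\perp$ (as in~\cite{LewNamSerSol-13,GreSei-13,Seiringer-11}). On vectors supported in $\{\Npe \leq M\}$, elementary operator bounds yield $|\langle \psi | Q_4 | \psi\rangle| \leq C M^{2}/N$ for the quartic piece and $|\langle \psi | Q_3 | \psi\rangle|\leq C M^{3/2}/\sqrt{N}$ for the cubic one. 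The crucial observation is that $\PhiB$, being the vacuum of a Bogoliubov transformation built out of pair creation operators, has definite number-parity: $\phiB_j = 0$ for odd~$j$. Since the truncation preserves this property, the odd-order terms of $R_N$ have vanishing expectation in $\tilde\PhiB_M$. The residual error they produce — obtained by commuting one cubic operator past another, or equivalently by treating them in second-order perturbation theory with the Bogoliubov gap — is only of order $M^{3}/N$. Together with the $M^{2}/N$ from the quartic piece, the total error is dominated by $M^{3}/N$.

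Putting everything together gives
\begin{equation*}
\langle \Psi_N | H_N | \Psi_N \rangle \;\leq\; N\eH + \eB + C\, \frac{M^{3}}{N} + C_K\, M^{-K},
\end{equation*}
and choosing $M \propto N^{1/5}$ balances the first error term to $O(N^{-2/5})$; a factor $N^{\eps}$ is inserted to absorb logarithmic losses and sub-leading terms coming from the Cauchy--Schwarz step and from the moment bounds on $\Npe$. The principal obstacle is the control of $R_N$: one must identify which parts of $R_N$ are killed by number-parity and which must be estimated through moment bounds, and one must carry enough regularity — i.e., uniform bounds on $\langle \tilde\PhiB_M | \Npe^{k} | \tilde\PhiB_M \rangle$ for the relevant $k$ — to close the Cauchy--Schwarz estimates. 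Once these ingredients are in place, the upper bound \eqref{eq:ener exp single well} follows.
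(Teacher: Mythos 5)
Your proposal follows the same skeleton as the paper's proof: conjugate by $U_N$ so that $\left\langle \Psi_N | H_N | \Psi_N \right\rangle - N\eH = c_M^2 \langle \tilde\Phi^{\rm B}_M | \bH_N | \tilde\Phi^{\rm B}_M\rangle$, compare $\bH_N$ with $\bHb$ on the truncated Bogoliubov ground state, control the normalization $c_M$ through moments of $\Npe$ in the quasi-free state $\PhiB$ (this part is essentially identical to the paper's \eqref{eq:decay N bog}), and take $M\propto N^{1/5}$. The genuine difference is how the comparison $\bH_N$ versus $\bHb$ is handled: the paper simply invokes~\cite[Proposition~5.1]{LewNamSerSol-13}, which bounds the cubic residues by $O(\sqrt{M/N})$ on $\cF^{\leq M}_\perp$, and~\cite[Lemma~6.2]{LewNamSerSol-13} for the truncation error $O(M^{-2})$ -- this balancing is precisely what forces $M\propto N^{1/5}$ -- whereas you kill the odd-order (linear and cubic) residues exactly, using that the non-degenerate ground state of the quadratic Hamiltonian $\bHb$ has even number-parity, a property preserved by the truncation. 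That observation is correct and legitimate, and it would in fact leave only the $O(M^2/N)$ quartic and $O(M/N)$ coefficient corrections, so the error could even be made smaller than $N^{-2/5}$. Two caveats: first, your sentence about "commuting one cubic operator past another, or equivalently second-order perturbation theory with the Bogoliubov gap'' producing a residual $M^3/N$ is a non sequitur -- for the expectation value of $\bH_N$ in a \emph{fixed} trial state, terms whose expectation vanishes by parity contribute exactly zero, and no second-order correction enters an upper bound obtained this way; the statement is harmless only because it adds a spurious error of size $N^{-2/5}$ rather than omitting a needed one. Second, your superpolynomial estimates for the truncation error in $\langle \bHb \rangle$ are asserted ("for the same reason'') rather than proved; they require energy-weighted moment bounds on $\PhiB$ (e.g.\ finiteness of $\langle \dGamma(h^\perp)\,\Npe^{k}\rangle_{\PhiB}$ via Wick's theorem) to run the Cauchy--Schwarz argument, where the paper gets the weaker but sufficient $O(M^{-2})$ directly from the cited lemma. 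With those two points repaired or made precise, your argument closes and gives the stated bound.
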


We conclude this section with a mild decay estimate for the Bogoliubov ground state. 

\begin{lemma}[\textbf{Decay of the Bogoliubov ground state}]\label{lem:decay bog}\mbox{}\\
Let $\PhiB$ be the ground state of $\bHb$ on $\cF_\perp$. 
Then $\gamma_{\PhiB} ^{(1)}$ is trace-class and $\alpha_{\PhiB}$ is Hilbert-Schmidt. 
Let $\rho_{\PhiB} (x) = \gamma ^{(1)}_{\PhiB} (x,x)$ be the one-body density of $\PhiB$. 
Then there is a constant $C>0$ such that 
\begin{equation} \label{eq-tr_pot_is_bounded}
\int_{\R ^d} V \rho_{\PhiB} \leq C\;.
\end{equation}
\end{lemma}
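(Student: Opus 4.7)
The plan is to base the proof on the operator lower bound
\[
\bHb \geq c\, \dGamma(h^\perp) - C \qquad \text{on } \cF_\perp,
\]
valid for some constants $c, C > 0$. This is the quadratic analogue at the level of $\bHb$ of the first inequality of Proposition~\ref{pro:bog low}, and it can be proved by the same type of manipulation. Using the Hartree equation~\eqref{eq:Hartree eq}, the one-body piece of $\bHb$ identifies with the direct part of the Hessian of $\EH$ at $\uH$ acting on $\gHp$, up to an exchange correction relatively bounded by $1+\Npe$; the pairing terms $\sum w_{00ij} a_i a_j$ and their adjoints are then absorbed by Cauchy--Schwarz using $\hat w \geq 0$ from Assumption~\ref{assumptions-w}, which guarantees that the full Hessian is a non-negative quadratic form on $\gHp \oplus J\gHp$. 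Elements of this argument are given in Appendix~\ref{sec:app} and essentially go back to~\cite{GreSei-13,LewNamSerSol-13}.

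Granting this bound, taking expectation in the normalized ground state and using $\eB \leq 0$ from Remark~\ref{rem:bog low filling} yields $\langle \PhiB , \dGamma(h^\perp) \PhiB \rangle \leq (\eB + C)/c$. Because $V \geq 0$ grows at infinity, the Schr\"odinger operator $-\Delta + V$ on $L^2(\R^d)$ has compact resolvent and a strictly positive ground-state energy $e_0 > 0$, so that $h^\perp \geq e_0 P^\perp$ on $\gH$ and $\dGamma(h^\perp) \geq e_0 \Npe$ on $\cF_\perp$. This gives
\[
\tr \gamma^{(1)}_{\PhiB} = \langle \PhiB , \Npe \PhiB \rangle \leq (\eB + C)/(c\, e_0),
\]
so $\gamma^{(1)}_{\PhiB}$ is trace class. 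Moreover $-\Delta \geq 0$ gives $h^\perp \geq P^\perp V P^\perp$, and since the range of $\gamma^{(1)}_{\PhiB}$ lies in $\gHp$, we deduce
\[
\int_{\R^d} V \rho_{\PhiB} = \tr\bigl( P^\perp V P^\perp \gamma^{(1)}_{\PhiB} \bigr) \leq \langle \PhiB , \dGamma(h^\perp) \PhiB \rangle \leq C,
\]
which is exactly~\eqref{eq-tr_pot_is_bounded}.

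The Hilbert--Schmidt property of $\alpha_{\PhiB}$ then follows from the pure quasi-freeness of $\PhiB$ already recalled in Section~\ref{sec:bog}: the non-degenerate ground state of the positive-definite quadratic Hamiltonian $\bHb$ is the vacuum of a family of new bosonic modes obtained by a symplectic Bogoliubov transformation, hence a pure Gaussian state. The standard identity for such states,
\[
\alpha_{\PhiB} \alpha_{\PhiB}^* = \gamma^{(1)}_{\PhiB} \bigl( 1 + \gamma^{(1)}_{\PhiB} \bigr)
\]
(read as an operator equation on $\gHp$ after identifying $J\gHp$ with $\gHp$ through the complex conjugation $J$), combined with the trace-class bound above, yields
\[
\|\alpha_{\PhiB}\|_{\mathrm{HS}}^2 = \tr\bigl( \gamma^{(1)}_{\PhiB} + (\gamma^{(1)}_{\PhiB})^2 \bigr) \leq \bigl( 1 + \|\gamma^{(1)}_{\PhiB}\|_{\mathrm{op}} \bigr)\, \tr \gamma^{(1)}_{\PhiB} < \infty.
\]

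The main obstacle is the initial operator lower bound on $\bHb$, which is not obvious from its explicit form and requires a careful Cauchy--Schwarz treatment of the pairing contributions; everything else is then a short consequence combining the confining character of $V$ with standard identities for pure bosonic quasi-free states.
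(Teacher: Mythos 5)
Your proof is correct, but it follows a genuinely different route from the paper's. The paper argues directly at the level of quasi-free states: it writes the Bogoliubov energy as $\tr [ ( \HH -\muH + \lambda K) \gamma ] + \lambda \Real \tr [K\alpha]$ with $K(x,y)=\uH(x)w(x-y)\uH(y)$ positive and trace-class, diagonalizes $\gamma^{(1)}_{\PhiB}$, uses the pure quasi-free relation $\alpha\alpha^*=\gamma(1+\gamma)$ together with the elementary inequality $c-\sqrt{c(1+c)}>-1/2$ to absorb the pairing term, and obtains $\eB \geq \tr [(\HH-\muH)\gamma^{(1)}_{\PhiB}] - \tfrac{\lambda}{2}\tr K$; combined with $\eB\le 0$ and the spectral gap of $\HH-\muH$ on $\gHp$ this gives the trace-class property, and the potential bound follows from $V \leq \HH$ plus $\mu\,\tr\gamma^{(1)}_{\PhiB}$. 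You instead start from the operator inequality $\bHb \geq c\,\dGamma(h^\perp) - C$, take the expectation in $\PhiB$ with $\eB\le 0$, and use $h^\perp \geq e_0>0$ and $h^\perp \geq P^\perp V P^\perp$; this is shorter and cleanly separates the confinement argument from the quadratic-Hamiltonian analysis, but it defers all the real work to that kinetic lower bound, which is exactly the result the paper quotes as \eqref{eq:op bound bog} from~\cite[Theorem~2.1]{LewNamSerSol-13} and whose proof (retaining a positive multiple of $\dGamma(h^\perp)$ while controlling the pairing terms) amounts to the same Cauchy--Schwarz work the paper carries out explicitly in its proof of the lemma. One small caveat: your remark that this bound follows "by the same type of manipulation" as the first inequality of Proposition~\ref{pro:bog low} is slightly circular in attribution, since that inequality is itself derived from \eqref{eq:op bound bog}; as long as you invoke the cited theorem (or reproduce the paper's Cauchy--Schwarz estimate on the pairing part), your argument is complete, and the remaining steps --- $\tr\gamma^{(1)}_{\PhiB}=\langle\Npe\rangle_{\PhiB}<\infty$, $\int V\rho_{\PhiB}\le\langle\dGamma(h^\perp)\rangle_{\PhiB}$, and $\|\alpha_{\PhiB}\|_{\mathrm{HS}}^2=\tr[\gamma^{(1)}_{\PhiB}+(\gamma^{(1)}_{\PhiB})^2]$ --- are all sound.
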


Note that, according to the first statement, the mean number of particles which are not condensed,
$$
\tr \big[ \gamma_{\PhiB} ^{(1)} \big] = \int_{\R^d} \rho_{\PhiB} (x) \D x = 
 \sum_{j=1}^\infty j \big\| \phiB_j \big\|^2\;,
$$
is finite in the limit $N \to \infty$.

A sketch of the proofs of the last three results, following mostly~\cite{LewNamSerSol-13}, is provided in 
Appendix~\ref{sec:app} for the convenience of the reader.

\section{Energy upper bound}\label{sec:up bound}

To prove the energy upper bound in the localized regime we use a trial state where exactly half of the particles is localized 
in each well: 
\begin{equation}\label{eq:trial loc}
\Psi_{\rm loc} := \Psi_-' \otimes \Psi_+ ' \quad , \quad \Psi_- ',\Psi_+ ' \in \gH ^{N/2}\;, 
\end{equation}
where the states $\Psi_-'$ and $\Psi_+'$ describe Bose condensates with $N'=N/2$ particles localized in the $V_N ^-$ and $V_N^+$ wells, 
respectively, together with their Bogoliubov fluctuations. We define them as follows: let 
$$ H_{N'} := \sum_{j=1} ^{N'} \left( -\Delta_j + V(x_j) \right) 
+ \frac{\lambda'}{N'-1} \sum_{1\leq i < j \leq N'} w(x_i-x_j) 
$$
be the Hamiltonian associated  to $N'=N/2$ particles in the single-well potential $V$ 
with a renormalized $N$-dependent coupling constant
$\lambda'$ such that 
$$\frac{\lambda '}{N'-1} = \frac{\lambda}{N-1},$$
and let
$$
\EHN [u] = \int_{\R ^d} \Big( |\nabla u | ^2 + V |u| ^2 
+ \frac{\lambda'}{2}  |u| ^2 w \ast |u | ^2  \Big)
$$
be the corresponding energy functional. 
We denote by $\uHN$ the minimizer of $\EHN [u]$ with unit $L^2$-norm and by
$$\PhiBN = \phiBN _0 \oplus \ldots \oplus \phiBN _j \oplus \ldots \in \cF (\lbrace \uHN \rbrace ^{\perp})$$
the ground state of the Bogoliubov Hamiltonian obtained from $H_{N'}$
by the procedure described in the previous section. 
We then define, as in Proposition~\ref{pro:GSE single well}, the normalized wave-function
\begin{equation}
\Psi '  := c_{N'} \sum_{j=0} ^{M} (\uHN ) ^{\otimes (N'-j)} \otimes_\sym \phiBN_{j} \in \gH^{N'}
\end{equation}
with $M\propto N ^{1/5}$ and $c_{N'}$ a normalization factor. Then, let
\begin{equation}\label{eq:trial loc -}
\Psi_-' := \Psi' (.-\mathbf{x}_N)\quad , \quad \Psi_+'  = \Psi ' (.+\mathbf{x}_N)
\end{equation}
where the translation by $\pm \mathbf{x}_N$ is understood to act on each of the coordinate vectors $x_j$, $j=1,\ldots,N'$, in the argument of the function. 
Similarly, we denote by  $\PhiBN_-$ and $\PhiBN_+$ the translates of the Bogoliubov ground state $\PhiBN$.

We shall prove the following, which gives the desired energy upper bound on the ground state energy $E(N)$ 
 of the $N$-body Hamiltonian~\eqref{eq:hamil depart} in the double well:

\begin{proposition}[\textbf{Energy of the localized state}]\label{pro:loc up bound}\mbox{}\\
Let $\Psiloc$ be the trial state defined in~\eqref{eq:trial loc}. In the localized regime~\eqref{eq:loc regime} we have 
\begin{equation}\label{eq:upper bound}
\frac{E(N)}{N} \leq N ^{-1} \langle \Psiloc | H_N | \Psiloc \rangle \leq 
\eH \left( \Delta_N \frac{\lambda}{2} \right) + \frac{2}{N} \eB \left( \frac{\lambda}{2} \right) + o(|T_N|) + o(N^{-1})
\end{equation}
where $\Delta_N$ and $T_N$ are given by~\eqref{eq:delta N} and~\eqref{eq:tunneling term}.
\end{proposition}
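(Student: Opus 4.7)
The plan is to compute $\langle \Psiloc | H_N | \Psiloc \rangle$ directly, decomposing it into the kinetic-plus-potential and intra-well interaction energies of each group of particles plus a cross interaction that we bound separately. Each intra-well contribution is then bounded via the single-well upper bound of Proposition~\ref{pro:GSE single well}. The decisive observation is the pointwise inequality $V_N = \min\{V_N^-, V_N^+\} \leq V_N^\pm$: replacing the double-well potential by either single-well potential on the support of $\Psi_\pm'$ can only raise the energy, so the potentially delicate tunneling correction to the kinetic-plus-potential piece automatically has the right sign for an upper bound and can simply be discarded.

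Concretely I would write
\begin{equation*}
\langle \Psiloc | H_N | \Psiloc \rangle = \cE_-^{(V_N)} + \cE_+^{(V_N)} + I_{-+},
\end{equation*}
where $\cE_\pm^{(V_N)}$ collects the kinetic, potential (in $V_N$) and intra-well interaction energies of the $N'=N/2$ particles described by $\Psi_\pm'$, and $I_{-+}$ is the cross interaction. Since $w$ has compact support in $B(0, R_w)$ and the one-body densities of $\Psi_\pm'$ decay exponentially away from $\mp \mathbf{x}_N$ (by Proposition~\ref{pro:Hartree} for the Hartree part and Lemma~\ref{lem:decay bog} for the Bogoliubov correction), a direct estimate gives $I_{-+}/N \lesssim \lambda e^{-2 A(L_N)}$. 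Using $A(L_N) = 2^{(s+2)/2} A(L_N/2) \geq 4 A(L_N/2)$ for $V(x) = |x|^s$ with $s \geq 2$, this is much smaller than $|T_N| \sim e^{-2A(L_N/2)}$ and hence $o(|T_N|)$.

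For $\cE_+^{(V_N)}$, translating by $-\mathbf{x}_N$ turns $\Psi_+'$ into $\Psi'$ and the one-body potential into $V_N(x + \mathbf{x}_N) = \min\{V(x), V(x + 2\mathbf{x}_N)\} \leq V(x)$; hence
\begin{equation*}
\cE_+^{(V_N)} \leq \langle \Psi' | H_{N'} | \Psi' \rangle,
\end{equation*}
where $H_{N'}$ is the single-well $N'$-body Hamiltonian with coupling $\lambda'$ satisfying $\lambda'/(N'-1) = \lambda/(N-1)$, namely $\lambda' = \Delta_N \lambda/2$. Proposition~\ref{pro:GSE single well} then gives $\cE_+^{(V_N)} \leq N' \eH(\lambda') + \eB(\lambda') + O(N^{-2/5 + \eps})$, and the same for $\cE_-^{(V_N)}$ by reflection symmetry. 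Summing, dividing by $N$, and using $\eH(\lambda') = \eH(\Delta_N \lambda/2)$ together with the continuity estimate $\eB(\lambda') = \eB(\lambda/2) + O(N^{-1})$ (which follows from differentiability of the Bogoliubov energy in the coupling constant, the Bogoliubov Hamiltonian depending smoothly on $\lambda$ through $\uH^\lambda$), one obtains a per-particle error of size $O(N^{-7/5 + \eps}) + O(N^{-2}) + I_{-+}/N = o(|T_N|) + o(N^{-1})$, which is the claimed remainder.

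The main technical obstacle is that $\Psi_-' \otimes \Psi_+'$ is not symmetric under exchange of all $N$ variables, so the proper bosonic variational test function is really its symmetric version $\Psi_-' \otimes_\sym \Psi_+'$ defined by~\eqref{eq-symm_tensor_prod}. Using the approximately disjoint supports of $\Psi_\pm'$ and the overlap bound $|\langle \uHm, \uHp \rangle| \leq C_{\eps'} e^{-2(1-\eps') A(L_N/2)}$ from Proposition~\ref{cor:tunneling}, one checks that the cross-partition contributions to both the numerator and the denominator of the Rayleigh quotient are controlled by a series $\sum_{k \geq 1} \binom{N/2}{k}^2 |\langle \uHm, \uHp \rangle|^{2k}$, whose leading term is of order $(N|\langle \uHm, \uHp \rangle|)^2$. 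Fitting this combinatorial correction into the target remainder $o(|T_N|) + o(N^{-1})$ requires choosing the auxiliary $\eps' < \eps$ in the localization condition $\log N \leq 2(1-\eps) A(L_N/2)$ with strict room and carefully balancing the polynomial pre-factors in $N$ against the exponentially small overlap; this bookkeeping, while conceptually routine, is where the bulk of the technical care has to go.
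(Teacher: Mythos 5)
Your overall architecture coincides with the paper's: split the energy of the product state into two intra-well contributions plus a cross interaction, use $V_N\leq V_N^{\pm}$ to pass to single-well Hamiltonians with coupling $\lambda'=\Delta_N\lambda/2$, invoke Proposition~\ref{pro:GSE single well} for each half, and show the cross term is negligible. However, two steps as you present them have genuine gaps. First, the admissibility of the non-symmetric trial state: the statement asserts $E(N)\leq \langle \Psiloc | H_N|\Psiloc\rangle$ for the \emph{unsymmetrized} product \eqref{eq:trial loc}, and the paper disposes of this in one line by the standard fact that for a bosonic (permutation-symmetric) Hamiltonian the ground state energy over the full tensor product space $\gH^{\otimes N}$ equals the bosonic ground state energy (cf.\ \cite[Section~3.2]{LieSei-09}), so no symmetrization is needed. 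You instead insist on replacing $\Psiloc$ by $\Psi_-'\otimes_\sym\Psi_+'$ and controlling the cross-partition terms of the Rayleigh quotient, which you acknowledge is ``where the bulk of the technical care has to go'' but do not carry out. This detour is not only unnecessary, it is delicate: the numerator cross terms carry matrix elements of $H_N$ (of size $O(N)$) multiplying overlap factors, and the relevant overlaps are those of the Bogoliubov-dressed $N/2$-body states, not simply powers of $\langle \uHm,\uHp\rangle$, so the series $\sum_k\binom{N/2}{k}^2|\langle\uHm,\uHp\rangle|^{2k}$ you sketch does not by itself close the estimate in the regime \eqref{eq:loc regime} where $\eps$ is arbitrarily small.

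Second, your bound on the cross interaction misuses Lemma~\ref{lem:decay bog}: that lemma only gives that $\gamma^{(1)}_{\PhiB}$ is trace-class with $\int V\rho_{\PhiB}\leq C$; it provides \emph{no} pointwise (let alone exponential) decay of the Bogoliubov one-body density, so the claim $I_{-+}/N\lesssim \lambda e^{-2A(L_N)}$ is unjustified for the condensate--fluctuation and fluctuation--fluctuation parts. Even for the condensate--condensate part the stated rate is wrong: the dominant contribution comes from the midpoint region between the wells, giving $e^{-4(1-\eps)A(L_N/2)}$ (overlap squared), not $e^{-2A(L_N)}=e^{-2^{2+s/2}A(L_N/2)}$; fortunately $N^2 e^{-4(1-\eps)A(L_N/2)}\ll 1$ still holds in the regime \eqref{eq:loc regime}, so this particular over-claim is repairable. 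For the fluctuation terms the paper proceeds differently: it splits $\gamma^{(1)}_{\Psi_\pm'}$ with the projectors onto $\uHpm^{\lambda'}$ and their orthogonal complements, controls the condensate--fluctuation term via Lemma~\ref{lem:control pot} and the weighted bound $\sup_{\R^d}|\uHNm|^{2-\eta}/(1+V_N^+)\cdot\int(1+V_N^+)\rho_{\PhiBN_+}\leq C$, obtaining only $o(1)$ for the total energy (i.e.\ $o(N^{-1})$ per particle, which suffices), and bounds the fluctuation--fluctuation term by $O(1)$ using boundedness of $w$ and the trace-class property. To make your argument correct you must either prove exponential decay of $\rho_{\PhiB}$ (not available in the paper) or adopt this weaker but sufficient weighted estimate.
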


\begin{proof}
Note that~\eqref{eq:trial loc} is not fully symmetric under
 particle exchange, only $\Psi_-'$ and $\Psi_+'$ are.  We thus start with:
 
\medskip

\noindent\textbf{Step 1:~\eqref{eq:trial loc} is an admissible trial state.} It is well-known (see e.g.~\cite[Section~3.2]{LieSei-09}) that 
the ground state energy of the $N$-body Hamiltonian~\eqref{eq:hamil depart}
acting on the unsymmetrized Hilbert space $\gH^{\otimes N}$ coincides with the bosonic ground state energy $E (N)$.  Note that $\Psiloc$ is normalized since $\Psi_- '$ and $\Psi_+ '$ are. Hence~\eqref{eq:trial loc} is an admissible trial state for computing an upper bound on $E (N)$ and the first inequality holds. There remains to evaluate the energy of $\Psiloc$.

\medskip

\noindent\textbf{Step 2: main terms.} To compute the energy we recall 
that~\eqref{eq:ener DM}-\eqref{eq-energy_with_density_matrices} hold with a non-symmetrized 
state $\Psi \in \gH^{\otimes N} $
provided we take as definitions (compare with~\eqref{eq:def DM})
$$ \gammaP ^{(1)} = \sum_{j=1} ^N \tr_{\overline{j}} \left[ |\Psi \rangle \langle \Psi| \right] \quad , \quad 
\gammaP ^{(2)} = \sum_{1 \leq j < k \leq N} \tr_{\overline{\{j,k\} }} \left[ |\Psi \rangle \langle \Psi| \right] \;,
$$
where $\tr_{\overline{j}}$ (respectively $\tr_{\overline{\{j,k\}}}$) stands for the partial trace with respect to all particles but the $j$-th (respectively all particles but the $j$-th and the $k$-th). Another advantage of the unsymmetrized trial 
state~\eqref{eq:trial loc}, apart from the fact that it is normalized when $\Psi_\pm'$ are normalized, 
is that one can easily 
compute its one- and two-body density matrices using these definitions: we easily find 
$$ 
\gamma_{\Psiloc} ^{(1)}= \gamma_{\Psi_-'} ^{(1)} + \gamma_{\Psi_+'} ^{(1)}
\quad , \quad 
\gamma_{\Psiloc} ^{(2)}
= \gamma_{\Psi_- '} ^{(2)}  + \gamma_{\Psi_+ '} ^{(2)}  + \gamma_{\Psi_-'} ^{(1)}\otimes \gamma_{\Psi_+'} ^{(1)} \;.
$$
We insert this in~\eqref{eq:ener DM} and  use $V_N \leq V_N ^\pm$  to obtain
\begin{align}\label{eq:estim ener loc}
\langle \Psiloc | H_N | \Psiloc \rangle & \leq 
\tr \left[(-\Delta + V_N ^- )\gamma_{\Psi_-'} ^{(1)} \right] +
\frac{\lambda}{2(N-1)} \tr \left[w \gamma_{\Psi_-'} ^{(2)} \right]
\nonumber \\
& + \tr \left[(-\Delta + V_N ^+ )\gamma_{\Psi_+'} ^{(1)} \right] 
+ \frac{\lambda}{2(N-1)} \tr \left[w \gamma_{\Psi_+'} ^{(2)} \right]
\nonumber\\
& + \frac{\lambda}{2(N-1)} 
  \tr\left[ w \gamma_{\Psi_- '} ^{(1)}\otimes \gamma_{\Psi_+ '} ^{(1)} \right] \;.
\end{align}
The first two lines  are identical and are estimated as follows
\begin{align*}
\tr \left[(-\Delta + V_N ^\pm )\gamma_{\Psi_\pm '} ^{(1)} \right] + \frac{\lambda}{2(N-1)} \tr \left[w \gamma_{\Psi_\pm '} ^{(2)} \right]
&= \tr \left[(-\Delta + V )\gamma_{\Psi '} ^{(1)} \right] + \frac{\lambda '}{2(N'-1)} \tr \left[w \gamma_{\Psi '} ^{(2)} \right]\\
&= \langle \Psi '| H_{N'} | \Psi ' \rangle \\
& \leq  N' \eH \left( \lambda' \right) + \eB \left( \lambda' \right) + o(1)\;,
\end{align*}
where
the last inequality follows from Proposition~\ref{pro:GSE single well}. 
The last expression gives the desired upper bound in~\eqref{eq:upper bound}, because 
$$\lambda'= \frac{\Delta_N \lambda}{2} = \frac{\lambda}{2} + O (N^{-1})$$
and the discrepancy between $\eB ( \lambda' )$ and $\eB ( \lambda/2 )$
can be easily included in the $o(1)$ term (it is in fact of order $N ^{-1}$, as follows from considerations similar 
to those discussed in Appendix~\ref{sec:app}).
Hence there only remains to estimate the error term on the third line of~\eqref{eq:estim ener loc}, which describes
interactions between the particles in the left well with those in the right well.

\medskip

\noindent\textbf{Step 3: bound on the interactions between particles in different wells.} Let 
$$P_\pm' = | \uHpm^{\lambda'} \rangle \langle \uHpm^{\lambda'}|\: \mbox{ and } \: Q_\pm^{'} = 1 - P_\pm'$$ 
be the orthogonal projectors onto the span of $\uHpm^{\lambda'} = \uH^{\lambda'} ( \cdot \mp \mathbf{x}_N)$ 
and its orthogonal, respectively. 
It follows from the definition of $\Psi_- '$ that
$$ 
P_{-}' \gamma_{\Psi_- '} ^{(1)} P_{-}' = \left( \frac{N}{2} - \tr \left[\gamma^{(1)}_{\PhiBN_-}\right] \right) 
P_{-}'
\quad , \quad 
Q_{-}' \gamma_{\Psi_- '} ^{(1)} Q_{-} ' = \gamma^{(1)}_{\PhiBN_-}\;.
$$ 
Note that, strictly speaking, we only get in the one- and two-body density matrices of 
$\Psi_-$ the contribution from $\PhiBN_- $ living on $j$-particle sectors with $j\leq M$. Using Wick's theorem, one can 
easily see that $\Vert \phiBN _j \Vert^2$ decays very rapidly with $j$, and the contribution from the rest thus yields
 a very small remainder, that we ignore (see similar considerations in Equation~\eqref{eq:decay N bog} below). It is in 
fact sufficient at this stage to notice that 
\begin{equation}\label{eq:diag term}
P_{-}' \gamma_{\Psi_- '} ^{(1)} P_{-}' \leq C N P_{-}'
\quad ,\quad Q_{-} ' \gamma_{\Psi_-'} ^{(1)} Q_{-}' \leq C \gamma^{(1)}_{\PhiBN_-}
\end{equation}
for some constant $C>0$.
Since $\gamma_{\Psi_- '} ^{(1)}$ is a positive self-adjoint operator we have that, as operators,
\begin{equation}\label{eq:off diag term}
 P_{-}' \gamma_{\Psi_- '} ^{(1)} Q_{-}' + Q_{-}' \gamma_{\Psi_- '} ^{(1)} P_{-}' 
 \leq 
  P_{-}' \gamma_{\Psi_- '} ^{(1)} P_{-}' 
+ Q_{-} ' \gamma_{\Psi_- ' } ^{(1)} Q_{-} ' \;.
\end{equation}
Similar formulas holds for $\gamma_{\Psi_+ '} ^{(1)}$.
Writing 
$$ \tr\left[ w \, \gamma_{\Psi_- '} ^{(1)}\otimes \gamma_{\Psi_+' } ^{(1)} \right] 
= \tr\left[ w \, (P_{-}' +Q_{-}')\gamma_{\Psi_-'} ^{(1)} (P_{-}'+Q_{-}') \otimes (P_{+}' + Q_{+}')\gamma_{\Psi_+'} ^{(1)} 
(P_{+}' +Q_{+}') \right]\;,
$$
expanding, inserting~\eqref{eq:diag term} and~\eqref{eq:off diag term}, we thus get the bound
\begin{align}\label{eq:control off site}
 \tr\left[ w \gamma_{\Psi_-'} ^{(1)}\otimes_s\gamma_{\Psi_+'} ^{(1)} \right]
 &\leq 4 C^2 N^2 \tr\left[ w \: P_{-}' \otimes P_{+}' \right] + 
8 C^2 N \tr\left[ w \:  P_{-}' \otimes \gamma_{\PhiBN_+} ^{(1)}\right] \nonumber\\
&+ 4 C^2 \tr\left[ w \:  \gamma_{\PhiBN_-} ^{(1)} \otimes \gamma_{\PhiBN_+} ^{(1)}\right]\;,
\end{align}
where we use that objects in the $V_N^-$ well are the images  of those in the $V_N^+$ well 
under the mirror symmetry $(x^1,x^2,\cdots,x^N) \mapsto  (-x^1,x^2,\cdots, x^N)$ 
to group some terms. 
Since $w$ is bounded and $\gamma_{\PhiBN_-} ^{(1)}$, $\gamma_{\PhiBN_+} ^{(1)}$
are trace-class, the last term is of order one. Recalling that this must be divided by $N-1$ to get the contribution to the 
energy, this is much smaller than the level of precision we aim at. 

For the other two terms we use Lemma~\ref{lem:control pot}: for any trace-class operator $\gamma$ on $\gH$ and any 
$0< \eta<1$, we have 
\begin{align*}
\tr\left[ w \: P_{-}' \otimes \gamma \right] &= \iint_{\R ^d \times \R ^d} |\uHNm (x)| ^2 w(x-y) \gamma (y,y) \D x \D y\\ 
&\leq C_{\eta} \int_{\R ^d} |\uHNm (y)| ^{2-\eta}  \gamma (y,y) \D y
\end{align*}
where we identify $\gamma$ and its kernel. In particular, for the first term of the right-hand side 
of~\eqref{eq:control off site}, we obtain 
$$ \tr\left[ w \: P_{-}' \otimes P_{+}' \right] \leq C_{\eta} \int_{\R ^d} |\uHNm (x)| ^{2-\eta}  |\uHNp (x)| ^2 \D x\;. 
$$
Then we recall that $\uHpm^{\lambda'} = \uHN (.\mp \mathbf{x}_N)$. Using the decay estimate~\eqref{eq:decay Hartree} we have 
$$ \uHm ^{1-\eta} (x) \uHp (x) \leq C_\eps^2 
\exp \Big( -2 (1 - \eps) \big( A (|x-\mathbf{x}_N|) + A ( |x+\mathbf{x}_N| )\big) \Big)\;.
$$
By the same argument as in the proof of \eqref{eq:tunn 1} and by using~\eqref{eq:loc regime},  
we conclude that 
$$N^2 \tr\left[ w \: P_{-}' \otimes P_{+}' \right] \leq
C_\eps' N^2 e^{-4 ( 1 - \eps ) A ( L_N/2)}  \ll 1  
$$
as desired.
Finally, for the second term in the right-hand side of~\eqref{eq:control off site} we write  
\begin{align*}
\tr\left[ w P_{-}' \otimes \gamma_{\PhiBN_+} ^{(1)}\right] 
&\leq C_{\eta} \int_{\R ^d } |\uHNm (x)| ^{2-\eta} \rho_{\PhiBN_+}(x) \D x 
\\
&\leq C_{\eta} \int_{\R ^d} \frac{|\uHNm| ^{2-\eta}}{1 + V_N  ^+ } \: (1+ V_N ^+) \rho_{\PhiBN_+} 
\\
&\leq C_{\eta} \sup_{\R ^d} \frac{|\uHNm| ^{2-\eta}}{1 + V_N  ^+ } \int_{\R ^d} (1+ V_N ^+) \rho_{\PhiBN_+}\;.
\end{align*}
Then, using the decay estimate~\eqref{eq:decay Hartree} again and the fact that $V(r)\to \infty$ as $r \to \infty$, 
 we easily see that 
$$  \sup_{\R ^d} \frac{|\uHNm| ^{2-\eta}}{1 + V_N  ^+ } \to 0 \mbox{ when } N\to \infty$$
whereas Lemma~\ref{lem:decay bog} ensures that 
$$ \int_{\R ^d} (1+ V_N ^+) \rho_{\PhiBN_+} \leq C$$
uniformly in $N$. We thus have 
$$N  \tr\left[ w \:  P_{-}' \otimes \gamma_{\PhiB _+} ^{(1)}\right] \ll N$$
in the limit $N\to \infty,$ which concludes the proof since this term gets divided by $N-1$ in the energy 
expansion \eqref{eq:estim ener loc}.
\end{proof}

\section{Energy lower bound and localization estimate}\label{sec:low bound loc}

In this section, we prove the lower bound corresponding to~\eqref{eq:estim ener loc} by a suitable localization procedure. 
The fluctuations of the number of particles in each well will be estimated in the course of the proof. We 
first split in Sec.~\ref{sec:geom loc} the many-body Hamiltonian into two parts corresponding to the left and right wells. 
For the state of the system we follow the procedure of localization in Fock space presented in~\cite{Lewin-11} 
(see also~\cite[Section 5]{Rougerie-spartacus}) to obtain a lower bound in Sec.~\ref{sec:MF loc} in terms of all the possible ways of 
distributing the particles in the two wells.

\subsection{Geometric localization procedure}\label{sec:geom loc}

Let us first introduce two smooth localization functions, $\chim$ and $\chip$, such that 
$$ \chip ^2 + \chim ^2 = 1$$
and 
\begin{align*}
\supp (\chim) &\subset \left\{ x\in \R ^d \:|\: x^1 \leq \ell \right\} \\
\supp (\chip) &\subset \left\{ x\in \R ^d \:|\: x^1 \geq -\ell \right\} \;,
\end{align*}
where $\ell$ is a localization length satisfying $1\ll \ell \ll L_N$. Clearly, one can assume that
$$
\chip(x^1,x^2,\ldots, x^d)=\chim ( - x^1,x^2,\ldots,x^d) \quad , \quad  
|\nabla \chim | +|\nabla \chip |\leq C\ell ^{-1} \,\1_{ \{| x^1 |\leq \ell\} }
$$
for any $(x^1,\ldots,x^d) \in \R^d$. Next, we define some cut-off functions $\etapm$ along  the $x^1$-direction 
satisfying
$$\etam (x) = 
 \begin{cases}
    0 \;\mbox{ for } x ^1 \leq 0\\
    1 \;\mbox{ for } 0 \leq x ^1 \leq \ell\\
    0 \;\mbox{ for } x^1 \geq 2 \ell.
 \end{cases}
\quad , \quad 
\etap (x) =
 \begin{cases} 
    0 \;\mbox{ for } x ^1 \geq 0 \\
    1 \;\mbox{ for } -\ell \leq x^1 \leq 0  \\
    0 \;\mbox{ for } x^1 \leq - 2 \ell
 \end{cases}
$$
and we consider the two modified potentials
\begin{align}\label{eq:mod pots}
\tilde{V} ^{+}_N &= V_N^{+}  + \left( V_N^{-} - V_N^{+} \right) \etap - |\nabla \chim| ^2 - |\nabla \chip| ^2\nonumber\\
\tilde{V} ^{-}_N &= V_N^{-} + \left( V_N^{+} - V_N^{-} \right) \etam - |\nabla \chim| ^2 - |\nabla \chip| ^2\;.  
\end{align}
Modulo a small perturbation in the strip $\{ x\in \R^d | -2\ell \leq x^1 \leq 2 \ell\}$, 
these two potentials mimic the left and right potentials $V_N^{\pm}$. More precisely, 
\begin{equation}\label{eq:dif pot pert}
0 \leq \delta_N^{\pm} := \frac{V_N ^{\pm} - \tilde{V}_N^\pm}{V_N^{\pm}}  \leq \| \delta_N^\pm \|_\infty \1_{ -2 \ell \leq x ^1 \leq 2 \ell } 
\end{equation}
and it is easy to show that $\| \delta_N^\pm \|_\infty \to 0$ as $L_N \to \infty$ when $V(x) = |x|^s$. 
We have the simple lemma

\begin{lemma}[\textbf{Localizing the Hamiltonian}]\label{lem:loc hamil}\mbox{}\\
Let $\HT^\pm $ be the one-body Hamiltonian with the modified potential  \eqref{eq:mod pots},
$$ 
\HT^\pm := -\Delta + \tilde{V} ^{\pm}_N\;.
$$
For any $\Psi \in \gH ^N$, one has
\begin{equation}\label{eq:split tot ener}
\langle \Psi | H_N | \Psi \rangle 
\geq \sum_{s=\pm} 
\left(
\tr_{\gH} \left[ \HT^{s} \: \chi_{s} \gammaP ^{(1)} \chi_{s} \right] 
+ \frac{\lambda}{2(N-1)}\tr_{\gH ^2} \left[ w \: \chi_{s} ^{\otimes 2} \gammaP ^{(2)} \chi_{s}^{\otimes 2}\right] 
\right)\;.
\end{equation}
\end{lemma}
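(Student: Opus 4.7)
The plan is a standard IMS-type localization at the one-body level, combined with positivity of $w$ at the two-body level. I will first establish the operator identity
\begin{equation*}
-\Delta + V_N \;=\; \chi_+ \HT^+ \chi_+ \,+\, \chi_- \HT^- \chi_-
\end{equation*}
on the one-body Hilbert space $\gH$, and then exploit $w\ge 0$ to drop the cross terms in the interaction.

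For the one-body identity, I start from the IMS localization formula for the Laplacian associated with the partition of unity $\chi_+^2+\chi_-^2=1$, namely
\begin{equation*}
-\Delta \;=\; \chi_+(-\Delta)\chi_+ + \chi_-(-\Delta)\chi_- - |\nabla\chi_+|^2 - |\nabla\chi_-|^2,
\end{equation*}
whose proof is a direct computation using $\chi\Delta\chi = \tfrac12\Delta(\chi^2)-|\nabla\chi|^2$ and $\Delta(\chi_+^2+\chi_-^2)=0$. Since $V_N$ is a multiplication operator, $V_N=\chi_+V_N\chi_++\chi_-V_N\chi_-$, so
\begin{equation*}
-\Delta+V_N \;=\; \chi_+\bigl(-\Delta+V_N\bigr)\chi_+ + \chi_-\bigl(-\Delta+V_N\bigr)\chi_- - |\nabla\chi_+|^2-|\nabla\chi_-|^2.
\end{equation*}
Using $\chi_+^2+\chi_-^2=1$ once more to redistribute the gradient terms, this becomes
\begin{equation*}
\chi_+\bigl(-\Delta+V_N-|\nabla\chi_+|^2-|\nabla\chi_-|^2\bigr)\chi_+ + \chi_-\bigl(-\Delta+V_N-|\nabla\chi_+|^2-|\nabla\chi_-|^2\bigr)\chi_-.
\end{equation*}
It remains to check that on $\supp\chi_+$ we have $V_N-|\nabla\chi_+|^2-|\nabla\chi_-|^2=\Vtp$, and similarly for $-$. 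Recall that $V_N=\min(V_N^+,V_N^-)$ equals $V_N^+$ for $x^1\ge 0$ and $V_N^-$ for $x^1\le 0$. On $\supp\chi_+\cap\{x^1\ge 0\}$ we have $\eta_+=0$, so $\Vtp=V_N^+-|\nabla\chi_+|^2-|\nabla\chi_-|^2 = V_N-|\nabla\chi_+|^2-|\nabla\chi_-|^2$ as required. On $\supp\chi_+\cap\{-\ell\le x^1\le 0\}$ we have $\eta_+=1$, so $\Vtp = V_N^+ + (V_N^--V_N^+)-|\nabla\chi_+|^2-|\nabla\chi_-|^2=V_N^- -|\nabla\chi_+|^2-|\nabla\chi_-|^2$, and again this equals $V_N-|\nabla\chi_+|^2-|\nabla\chi_-|^2$. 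The symmetric statement holds on $\supp\chi_-$, and the one-body identity is established.

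For the interaction part I use the key fact $w\ge 0$. Expanding $\1=\chi_+(x_i)^2+\chi_-(x_i)^2$ and $\1=\chi_+(x_j)^2+\chi_-(x_j)^2$, the multiplication operator $w(x_i-x_j)$ splits into four non-negative pieces, so
\begin{equation*}
w(x_i-x_j) \;\ge\; \sum_{s=\pm}\chi_s(x_i)\chi_s(x_j)\,w(x_i-x_j)\,\chi_s(x_i)\chi_s(x_j),
\end{equation*}
where the cross terms $s\ne t$ have been dropped using their non-negativity. Summing over $1\le i<j\le N$, taking expectation in $\Psi$, and rewriting the result through the two-body density matrix $\gammaP^{(2)}$ gives
\begin{equation*}
\frac{\lambda}{2(N-1)}\,\tr\bigl[w\,\gammaP^{(2)}\bigr] \;\ge\; \sum_{s=\pm}\frac{\lambda}{2(N-1)}\,\tr_{\gH^2}\bigl[w\,\chi_s^{\otimes 2}\gammaP^{(2)}\chi_s^{\otimes 2}\bigr].
\end{equation*}
Combining this with the one-body identity (applied particle by particle and rewritten via $\gammaP^{(1)}$ by cyclicity of the trace, $\tr[\chi_s\HT^s\chi_s\gammaP^{(1)}]=\tr[\HT^s\chi_s\gammaP^{(1)}\chi_s]$) yields~\eqref{eq:split tot ener}.

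The bookkeeping is entirely routine; the only delicate point is the verification that the $\eta_\pm$-correction in $\Vtpm$ is tailored to exactly compensate the jump between $V_N^+$ and $V_N^-$ on the overlap region $\{|x^1|\le\ell\}$ where $\chi_+$ and $\chi_-$ both live, which is what makes the one-body relation an honest identity rather than merely an inequality.
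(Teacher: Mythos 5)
Your proof is correct and follows essentially the same route as the paper's: the IMS localization formula combined with the pointwise identity $V_N = V_N^{\pm} + (V_N^{\mp}-V_N^{\pm})\eta_{\pm}$ on $\supp\chi_{\pm}$ gives the exact one-body decomposition $-\Delta+V_N=\chi_+\HT^+\chi_++\chi_-\HT^-\chi_-$, and positivity of $w$ lets you drop the cross interaction terms before rewriting everything through $\gammaP^{(1)}$, $\gammaP^{(2)}$ and cyclicity of the trace. One notational caveat: the symbol $\Vtpm$ denotes in the paper the tunneling potential of~\eqref{eq:tunneling pot}, whereas the object you verify is the modified potential $\tilde V_N^{\pm}$ of~\eqref{eq:mod pots}; the verification itself is the right one.
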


\begin{proof}
 We split the one-body Hamiltonian using the IMS formula~\cite[Theorem~3.2]{CycFroKirSim-87}
$$ - \Delta = \chim (-\Delta) \chim + \chip (-\Delta) \chip - |\nabla \chim| ^2 - |\nabla \chip| ^2\;.$$
Using also  $\chip^2+\chim^2=1$ and 
$$
V_N (x) = 
V_N^{\pm}(x) + \left( V_N^{\mp} - V_N^{\pm} \right) (x) \eta_{\pm} (x) \;\text{ if }\; x \in \supp ( \chipm)\;,
$$
this yields
\begin{align}\label{eq:split one body}
-\Delta + V_N 
& = \chim \HT^{-} \chim + \chip \HT^{+} \chip\;.
\end{align}
As for the two-body part we note that since $w\geq 0$ we have, for all $\Psi \in \gH ^2$,
\begin{align*}
\langle \Psi, w \Psi\rangle 
&= \iint \left( \chip ^2 (x) + \chim ^2 ( x)\right) \left( \chip ^2 (y) + \chim ^2 (y) \right) w(x-y) |\Psi (x,y)| ^2 \D x \D y\\
&\geq
\iint  \chim ^2 (x) \chim ^2 (y)  w(x-y) |\Psi (x,y)| ^2 \D x \D y +
\iint  \chip ^2 (x) \chip ^2 (y)  w(x-y) |\Psi (x,y)| ^2 \D x \D y 
\end{align*}
and thus, as an operator on the two-body space, 
\begin{equation}\label{eq:split two body}
w \geq \chim ^{\otimes 2} w \chim ^{\otimes 2} + \chip ^{\otimes 2} w \chip ^{\otimes 2}.
\end{equation}
Inserting this into the expressions~\eqref{eq-energy_with_density_matrices}  of the energies
and using the cyclicity of the trace, we get~\eqref{eq:split tot ener}.
\end{proof}

Now we want to see the localized density matrices $\chipm ^{\otimes 2} \gammaP ^{(2)} \chipm^{\otimes 2}$ as the reduced density 
matrices of two states living on the Fock spaces $\cF (\chipm  \gH)$. This is a well-known procedure, recalled 
in~\cite[Section 3]{Lewin-11} and~\cite[Chapter~5]{Rougerie-LMU,Rougerie-spartacus} It is used repeatedly in~\cite{LewNamRou-14,LewNamRou-14c}. To any $N$-body state 
$\Gamma = |\Psi \rangle \langle \Psi |$ (this applies to mixed states also) we associate some localized states 
$G ^-$ and $G ^+ $ in the Fock space 
$\cF(\gH)=\C\oplus\gH\oplus\gH^2\oplus\cdots$,
of the form
\begin{equation}
G ^{\pm} = G_{0}^ {\pm} \oplus G_{1}^ {\pm} \oplus\cdots\oplus G_{N}^ {\pm} \oplus0\oplus\cdots \;,
\label{eq:def_localization}
\end{equation}
with the crucial property that their reduced density matrices satisfy (here we use the convention 
$\tr_{n+1\to n} [ G_n^\pm] = G_n^\pm$)
\begin{equation}
\chipm ^{\otimes n} \gammaP^{(n)} \chipm ^{\otimes n} = \left(G ^{\pm}\right)  ^{(n)} := \sum_{k=n}^N \frac{k!}{(k-n)!}\tr_{n+1\to k}\left[G^{\pm}_{k}\right]\;,
\label{eq:localized-DM} 
\end{equation}
where for any $1 \leq n \leq N$, $\gammaP^{(n)}$ is the $n$-body reduced density matrix of $\Psi \in \gH ^N$ normalized as 
in~\eqref{eq:def DM} and $\gammaP^{(0)}=\chi_{\pm}^{\otimes 0}= 1$.

The relations \eqref{eq:localized-DM} determine the localized states $G^\pm$ uniquely and they ensure that $G ^-$ and $G ^+$ 
are (mixed) states on the Fock spaces $\cF (\chim  \gH)$ and $\cF (\chip \gH)$, respectively:
\bq\label{eq:nomalization-localized-state}
\sum_{k=0}^N \tr \left[ G_{k}^-\right] = \sum_{k=0}^N \tr \left[ G_{k}^+\right]=1.
\eq
An important property is that 
\begin{equation}\label{eq:geom relation}
\tr_{\gH^k} [G^-_{k}] = \tr_{\gH^{N-k}} [G^{+}_{N-k}] \;\mbox{ for all } \; k=0,\ldots,N,
\end{equation}
that is, the probability of having $k$ particles $\chim$-localized is equal to the probability of having $N-k$ particles 
$\chip$-localized.

\medskip

Let us now anticipate a little bit on the forthcoming energy lower bounds. Using the previous constructions, they will be expressed 
in terms of all the possible ways of distributing $n$ particles in one well and $N-n$ particles in the other well. The energy of such 
a configuration will be bounded from below by applying the expansion of Proposition~\ref{pro:GSE single well}, leading to an approximate 
value in terms of $R(n)$ and $R(N-n)$ where
\begin{equation}\label{eq:energy n}
R (n) := n \eH  \left(\lambda\frac{n-1}{N-1} \right) + \eB \left(\lambda \frac{n-1}{N-1}\right) 
\end{equation}
is (to subleading order) the energy of $n$ particles in one well. The key estimate allowing to conclude the proof is contained in Proposition~\ref{lem:distri part}, see Appendix~\ref{sec:app 2},
which confirms that it is more favorable to distribute the particles evenly between the two wells.

\subsection{Lower bound and corollaries}\label{sec:MF loc}

We now complete the proof of the energy estimate in~Theorem~\ref{thm:main loc}:

\medskip

\noindent\textbf{Step 1: splitting the energy.}  Let us define the $n$-body Hamiltonians
\begin{eqnarray}
\nonumber
\tilde{H}_n ^{\pm} & := & 
\sum_{j=1} ^n \left(-\Delta_j + \tilde{V} ^{\pm}_N (x_j) \right)+ \frac{\lambda }{N-1} \sum_{1\leq i< j \leq n} w (x_i-x_j)
\\
\label{eq:hamil pm bis}
H_n ^{\pm} & := & 
\sum_{j=1} ^n \left(-\Delta_j + {V} ^{\pm}_N (x_j) \right)+ \frac{\lambda }{N-1} \sum_{1\leq i< j \leq n} w (x_i-x_j)\;.
\end{eqnarray}
Combining~\eqref{eq:split tot ener} and~\eqref{eq:localized-DM} we obtain the lower bound 
\begin{align}\label{eq:split low bound}
\langle \Psi | H_N | \Psi \rangle \geq \sum_{n=1} ^N \tr_{\gH ^n} \left[ \tilde{H}_n ^{+} G_n ^{+} +  \tilde{H}_n ^{-} G_n ^{-}\right]. 
\end{align}
The rationale in the following is to apply a mean-field approximation
 in each term of the sum in the right-hand side of~\eqref{eq:split low bound} and 
to approximate the 
Hartree energies for the perturbed Hamiltonians $\tilde{H}_n^\pm$ 
by those of the unperturbed ones ${H}_n^\pm$, relying on the considerations of Section~\ref{sec:pert well}.

\medskip

\noindent \textbf{Step 2: mean-field approximation and a-priori bound.} 
We first perform a mean-field approximation in each term of the sum in the right-hand side of~\eqref{eq:split low bound}. 
We regard the operators $\tilde{H}_n ^{\pm}$ as $n$-body Hamiltonians in mean-field scaling with 
effective $n$-dependent coupling constant 
$$
\lambda_n = \lambda \frac{n-1}{N-1}\;.
$$ 
Let $\uHtpm^{\lambda_n}$ be the (unique) minimizer with unit $L^2$-norm
 of the energy functionals $\EHtpm^{\lambda_n} [u]$ obtained by replacing $V$ by 
the perturbed potentials $\tilde{V}^{\pm}_N$ and $\lambda$ by $\lambda_n $
in \eqref{eq:Hartree func}, and  let $\eHtt (\lambda_n)$ be the corresponding Hartree energy
(recall that these energies
are the same for the left and right potentials wells because
$\tilde{V}^{-}_N$ and $\tilde{V}^{+}_N$ are related to each other by the mirror symmetry
$(x^1,x^\bot) \mapsto (-x^1,x^\bot)$).
For $n=1$, $\lambda_n=0$ and  $\uHtpm^{0}$ and $\eHtt (0)= \inf \{ {\rm spec} ( \tilde{h}_\pm)\}$ 
denote respectively the 
ground state and lowest eigenvalue of the one-body Hamiltonian $\tilde{h}_\pm = -\Delta + \tilde{V}^\pm_N$.
Applying Proposition~\ref{pro:bog low} to the functionals  $\EHtpm^{\lambda_n}  [u]$ and recalling~\eqref{eq:hamil fock}, we may 
bound from below each term in the sum of \eqref{eq:split low bound} by
$$
\big(  n \eHtt  \left(\lambda_n \right) - C^2 \big) \big( \tr [G_n ^+] + \tr [G_n ^-] \big) 
 + C \left\langle \dGamma (\tilde{h}_+^\bot) \right\rangle_{G_n ^+} + C \left\langle \dGamma (\tilde{h}_-^\bot) \right\rangle_{G_n ^-}
\;,
$$
where 
$$
\left\langle \dGamma (\tilde{h}_\pm^\bot ) \right\rangle_{G_n ^\pm} 
:= \tr_{\gH^n} \left[ \dGamma (\tilde{h}_\pm^\bot ) \, G_n ^\pm  \right]\;,
$$
$\dGamma  (\tilde{h}_\pm^\bot )$ being the second quantized operator corresponding to the one-body Hamiltonian 
$$ \tilde{h}_\pm^\bot := \tilde{P}^\perp_\pm \left(-\Delta + \tilde{V}^{\pm}_N \right) \tilde{P}^\perp_\pm$$
with $\tilde{P}^\perp_\pm$ the orthogonal projector onto $\{ \uHtpm^{\lambda_n} \} ^{\perp}$.
In view of \eqref{eq:dif pot pert}, one can apply Proposition~\ref{pro:comp Hartree min} 
to the perturbed potentials $\tilde{V}^\pm_N$ to conclude that
$\eHtt ( \lambda_n)$  
is very close to the Hartree energies $\eH (\lambda_n)$ for the unperturbed potentials $V_N^\pm$, with errors of the order
of $|T_N|^{1-\eta}$.
Hence, using also~\eqref{eq:nomalization-localized-state} and~\eqref{eq:geom relation}, we obtain from \eqref{eq:split low bound}
\begin{align} \label{eq-lower_bound_step2}
\nonumber
\langle \Psi | H_N | \Psi \rangle &\geq 
\sum_{n=0} ^N \Big( n \eH  \left(\lambda_n \right) + (N-n) \eH \left(\lambda_{N-n} \right) \Big) \tr [G_n ^+] 
 \\ 
&
+ C \sum_{n=1} ^N 
\left( \left\langle \dGamma (\tilde{h}_+^\bot) \right\rangle_{G_n ^+} + \left\langle \dGamma (\tilde{h}_-^\bot) 
\right\rangle_{G_n ^-} \right)
- C^2 + O ( N | T_N|^{1-\eta}) 
\end{align}
for any $0< \eta <1$.
By Proposition~\ref{cor:tunneling},
$|T_N|^{1-\eta} \ll C_\eta^{1-\eta} e^{-2(1-2 \eta) A ( L_N/2)}$, so that 
$N | T_N|^{-1-\eta}$ converges to zero in the limit~\eqref{eq:loc regime}.
Thus the term $ O ( N | T_N|^{1-\eta}) $ can be absorbed in the constant $C^2$.

The quantity inside the parenthesis in the first line of \eqref{eq-lower_bound_step2} 
gives for large $n$ and $(N-n)$ the ground state energy 
 when one distributes $n$ particles in the left potential well $V_N^-$ and $(N-n)$ particles in the right 
potential well $V_N^+$, the two wells being infinitely
far apart (so that particles in different wells do not interact). 
It is shown in Appendix~\ref{sec:app 2} that  it is more favorable to distribute the particles evenly between the two wells: 
We have for any $n=0,\ldots,  N$,
$$
n \eH ( \lambda_n) + (N-n) \eH ( \lambda_{N-n} ) \geq  N \eH \left( \lambda_{\frac{N}{2}} \right) 
= N \eH \left( \Delta_N \frac{\lambda}{2} \right)
$$
with $\Delta_N$ given by \eqref{eq:delta N}.
Thus, using~\eqref{eq:nomalization-localized-state} again,
\begin{align} \label{eq-proof_a_priori_bound}
\langle \Psi | H_N | \Psi \rangle &\geq N \eH \left( \Delta_N \frac{\lambda}{2} \right) 
+ C \sum_{n=1} ^N \left( 
\left\langle \dGamma (\tilde{h}_+^\bot) \right\rangle_{G_n ^+} + \left\langle \dGamma (\tilde{h}_-^\bot) \right\rangle_{G_n ^-}
\right)
- C' \;.  
\end{align}
Choosing $\Psi$ to be the ground state of $H_N$ and 
combining  with the energy upper bound of Proposition~\ref{pro:loc up bound}, 
we obtain the leading term of the large $N$ expansion of $E_N$,
$$
E_N =  N \eH \left( \Delta_N \frac{\lambda}{2} \right) + O (1)\;,
$$
together with
the following a priori bound  that  will be used below in the estimate of the next-to-leading order terms:
\begin{equation}\label{eq:fluctu number}
0 \leq \sum_{n=1} ^N \left\langle \dGamma (\tilde{h}_+^\bot) \right\rangle_{G_n ^+} 
+  \sum_{n=1} ^N \left\langle \dGamma (\tilde{h}_-^\bot) \right\rangle_{G_n ^-} \leq C''\;.
\end{equation}

\medskip

\noindent \textbf{Step 3: error made by removing the tildes in the lower bound \eqref{eq:split low bound}.}
We now use the a priori bound \eqref{eq:fluctu number} to show that one can replace $\tilde{H}_n^\pm$ by $H_n^\pm$  
in \eqref{eq:split low bound}, making a small error. We first notice that according to \eqref{eq:localized-DM} and
\eqref{eq:hamil pm bis},
$$
\sum_{n=1}^N \tr \left[ ( H_n^+ - \tilde{H}^+_n ) G_n^+ \right]
 =  \tr_{\gH} \left[ \delta_N^+ V_N^+  (G^+)^{(1)} \right]\;,
$$
where $\delta_N^+$ is defined in \eqref{eq:dif pot pert}.
Projecting onto the subspace generated by $\uHtp$  
and its orthogonal, the last trace can be expressed  as a sum of three terms,
$$
\tr_{\gH} \left[ \tilde{P}_+ \delta_N^+ V_N^+ \tilde{P}_+ (G^+)^{(1)} \right]
+ 2 \re \tr_{\gH} \left[ \tilde{P}_+ \delta_N^+ V_N^+ \tilde{P}_+^\bot (G^+)^{(1)} \right]
+  \tr_{\gH} \left[ \tilde{P}_+^\bot \delta_N^+ V_N^+ \tilde{P}_+^\bot (G^+)^{(1)} \right]
\;.
$$
Since 
$$0 \leq (G^+)^{(1)}= \chip \gamma_\Psi^{(1)} \chip \leq  \tr \left[ \chip \gamma_\Psi^{(1)} \chip \right] $$
and $\tr \gamma_\Psi^{(1)} =  N$, the first term is bounded for any $0<\eta <1$ by
$$
0 \leq \tr_{\gH} \left[ \tilde{P}_+ \delta_N^+ V_N^+ \tilde{P}_+ (G^+)^{(1)} \right] 
\leq N  \int_{\{ |x^1| \leq 2 \ell\}} \big| \uHtp |^2 \delta_N^+ V_N^+
 = O ( N |T_N|^{1-\eta})
$$
by virtue of Lemma~\ref{lem:pert enr}. Thus this term converges to zero
in the limit~\eqref{eq:loc regime}.
One deals with the third term by using the identity 
$$
0 \leq \delta_N^+ V_N^+ = \tilde{\delta}_N^+ \tilde{V}_N^+ \; \mbox{ with } \; \tilde{\delta}_N^+ = \frac{\delta_N^+}{1-\delta_N^+}
\;.$$
This gives 
\begin{eqnarray*}
0 \leq \tr_{\gH} \left[ \tilde{P}_+^\bot \delta_N^+ V_N^+ \tilde{P}_+^\bot (G^+)^{(1)} \right] 
& \leq & 
\| \tilde{\delta}_N^+ \|_\infty 
 \tr_{\gH} \left[ \tilde{P}_+^\bot \tilde{V}_N^+ \tilde{P}_+^\bot (G^+)^{(1)} \right] 
\\
& \leq &
\| \tilde{\delta}_N^+ \|_\infty 
 \tr_{\gH} \left[ \tilde{h}_+^\bot (G^+)^{(1)} \right] 
\\
& = & 
\| \tilde{\delta}_N^+ \|_\infty \sum_{n=1}^N \left\langle \dGamma ( \tilde{h}_+^\bot ) 
\right\rangle_{G_n^+} \;,
\end{eqnarray*}
where we used $-\Delta \geq 0$ in the second inequality.
Since $\| \delta_N^+ \|_\infty \to 0$ this term converges to zero too, 
thanks to the a priori bound \eqref{eq:fluctu number}.
Finally, the second term can be treated similarly because the Cauchy-Schwarz inequality gives
\begin{eqnarray*}
& & \left| \tr_{\gH} \left[ \tilde{P}_+ \delta_N^+ V_N^+ \tilde{P}_+^\bot (G^+)^{(1)} \right] \right|^2
\leq 
\tr_{\gH} \left[ \tilde{P}_+ \delta_N^+ V_N^+ \tilde{P}_+ \big( G^+ \big)^{(1)}  \right] 
 \tr_{\gH} \left[ \tilde{P}_+^\bot \delta_N^+ V_N^+  \tilde{P}_+^\bot \big( G^+ \big)^{(1)} \right] 
 \;.
\end{eqnarray*}
Hence
$$
\sum_{n=1}^N \tr \left[ ( H_n^+ - \tilde{H}^+_n ) G_n^+ \right]
= o (1) \;.
$$
The proof for
$H_n^-$ and $\tilde{H}^-_n$ is the same.

\medskip

\noindent \textbf{Step 4: mean-field approximation for the localized energies.} Since we have shown that 
we can discard  
the discrepancy between the original and perturbed functionals, the energy lower 
bound~\eqref{eq:split low bound} yields:
\begin{align}\label{eq:split low bound bis}
\langle \Psi | H_N | \Psi \rangle \geq \sum_{n=1} ^N \tr_{\gH ^n} \left[ H_n ^{+} G_n ^{+} + H_n ^{-} G_n ^{-}\right] 
+ o(1)\;. 
\end{align}
We can now  apply to each of the $n$-body Hamiltonians $H_n^\pm$ 
the bound~\eqref{eq:bog low} of Proposition~\ref{pro:bog low}, which 
includes the corrections to the Hartree energies given by Bogoliubov's theory.
We denote by
\begin{equation}\label{eq:pert part number}
  \Nmpe := \N - a^* (\uHm^{\lambda_n}) a (\uHm^{\lambda_n}) \quad , \quad \Nppe := \N - a^* (\uHp^{\lambda_n}) a (\uHp^{\lambda_n})
\end{equation}
the operators counting the number of particles orthogonal to $\uHm^{\lambda_n}$ and $\uHp^{\lambda_n}$, respectively, with 
$\N$ the total particle number operator. Thus we get
\begin{align*}
 \langle \Psi | H_N | \Psi \rangle &
\geq \sum_{n=1} ^N \Big( n \eH  \left( \lambda_{n} \right) + \eB \left(\lambda_{n} \right) 
- C n ^{-2/5}\Big) \Big( \tr [G_n ^+] + \tr [G_n ^-] \Big)
 \\&+ \sum_{n=1} ^N \frac{1}{n} \left( \left\langle \left(\Nmpe \right)^2 \right\rangle_{G_n ^-} + \left\langle \left(\Nppe \right) ^2 \right\rangle_{G_n ^+}\right) + o(1) \;,
\end{align*}
where  $\eH (0)= \inf \{ \spec ( -\Delta + V_N) \}$ and $\eB (0)=0$ in the term $n=1$.
We next use as before the relation~\eqref{eq:geom relation} to reduce this to 
\begin{align*}
  \langle \Psi | H_N | \Psi \rangle &\geq \sum_{n=0} ^N \left(  E_{n,N-n}^\localized - C  \left( 
n ^{-2/5} \1_{ \{ n>0\} }  +  (N-n) ^{-2/5} \1_{ \{ n<N\} }\right) \right) \tr [G_n ^+] 
  \\&+ \sum_{n=1} ^N \frac{1}{n} \left( \left\langle \left(\Nmpe \right)^2 \right\rangle_{G_n ^-} + \left\langle \left(\Nppe \right) ^2 \right\rangle_{G_n ^+}\right) + o (1)\;,
\end{align*}
where
\begin{equation} \label{eq-energy_localized_distributions}
E_{n,N-n}^\localized = n \eH  \left( \lambda_n  \right) + \eB \left(\lambda_{n} \right)
+ (N-n) \eH  \left( \lambda_{N-n} \right) + \eB \left(\lambda_{N-n} \right)
\end{equation}
is the ground state energy up to $o(1)$ in the case of infinitely far apart wells with 
$n$ particles in the left well and $(N-n)$ particles in the right well (here we set $\eB (\lambda_n): =0$ for $n=0$). 
As before, the energy is minimized by choosing the same number $n=N/2$  of particles
in each well. More precisely, one has (see Proposition~\ref{lem:distri part}  in Appendix~\ref{sec:app 2}) 
\begin{equation}\label{eq:distri even_bis}
E_{n,N-n}^\localized \geq  E_{\frac{N}{2},\frac{N}{2}}^\localized + \frac{C}{N} \left( n-\frac{N}{2}\right)^2
\;,
\end{equation}
so that by ~\eqref{eq:nomalization-localized-state},
\begin{align*} 
 \nonumber
 \langle \Psi | H_N | \Psi \rangle &\geq E_{\frac{N}{2},\frac{N}{2}}^\localized 
+ \frac{C}{N} \sum_{n=0} ^N \left( \left(n-\frac{N}{2}\right) ^2 
-  N n ^{-2/5}  \1_{ \{ n>0\} }  -  N (N-n) ^{-2/5}  \1_{\{ n<N\} }  \right) \tr [G_n ^+] 
  \\&+ \sum_{n=1} ^N \frac{1}{n} \left( \left\langle \left(\Nmpe \right)^2 \right\rangle_{G_n ^-} 
+ \left\langle \left(\Nppe \right) ^2 \right\rangle_{G_n ^+}\right) + o (1)\;.
\end{align*}
Going back to~\eqref{eq-energy_localized_distributions} and using $\eB ( \lambda_{N/2}) = \eB (\lambda/2) +  o(1)$,
we see that the term $ E_{\frac{N}{2},\frac{N}{2}}^\localized$ yields the desired 
first two terms 
in~\eqref{eq:ener lim loc}. To complete the energy lower bound, it thus suffices to
 notice that, for $N$ large enough and any $n= 0, \ldots, N$
$$ 
\left( n-\frac{N}{2}\right) ^2 - N n ^{-2/5}  \1_{ \{ n>0\} }  - N  (N-n) ^{-2/5}  \1_{\{ n<N\} }  \geq \onehalf \left(n-\frac{N}{2}\right) ^2 
- c N ^{3/5}
$$
for some $c>0$, so that 
\begin{align} \label{eq-proof_step4} 
\nonumber
 \langle \Psi | H_N | \Psi \rangle &
\geq N \eH \left( \Delta_N \frac{\lambda}{2} \right) + 2 \eB \left( \frac{\lambda}{2} \right)  + o (1)
  \\
& 
 + \frac{C}{2 N} \sum_{n=1} ^N \left(n-\frac{N}{2}\right) ^2  \tr [G_n ^+] 
+ \sum_{n=1} ^N \frac{1}{n} \left( \left\langle \left(\Nmpe \right)^2 \right\rangle_{G_n ^-} 
+ \left\langle \left(\Nppe \right) ^2 \right\rangle_{G_n ^+}\right)\;.
\end{align}
The energy lower bound 
follows by discarding  the terms on the last line, which are positive. Choosing $\Psi$ in \eqref{eq-proof_step4} to be the ground state of $H_N$ and
combining with the energy upper bound proved in Proposition~\ref{pro:loc up bound}, 
we get as by-products
\begin{equation}\label{eq:fluctu all n}
\sum_{n=1} ^N  \left( n-\frac{N}{2}\right) ^2 \tr [G_n ^+] \ll N 
\end{equation}
and, since $n\leq N$, 
\begin{equation}\label{eq:fluctu N/2}
\sum_{n=1} ^N \left( \left\langle \left(\Nmpe \right)^2 \right\rangle_{G_n ^-} + \left\langle \left(\Nppe \right) ^2 \right\rangle_{G_n ^+}\right) \ll N\;. 
\end{equation}
These estimates provide the control of particle number fluctuations announced in Theorem~\ref{thm:main loc}, as we discuss next.

\subsection{Control of fluctuations}\label{sec:fluctu}

We now conclude the proof of~\eqref{eq:sup fluctu}, using the estimates \eqref{eq:fluctu all n} and 
\eqref{eq:fluctu N/2}. The two terms in the right-hand side of~\eqref{eq:sup fluctu} are estimated similarly, 
let us discuss only one of them. Let us set
\begin{align*}
 \N_- &= a^* \left( \uHm\right) a \left( \uHm\right)\\
 \N_{\chim} &= a^* \left(\chim\uHm\right) a \left(\chim\uHm\right)\;.
\end{align*}
From \eqref{eq:other def DM} and the definition~\eqref{eq:localized-DM}  of 
the localized state $G ^-$, we have
$$
\Big\langle a^\ast ( \chim u )^n a ( \chim v )^n \Big\rangle_{\Psi_N}
= (n!)^{-1} \Big\langle v^{\otimes n} , (G_{-})^{(n)} u^{\otimes n} \Big\rangle 
= \Big\langle a^\ast (u)^n a (v)^n \Big\rangle_{G^{-}}
$$
for any $n=1,\cdots, N$ and $u,v \in \gH$,
so that
$$
\left\langle  \left( \N_{\chim} - \frac{N}{2}\right) ^2 \right\rangle_{\Psi_N}
 = 
\left\langle  \left( \N_- - \frac{N}{2}\right) ^2 \right\rangle_{G ^-} 
+ \Big( \big\| \chim \uHm \big\|_{L^2(\R^d)}^2 - 1 \Big) 
\Big\langle \N_-  \Big\rangle_{G ^-}
\;.
$$
By~\eqref{eq:pert part number}, 
 the operator inequality $(A+B)^2 \leq 2 A^2 + 2 B^2$, and 
$\int \chim^2 |\uHm|^2 \leq 1$, it follows that
$$
\left\langle  \left( \N_{\chim} - \frac{N}{2}\right) ^2 \right\rangle_{\Psi_N}
 \leq  
2 \left\langle  \left(\N - \frac{N}{2}\right) ^2 \right\rangle_{G ^-} 
+ 2 \left\langle \left( \Nmpe \right)^2  \right\rangle_{G ^-}\;.
$$
Recalling the decomposition~\eqref{eq:def_localization} and using ~\eqref{eq:fluctu all n} and~\eqref{eq:fluctu N/2} this gives
\begin{align}\label{eq:fluctu pre final}
\left\langle  \left( \N_{\chim} - \frac{N}{2}\right) ^2 \right\rangle_{\Psi_N}
& \leq  2 \sum_{n=1} ^N  \left( n-\frac{N}{2}\right) ^2 \tr [G_n ^-] 
+ 2 \sum_{n=1} ^N \left\langle \left(\Nmpe \right)^2 \right\rangle_{G_n ^-} \nonumber\\
&\ll  
N \;.
\end{align}
To conclude the proof of~\eqref{eq:sup fluctu}, there only remains to remove the cut-offs function $\chim$. To this end we prove the following simple lemma 

\begin{lemma}[\textbf{Removing cut-offs functions}]\label{lem:remove chi}\mbox{}\\
For any $N$-body bosonic state  $\Gamma_N$, one can find a constant $C>0$ such that  
\begin{equation}\label{eq:dif number}
\left| \left\langle \N_- \right\rangle_{\Gamma_N} - \left\langle \N_{\chim} \right\rangle_{\Gamma_N} \right| \leq  C N \int_{\R ^d} \chip ^2 |\uHm| ^2
\end{equation}
and
\begin{equation}\label{eq:dif number squared}
\left| \left\langle \N_-  ^2 \right\rangle_{\Gamma_N} - \left\langle \N_{\chim} ^2 \right\rangle_{\Gamma_N} \right|  \leq C N^2 \int_{\R ^d} \chip ^2 |\uHm| ^2.
\end{equation}
\end{lemma}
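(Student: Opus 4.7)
The plan is to introduce the $L^2$ decomposition $\uHm = \chim \uHm + r$ with $r := (1-\chim)\uHm$. The elementary pointwise identity $\chip^2 - (1-\chim)^2 = (1-\chim^2) - (1-\chim)^2 = 2\chim(1-\chim) \geq 0$, valid since $0 \leq \chim \leq 1$, gives
\[
\|r\|_{L^2(\R^d)}^2 = \int_{\R^d} (1-\chim)^2 |\uHm|^2 \leq \int_{\R^d} \chip^2 |\uHm|^2\;,
\]
so the $L^2$-norm of the remainder $r$ is controlled by the integral appearing on the right-hand side of the lemma.

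For the first estimate \eqref{eq:dif number}, we would use bilinearity of $(u,v)\mapsto a^\ast(u) a(v)$ to write
\[
\N_- - \N_{\chim} = a^\ast(r)\, a(\uHm) + a^\ast(\chim \uHm)\, a(r)\;,
\]
take the expectation in $\Gamma_N$, and estimate each of the two cross terms by the Cauchy--Schwarz inequality in Fock space, $|\langle a^\ast(u) a(v)\rangle_{\Gamma_N}| = |\langle v , \gamma^{(1)}_{\Gamma_N} u\rangle| \leq \|\gamma^{(1)}_{\Gamma_N}\|_{\mathrm{op}} \|u\|\|v\|$, combined with the a priori operator bound $\|\gamma^{(1)}_{\Gamma_N}\|_{\mathrm{op}} \leq \Tr \gamma^{(1)}_{\Gamma_N} = N$, which holds for any $N$-body bosonic state.

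For the second estimate \eqref{eq:dif number squared}, we would first put $\N_-^2$ in normal order via the canonical commutation relation $[a(\uHm), a^\ast(\uHm)] = \|\uHm\|_{L^2}^2 = 1$, obtaining $\N_-^2 = a^\ast(\uHm)^2 a(\uHm)^2 + \N_-$ and analogously $\N_{\chim}^2 = a^\ast(\chim\uHm)^2 a(\chim\uHm)^2 + \|\chim\uHm\|_{L^2}^2\, \N_{\chim}$, where $\|\chim\uHm\|_{L^2}^2 = 1 - \int \chip^2 |\uHm|^2$. The quartic difference then expands, using $a(\uHm) = a(\chim\uHm) + a(r)$ and the commutativity of creation (resp. annihilation) operators among themselves, into a finite sum of quartic monomials in $a, a^\ast$ of the two modes $\chim \uHm$ and $r$, each containing at least one factor $a(r)$ or $a^\ast(r)$. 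After putting each monomial in normal-ordered form (the commutators produce lower-order terms already controlled by the first estimate), we would estimate it via the Fock-space inequality $|\langle a^\ast(u_1) a^\ast(u_2) a(v_1) a(v_2)\rangle_{\Gamma_N}| \leq C \|\gamma^{(2)}_{\Gamma_N}\|_{\mathrm{op}} \|u_1\|\|u_2\|\|v_1\|\|v_2\|$ together with the bosonic bound $\|\gamma^{(2)}_{\Gamma_N}\|_{\mathrm{op}} \leq N(N-1)$.

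The hard part will be the combinatorial accounting of the powers of $\|r\|$ in the quartic expansion: the monomials containing both $a(r)$ and $a^\ast(r)$ produce the factor $\|r\|^2 \leq \int \chip^2 |\uHm|^2$ directly, but the monomials containing a single $a(r)$ or $a^\ast(r)$ must be grouped into adjoint pairs and estimated using a weighted Cauchy--Schwarz inequality of the form $\pm(X Y^\ast + X^\ast Y) \leq \epsilon\, X^\ast X + \epsilon^{-1} Y^\ast Y$, with $\epsilon$ tuned to the $a(r)$-containing factor, in order to recover a quadratic (rather than linear) dependence on $\|r\|$ and hence the claimed bound proportional to $\int \chip^2 |\uHm|^2$.
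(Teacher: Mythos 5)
Your decomposition $\uHm=\chim\uHm+r$ and the bound $\|r\|_{L^2}^2\le\int_{\R^d}\chip^2|\uHm|^2$ are fine, but the estimates your argument actually delivers are weaker than the ones claimed in the lemma by a square root, and this is a genuine gap, not a bookkeeping issue. For \eqref{eq:dif number}, Cauchy--Schwarz on the two cross terms with $\|\gamma^{(1)}_{\Gamma_N}\|_{\mathrm{op}}\le N$ gives $|\langle\N_-\rangle-\langle\N_{\chim}\rangle|\le CN\|r\|\le CN\big(\int\chip^2|\uHm|^2\big)^{1/2}$, i.e.\ a bound \emph{linear} in $\|r\|$ where the lemma claims a quadratic one; you never address this discrepancy. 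The same loss occurs, as you yourself anticipate, for the quartic monomials containing a single factor $a(r)$ or $a^\ast(r)$, and the weighted Cauchy--Schwarz $\pm(XY^\ast+X^\ast Y)\le\epsilon X^\ast X+\epsilon^{-1}Y^\ast Y$ cannot repair it: the conjugate term $X^\ast X$ is built from the mode $\chim\uHm$ alone and has expectation of order $N^2$ with no smallness, so to make $\epsilon\langle X^\ast X\rangle\le CN^2\int\chip^2|\uHm|^2$ you must take $\epsilon\lesssim\int\chip^2|\uHm|^2$, and then $\epsilon^{-1}\langle Y^\ast Y\rangle\gtrsim N^2$; for every choice of $\epsilon$ you recover only the linear bound $CN^2\|r\|$. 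Since in the application $\int\chip^2|\uHm|^2=O(|T_N|)\ll N^{-1}$ (see \eqref{eq-cross_int}), the difference between $N^2\|r\|$ and $N^2\|r\|^2$ is exactly the difference between an error which is not $o(N)$ throughout the regime \eqref{eq:loc regime} and one which is, so the proposal does not prove the stated inequalities.

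The paper argues differently: it writes the two differences as traces of $\OO_1=|\uHm\rangle\langle\uHm|-|\chim\uHm\rangle\langle\chim\uHm|$ and of the corresponding two-body operator against $\gamma^{(1)}_{\Gamma_N}$, $\gamma^{(2)}_{\Gamma_N}$, reduces everything to an operator-norm bound on the rank-two operator $\OO_1$, and bounds its three matrix elements on the pair $\{\uHm,\chim\uHm\}$ by $C\int\chip^2|\uHm|^2$. Be aware, however, that your obstruction is present there too: writing $\chim\uHm=s\,\uHm+w$ with $s=\int\chim|\uHm|^2$ and $w\perp\uHm$, one has $\|w\|^2=\int\chim^2|\uHm|^2-s^2$, which is of the order of $\int\chip^2|\uHm|^2$ and not of its square, and $\OO_1$ contains the off-diagonal piece $-s\left(|\uHm\rangle\langle w|+|w\rangle\langle\uHm|\right)$ of norm $s\|w\|$; bounding matrix elements on the nearly collinear, non-orthonormal pair $\{\uHm,\chim\uHm\}$ does not by itself control the eigenvalues at the claimed order, so a bound of the form $CN\int\chip^2|\uHm|^2$ cannot follow from norm estimates valid for arbitrary $N$-body states. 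What your expansion does prove cleanly is $\left\langle\left(\N_--\N_{\chim}\right)^2\right\rangle_{\Gamma_N}\le CN^2\int\chip^2|\uHm|^2$ (normal-order the two terms of $\N_--\N_{\chim}$ and use $a^\ast(r)a(r)\le\|r\|^2\N$), and this, combined with the triangle inequality for the standard deviations of $\N_-$ and $\N_{\chim}$, is sufficient to pass from \eqref{eq:fluctu pre final} to \eqref{eq:sup fluctu}; but it is not a proof of the lemma as stated.
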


\begin{proof}
We denote 
\begin{align*}
 \OO_1 &= \left| \uHm \right\rangle \left\langle \uHm\right| - \left| \chim \uHm \right\rangle \left\langle \chim \uHm\right|\\ 
 \OO_2 &=  \left( \left| \uHm \right\rangle \left\langle \uHm\right| \right)^{\otimes 2} - 
\left( \left| \chim \uHm \right\rangle \left\langle \chim \uHm\right| \right)^{\otimes 2}.
\end{align*}
Clearly it suffices to prove that 
$$ \left| \tr \left[  \OO_1 \gamma_{\Gamma_N} ^{(1)}\right] \right| \leq C N \int_{\R ^d} \chip ^2 |\uHm| ^2
\quad , \quad  
\left| \tr \left[  \OO_2 \gamma_{\Gamma_N} ^{(2)}\right] \right| \leq C N ^2 \int_{\R ^d} \chip ^2 |\uHm| ^2\;,
$$
where $\gamma_{\Gamma_N} ^{(1)}$ and $ \gamma_{\Gamma_N} ^{(2)}$ are respectively the one- and two-body density matrices 
of $\Gamma_N$, see \eqref{eq:other def DM}. 
But
$$ \left| \tr \left[  \OO_1 \gamma_{\Gamma_N} ^{(1)}\right] \right| \leq \tr\left[ \gamma_{\Gamma_N} ^{(1)}\right] \norm{\OO_1}_{\gS ^\infty}  $$
and 
$$ \left| \tr \left[  \OO_2 \gamma_{\Gamma_N} ^{(2)}\right] \right| \leq \tr \left[ \gamma_{\Gamma_N} ^{(2)} \right] \norm{\OO_2}_{\gS ^\infty} 
\leq 2 \tr \left[ \gamma_{\Gamma_N} ^{(2)} \right]  \norm{\OO_1}_{\gS ^\infty}\;,
$$
where $\gS ^{\infty}$ is the set of compact operators, equipped with the operator norm. Since $\gamma_{\Gamma_N} ^{(1)}$
 and $\gamma_{\Gamma_N} ^{(2)}$ have by definition traces $N$ and $N(N-1)$, it suffices to prove that 
$$ \norm{\OO_1}_{\gS ^\infty} \leq C \int_{\R ^d} \chip ^2 |\uHm| ^2\;.$$
But, as a rank-two operator on $\mathrm{span} \{ \uHm,\chim \uHm\}$, $\OO_1$ has matrix elements
\begin{align*}
\left\langle \uHm, \OO_1 \uHm \right\rangle &= 1 - \left( \int_{\R ^d} \chim | \uHm |^2 \right) ^2\\
\left\langle \uHm, \OO_1 \chim \uHm \right\rangle &= \int_{\R ^d} \chim |\uHm |^2 \left( 1 - \int_{\R ^d} \chim ^2 |\uHm |^2 \right)\\
\left\langle \chim \uHm, \OO_1 \chim \uHm \right\rangle &= \left( \int_{\R ^d} \chim |\uHm |^2 \right) ^2 - \left(\int_{\R ^d} \chim ^2 |\uHm |^2\right) ^2 
\end{align*}
and it is straightforward to see that these are all bounded in absolute value by $C \int_{\R ^d} \chip ^2 |\uHm| ^2$. Hence, so must be the absolute values of the eigenvalues of $\OO$ and we deduce the result.
\end{proof}

The final result~\eqref{eq:sup fluctu} follows from~\eqref{eq:fluctu pre final} and the above lemma, 
recalling that in the regime of our interest we have 
\begin{equation} \label{eq-cross_int}
 \int_{\R ^d} \chip ^2 |\uHm| ^2 \leq \int_{ \{ x^1 \geq \frac{L_N}{2}-\ell \} }  |\uH| ^2 (x) 
= O( |T_N | ) \ll N ^{-1}\;,
\end{equation}
as follows from the choice of the cut-off functions and the decay estimates established in Section~\ref{sec:single well}.

  
\appendix

\section{Fluctuations out of a Bose-Einstein condensate}\label{sec:app}

Let us quickly explain how Proposition~\ref{pro:bog low} follows from the arguments of~\cite{LewNamSerSol-13}. To this end, we let $f$ and $g$ be two smooth truncation functions from $\R^+$ to $\R ^+$, satisfying
$$ f^2 + g ^2 = 1$$
and 
$$f (x) = 0 \mbox{ for } x \geq 1,\quad g (x) = 0 \mbox{ for } x\leq 1/2.$$
Then, define the operators
\begin{equation}\label{eq:trunation}
f_M := f \left( \Npe / M \right), \quad g_M := g \left( \Npe / M \right)  
\end{equation}
on $\cF (\gHp)$, where $\Npe$ is the number operator 
$$ \Npe := \bigoplus_{j=1} ^\infty  j \, \1_{(\gHp) ^j} \;.$$
Let us denote by $\dGamma (h^\perp)$ the second quantization of $h^\perp = P^\perp (-\Delta+V) P^\perp$, acting on $\cF (\gHp)$: 
$$ \dGamma (h^\perp) := \bigoplus_{k=0} ^{\infty} \sum_{j=0} ^k h^\perp_j\;.$$
Recall that $\dGamma (1)$, the second quantization of the identity on $\gHp$, is just $\Npe$.
We argue as follows:

\begin{proof}[Proof of Proposition~\ref{pro:bog low}]
We first pick some $M\leq N $, to be optimized over later, and apply~\cite[Lemma~6.3]{LewNamSerSol-13} to obtain 
$$ \bH_N  \geq f_M \bH_N f_M   +  g_M \bH_N g_M - \frac{C}{M ^2} \left( \dGamma (h^\perp)  + C \right) \;, $$
where we also apply the main results of the same paper to show that the first eigenvalue of $\bH_N$ is bounded by a constant (actually, for large $N$ it converges to the Bogoliubov ground state energy). Next, using~\cite[Proposition~5.1]{LewNamSerSol-13} to estimate the first term, which lives on the smaller space $\cF ^{\leq M} (\gHp)$, we get 
$$ \bH_N \geq \left( 1 -C \sqrt{\frac{M}{N}} \right) f_M \bHb  f_M   +  g_M \bH_N g_M - \frac{C}{M ^2} \left( \dGamma (h^\perp)  + C \right).$$ 
Next, under our assumption that $w\geq 0$ we have
$$ \bH_N \geq \dGamma (h^\perp) \geq C \Npe$$
and (see~\cite[Theorem~2.1]{LewNamSerSol-13})
\begin{equation}\label{eq:op bound bog}
 \bHb \geq C \dGamma (h^\perp +1) - C. 
\end{equation}
Since $f_M$ and $g_M$ commute with $\dGamma(h^\perp +1)$ (the latter conserves the particle number), we may borrow a little part of the main terms to control the error in the above:
\begin{multline*}
 \bH_N \geq \left( 1 -C \sqrt{\frac{M}{N}} - C' M ^{-2} \right) f_M \bHb  f_M   +  (1-C' M ^{-2}) g_M \dGamma (h^\perp) g_M 
 \\ + \frac{ C' - C }{M^2} \dGamma (h^\perp +1) - \frac{C}{M ^2} . 
\end{multline*}
Taking $C'$ large enough to make the first error term positive, and recalling that $h^\perp \geq C > 0$ we arrive at
$$ 
\bH_N \geq \left( 1 -C \sqrt{\frac{M}{N}} -  \frac{C}{M ^2} \right) f_M \bHb f_M   +  C \left( 1- \frac{1}{M ^2}\right) g_M \dGamma (h^\perp) g_M - \frac{C}{M ^2} .  
$$
Next we make the choice $M= N ^{1/5}$ to optimize error terms:
$$
\bH_N \geq \left( 1 -C N ^{-2/5} \right) f_M \bHb f_M   +  C \left( 1- N ^{-2/5} \right) g_M \dGamma (h^\perp) g_M - C N ^{-2/5} . 
$$
To obtain the first inequality in Proposition~\ref{pro:bog low} we may stop at this stage, inserting~\eqref{eq:op bound bog} 
and using the fact that $f_M$ and $g_M$ commute with $\dGamma (h^\perp)$.

We carry on with the proof of~\eqref{eq:bog low}. Since $\eB$ is bounded and $g_M$ localizes on particle numbers larger than
$M /2 \gg \eB$, we clearly have
$$f_M \bHb f_M + g_M \dGamma (h^\perp) g_M\geq f_M \bHb f_M + C g_M \Npe g_M \geq f_M \eB f_M + C \frac{M}{\eB} g_M \eB g_M.$$ 
Then we may write, on $\cF^{\leq N} (\gHp)$,
$$\bH_N \geq \eB + N ^{-2/5} f_M ^2 \Npe   +  C  g_M ^2 \Npe - C N ^{-2/5}. $$ 
Inserting the simple bounds 
$$ f_M ^2 \Npe \leq M f_M ^2, \quad g_M ^2 \Npe \leq N g_M ^2$$
we get 
\begin{align*}
\bH_N &\geq \eB + N ^{-3/5} f_M ^2 \left( \Npe  \right)^2  +  C  N ^{-1} g_M ^2 \left( \Npe  \right)^2 - C N ^{-2/5} \\
&\geq \eB + N ^{-1}  \left( \Npe  \right)^2 - C N ^{-2/5}
\end{align*}
which is the desired final result.
\end{proof}

Now, let us sketch the

\begin{proof}[Proof of Proposition~\ref{pro:GSE single well}]
This is again implicitly contained in~\cite{LewNamSerSol-13}. Using~\eqref{eq:hamil fock} we have 
$$\left\langle \Psi_N | H_N | \Psi_N \right\rangle = c_M \left( N\eH + \left\langle \Phi_B ^M, \bH_N \Phi_B ^M \right\rangle \right)$$
where $\Phi_B ^M$ is the projection of the Bogoliubov ground state onto sectors with less than $M$ particles. Using~\cite[Proposition~5.1]{LewNamSerSol-13} we obtain 
$$ \left\langle \Phi_B ^M, \bH_N \Phi_B ^M \right\rangle = \left\langle \Phi_B ^M, \bHb \Phi_B ^M \right\rangle + O\left( \sqrt{\frac{M}{N}}\right).$$
Applying then~\cite[Lemma 6.2~]{LewNamSerSol-13} we easily get
$$ \left\langle \Phi_B ^M, \bH_N \Phi_B ^M \right\rangle = \left\langle \Phi_B , \bHb \Phi_B  \right\rangle + O (M ^{-2}) + O\left( \sqrt{\frac{M}{N}}\right).$$
With $M\propto N ^{1/5}$ this gives 
$$ \left\langle \Psi_N | H_N | \Psi_N \right\rangle = c_M \left( N\eH + \eB + O (N ^{-2/5})\right)$$
and it remains to estimate $c_M$. Since this constant normalizes $\Psi_N$ in $\gH ^N$ and $\PhiB$ is a state we have 
$$ c_M ^{-2} = \sum_{j=0} ^M \norm{\phiB_j} ^2 = 1 - \sum_{j=M} ^{\infty} \norm{\phiB_j} ^2.$$
But, for any $\delta>0$, 
\begin{align}\label{eq:decay N bog}
 \sum_{j=M} ^{\infty} \norm{\phiB_j} ^2 &\leq \left( \sum_{j=M} ^{\infty} j ^{-\delta} \norm{\phiB_j} ^2 \right) ^{1/2} \left( \sum_{j=M} ^{\infty} j ^{\delta} \norm{\phiB_j} ^2 \right) ^{1/2}  \nonumber\\ 
 &\leq M ^{-\delta /2 }\left\langle \left(\Npe\right) ^{\delta} \right\rangle_{\PhiB} ^{1/2} \leq C_\delta M ^{-\delta /2}
\end{align}
where we use that $\left\langle \left(\Npe\right) ^{\delta} \right\rangle_{\PhiB}$ is finite for any $\delta$. This follows easily from the fact that $\PhiB$ is quasi-free, using Wick's theorem. Hence (again with $M\propto N ^{1/5}$) 
$$ c_M = 1 + O(N ^{-\delta / 2})$$
for any $\delta >0$, which completes the proof.
\end{proof}

Next we turn to the 

\begin{proof}[Proof of Lemma~\ref{lem:decay bog}]
It follows very closely arguments from~\cite[Appendix A]{LewNamSerSol-13} and~\cite{Nam-thesis}. Details are provided for the convenience of the reader. From the expression~\eqref{eq:bog hamil} one can see that the Bogoliubov energy functional can be written as 
\begin{equation}\label{eq:bog func DM}
\EB [\Gamma] := \tr_{\cF_{\perp}} \left[ \bHb \Gamma \right] = \tr \left[ \left( \HH -\muH + \lambda K\right) 
\gamma_\Gamma ^{(1)}\right] + \lambda \Real \tr [K \alpha_\Gamma ]\;, 
\end{equation}
where 
$$\HH = -\Delta + V + \lambda w \ast |\uH| ^2$$
is the mean-field Hamiltonian and $K$ the operator on $\gH= L^2 (\R^d)$ whose kernel is given by
\begin{equation}\label{eq:K}
K(x,y) = \uH (x) w(x-y) \uH (y)\;. 
\end{equation}
Note that 
$$
\langle \psi , K  \psi \rangle 
= \iint_{\R ^d \times \R ^d } \overline{\psi(x)} \uH(x) w(x-y) \uH(y) \psi (y) \D x \D y \\
= \int_{\R ^d} \hat{w} (k) |\widehat{\psi \uH} (k)| ^2 \D k \;,
$$
so it follows from our assumption $\hat{w}\geq 0$ that $K$ is a positive operator. Since $w$ is bounded, $K$ is also trace-class.

The Bogoliubov minimizer $\PhiB$ is the ground state of a quadratic Hamiltonian, in particular it is 
a \emph{pure} quasi-free state. This implies that its one-body and pairing matrices satisfy the relation
\begin{equation}\label{eq:pure QF state}
 \alpha_{\Phi_{\rm B}} \alpha_{\Phi_{\rm B}} ^*  = (\alpha_{\Phi_{\rm B}} J) ^2 = \gamma^{(1)}_{\Phi_{\rm B}} (1+\gamma^{(1)}_{\Phi_{\rm B}})\;, 
\end{equation}
see~\cite[Appendix A]{LewNamSerSol-13}, \cite{Solovej-notes} or~\cite{Nam-thesis}.

We diagonalize the trace-class operator $\gamma ^{(1)}_{\PhiB}$ in the form 
$$ \gamma ^{(1)}_{\PhiB} = \sum_{n\geq 1} c_n |u_n \rangle \langle u_n |
\quad , \quad c_n \geq 0\;,
$$
and the constraint~\eqref{eq:pure QF state} then implies that 
$$ \alpha_{\PhiB} = \sum_{n\geq 1} \sqrt{c_n(1+c_n)} |u_n \rangle \langle \overline{u_n} |$$
with $\ket{\overline{u_n}} := J \ket{u_n}$.
The Bogoliubov energy thus reads
\begin{equation}\label{eq-proof_Lemma4.3}
\eB =  \langle \PhiB, \bHb \PhiB \rangle= \sum_{n\geq 1} 
\Big(
c_n \langle u_n, \left(\HH -\muH\right) u_n\rangle + \lambda c_n \langle u_n, K u_n \rangle 
+ \lambda \sqrt{c_n(1+c_n)} \Real \langle \overline{u_n}, K u_n \rangle
\Big)
\;.
\end{equation}
Since $\left|\langle \overline{u_n}, K u_n \rangle\right| \leq \langle u_n, K u_n \rangle$ and
$c - \sqrt{c(1+c)} > -1/2$ for any $c \in [0,1]$,
we deduce 
$$
\eB \geq  \sum_{n\geq 1} c_n \langle u_n, \left(\HH -\muH\right) u_n\rangle - \frac{\lambda}{2}  \langle u_n, K u_n \rangle  
 = \tr \left[ \left(\HH -\muH\right) \gamma_{\PhiB} ^{(1)} \right] - \frac{\lambda}{2} \tr [K]\;,
$$
where we have used that $K$ is a positive trace-class operator as noted before. 
But $\eB \leq 0$ (see~\eqref{eq:estim bog}), hence
\begin{equation}\label{eq:control bog}
\tr \left[ (\HH- \muH) \gamma_{\PhiB} ^{(1)} \right] \leq \frac{\lambda}{2} \tr [K]\;.
\end{equation}
Recall that $\HH- \muH$ is bounded from below  on $\gHp$ by a positive constant $\kappa>0$
(since $\uH$ is the non-degenerate ground state of $\HH- \muH$, see the proof of Proposition~\ref{pro:comp Hartree min}),
and that $\gamma_{\PhiB} ^{(1)}$ lives on this space. 
Hence we deduce that $\gamma_{\PhiB} ^{(1)}$ is trace-class. 
Furthermore, $\alpha_{\PhiB}$ is Hilbert-Schmidt because of~\eqref{eq:pure QF state}. 
Finally, since both $-\Delta$ and $\lambda w*|\uH| ^2$ are non-negative, we get
$$ \tr \left[ V \gamma_{\PhiB} ^{(1)} \right] \leq 
\tr \left[ \HH \gamma_{\PhiB} ^{(1)} \right] \leq \frac{\lambda}{2} \tr [K] + \muH \tr \left[\gamma_{\PhiB} ^{(1)} \right]\;,
$$
which proves~\eqref{eq-tr_pot_is_bounded}.
\end{proof}

We end this appendix by giving the proof of the lower bound in \eqref{eq:estim bog}. Since $\HH \geq 0$ is bounded from below we obtain from~\eqref{eq:control bog}
$$\tr \left[ \gamma_{\PhiB} ^{(1)} \right]\leq C \lambda.$$
By using \eqref{eq-proof_Lemma4.3}, the inequality $| \langle \overline{u_n} , K U_n \rangle | \leq u_n , K u_n\rangle$
and the
positivity of $K$, we get
\begin{align*}
0 \geq \eB &\geq \tr \left[ \left(\HH -\muH\right) \gamma_{\PhiB} ^{(1)} \right] 
-\lambda \sum_{n\geq 1} \sqrt{c_n(1+c_n)} \langle u_n, K u_n \rangle \\
&\geq \tr \left[ \left(\HH -\muH\right) \gamma_{\PhiB} ^{(1)} \right] -\lambda \left( \norm{K} 
\tr \left[ \gamma_{\PhiB} ^{(1)} \right]\right) ^{1/2} 
\left(\tr [K] + \norm{K} \tr \left[ \gamma_{\PhiB} ^{(1)} \right]\right) ^{1/2}\\
&\geq \tr \left[ \left(\HH -\muH\right) \gamma_{\PhiB} ^{(1)} \right] - C' \lambda ^{3/2} 
\geq - C' \lambda ^{3/2}\;, 
\end{align*}
where the second line follows from  the Cauchy-Schwarz inequality and the last inequality follows from
the fact that $\HH -\muH$ is bounded from below by $\kappa>0$ on $\gH_\perp$.
We may bootstrap the argument to get the claimed lower bound.

\section{Minimal energy when the two wells are infinitely far apart}\label{sec:app 2}

Let us consider the situation in which the distance $L$ between the two potential wells is sent
to infinity before the number of particles $N$. The tunneling energy \eqref{eq:tunneling term} can then be neglected, as well as the
interaction energy  $\int |\uHm |^2 \left(w \ast |\uHp|^2\right)$ between particles in 
different wells. 
The problem can thus be mapped into a problem of two independent interacting bosonic gases localized in
the left and right wells, with fixed particle numbers $n$ and  $N-n$. 
According to Propositions~\ref{pro:bog low} and~\ref{pro:GSE single well}, the corresponding lowest energy
in the large particle number limits $n \gg 1$ and  $N-n \gg 1$ reads
\begin{multline} \label{eq:energy_Fock_states}
E_{n,N-n}^\localized = n \eH  \left(\lambda\frac{n-1}{N-1} \right) + \eB \left(\lambda \frac{n-1}{N-1}\right)  
              \\ + (N-n) \eH  \left(\lambda\frac{N-n-1}{N-1} \right) +  \eB \left(\lambda \frac{N-n-1}{N-1}\right) \;,
\end{multline} 
up to small corrections $o (1)$.
Here,  $\eH(\lambda)$ and $\eB(\lambda)$ are the Hartree and Bogoliubov energies corresponding to the 
Hamiltonian (\ref{eq:hamil depart}) with a single well potential $V_N^+$ or $V_N^{-}$.
Since the number of particles in the left and right wells are equal to $n$ and $N-n$ instead of $N$,  
the coupling constant $\lambda$ must be renormalized as indicated in \eqref{eq:energy_Fock_states}.  

In this appendix, we prove the following very intuitive fact: among all configurations with $n$ particles
in the left well and $N-n$ particles in the right well, 
the configuration with the smallest energy is the one with 
an equal number $n=N/2$ of particles in each well, which has energy
\begin{equation} \label{eq:energy_localized_state}
E_{\frac{N}{2},\frac{N}{2}}^\localized = N \eH \Big( \Delta_N \frac{\lambda}{2} \Big) + 2 \eB \Big( \frac{\lambda}{2} \Big) + o ( N^{-1}) 
\end{equation}
with $\Delta_N$ defined in~\eqref{eq:delta N}. More precisely, we prove the

\begin{proposition}[\textbf{Distributing particles evenly is optimal}]\label{lem:distri part}\mbox{}\\
There exist an integer $N_0$ and a constant $C>0$ such that for any $N \geq N_0$ and 
$0 \leq n\leq N$, 
\begin{equation}\label{eq:distri even}
E_{n,N-n}^\localized \geq  E_{\frac{N}{2},\frac{N}{2}}^\localized + \frac{C}{N} \left|n-\frac{N}{2}\right| ^2
\;.
\end{equation}
\end{proposition}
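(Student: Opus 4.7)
The strategy is to isolate a strictly convex function of $n$. Setting $H(\mu):=\mu\,\eH(\mu)$ and using $n=1+(N-1)\lambda_n/\lambda$, one obtains
\begin{equation*}
E_{n,N-n}^\localized=\frac{N-1}{\lambda}\bigl[H(\lambda_n)+H(\lambda_{N-n})\bigr]+\bigl[\eH(\lambda_n)+\eH(\lambda_{N-n})\bigr]+\bigl[\eB(\lambda_n)+\eB(\lambda_{N-n})\bigr].
\end{equation*}
The key observation is that $\lambda_n+\lambda_{N-n}=\Delta_N\lambda$ is independent of $n$, while $\lambda_n-\Delta_N\lambda/2=\lambda(n-N/2)/(N-1)$, so each of the three sums is a convexity defect of the respective function centered at $\bar\lambda:=\Delta_N\lambda/2$. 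The main positive contribution will come from the term involving $H$.

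To establish strict convexity of $H$ with a quantitative modulus, I would use the characterization $H(\mu)=\inf\{\int|\nabla v|^2+V|v|^2+\tfrac12\iint|v|^2w|v|^2:\|v\|_{L^2}^2=\mu\}$ together with the interpolation $v_t:=\sqrt{t v_1^2+(1-t)v_2^2}$, which is admissible for mass $t\mu_1+(1-t)\mu_2$ when $v_i$ are positive minimizers. Combining the standard convexity of $\rho\mapsto\int|\nabla\sqrt\rho|^2$, the linearity of the potential term in $\rho$, and the elementary bilinear expansion of the quartic term yields
\begin{equation*}
H(t\mu_1+(1-t)\mu_2)\leq tH(\mu_1)+(1-t)H(\mu_2)-\tfrac{t(1-t)}{2}\iint(v_1^2-v_2^2)\,w\,(v_1^2-v_2^2),
\end{equation*}
the last term being non-positive by $\hat w\geq 0$. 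Specializing to $t=1/2$, $\mu_{1,2}=\bar\lambda\pm\epsilon$ and letting $\epsilon\to 0$ gives $H''(\bar\lambda)\geq\iint g\,w\,g$ with $g:=\partial_\mu(\mu|\uH^\mu|^2)|_{\mu=\bar\lambda}$ satisfying $\int g=1$. Since $\hat w$ is continuous with $\hat w(0)=\int w>0$, and $\hat g$ is continuous with $\hat g(0)=1$, a Plancherel argument localized near the origin yields $\iint g w g\geq c>0$. Taylor expansion then provides $H(\lambda_n)+H(\lambda_{N-n})-2H(\bar\lambda)\gtrsim\lambda^2(n-N/2)^2/(N-1)^2$, and multiplication by $(N-1)/\lambda$ produces the claimed $C(n-N/2)^2/N$ lower bound for $n$ in a fixed proportional neighborhood of $N/2$.

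The remainders are benign: the $\eH$ sum is concave in $\mu$ (working against the sign we want), but $C^2$ smoothness bounds its defect by $O((n-N/2)^2/N^2)$, absorbed in the $H$ contribution; the $\eB$ sum is similarly bounded using smoothness of $\lambda\mapsto\eB(\lambda)$ near $\lambda/2$, with the convention $\eB(\lambda_0):=0$ covering the boundary cases $n\in\{0,1,N-1,N\}$. For $|n-N/2|\geq\gamma N$ with $\gamma>0$ fixed, where the local Taylor expansion at $\bar\lambda$ is not sharp, the target $C(n-N/2)^2/N\leq CN/4$ follows from uniform strict convexity of $H$ on a compact interval together with strict monotonicity of $\mu\mapsto\eH(\mu)$ (a consequence of $w>0$ and $\uH^\mu\neq 0$). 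The main obstacle is the quantitative strict convexity of $H$: while pointwise $H''(\bar\lambda)>0$ is a clean consequence of the Fourier argument, upgrading to a uniform positive lower bound $H''(\mu)\geq c$ on a neighborhood of $\bar\lambda$ requires controlling the regularity of $\mu\mapsto|\uH^\mu|^2$, which rests on the spectral gap of the mean-field Hamiltonian already used in the proof of Proposition~\ref{pro:comp Hartree min}.
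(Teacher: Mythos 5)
Your proposal is correct and follows essentially the same route as the paper: both arguments reduce the claim to strict convexity of the Hartree energy as a function of the mass (your $H(\mu)=\mu\,\eH(\mu)$ is, via the scaling relation $\eH(m,\lambda)=m\,\eH(m\lambda)$, exactly the quantity whose strict convexity in $m$ is the paper's key lemma), treat the $\eH$- and $\eB$-sums as lower-order perturbations, and dispose of the extreme values of $n$ by a crude $O(N)$ comparison using monotonicity of the Hartree energy in the coupling. The differences are only organizational --- you exploit $\lambda_n+\lambda_{N-n}=\Delta_N\lambda$ to compare directly with the midpoint convexity defect at $\bar\lambda$, whereas the paper bounds $\partial_n^2 E^{\localized}_{n,N-n}$ away from the edges and Taylor-expands at $n=N/2$ --- and on the one delicate point, the uniform quantitative lower bound on the convexity (which the paper asserts from strict convexity without further justification), your interpolation/Fourier sketch is if anything more explicit, while correctly flagging that it still needs regularity of $\mu\mapsto \uH^\mu$ through the spectral gap of the mean-field Hamiltonian.
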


We will use the following well known property of the Hartree energy.

\begin{lemma}[\textbf{Scaling and convexity of the Hartree energy}]\label{lem:conv Hartree}\mbox{}\\
Let $\eH (m,\lambda)$ be the minimum of the Hartree functional $\EH ^\lambda [u]$ given by 
\eqref{eq:Hartree func} under the constraint $\| u\|^2_{L^2} = m$, 
\begin{equation}\label{eq:ener Hartree masse}
\eH (m,\lambda) := \inf \left\{ \EH ^\lambda [u] \,\big|\, \int_{\R ^d} |u| ^2 = m \right\} \;.
\end{equation}
For any $m,\lambda \geq 0$ we have 
\begin{equation}\label{eq:scaling Hartree}
\eH (m,\lambda) =  m \eH (1,m\lambda) := m \eH ( m \lambda).
\end{equation}
Moreover, $\eH (m,\lambda)$ is a strictly convex function of $m$.
\end{lemma}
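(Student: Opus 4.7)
My plan is to prove the scaling identity by a direct substitution and then establish (strict) convexity by reducing the Hartree problem to a minimization over densities. For the scaling, I set $u = \sqrt{m}\, v$ with $\|v\|_{L^2}=1$: the kinetic and trapping terms of $\EH^\lambda[u]$ scale like $m$, while the quartic interaction term scales like $m^2$. Factoring out $m$ gives $\EH^\lambda[\sqrt m\, v] = m\, \EH^{m\lambda}[v]$, and taking the infimum over unit-normalized $v$ yields $\eH(m,\lambda) = m\, \eH(1, m\lambda)$.

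For convexity I will reduce to a functional on densities. The diamagnetic-type inequality $|\nabla u|^2 \geq (\nabla |u|)^2$ implies
$$\eH(m, \lambda) = \inf_{\substack{\rho\geq 0\\ \int\rho = m}} F[\rho], \qquad F[\rho] := \int_{\R^d} |\nabla \sqrt{\rho}|^2 + \int_{\R^d} V\rho + \frac{\lambda}{2}\iint \rho(x) w(x-y)\rho(y)\,\D x\, \D y,$$
with the infimum attained at $\rho_m := |u_m|^2$, where $u_m > 0$ is the unique positive Hartree minimizer at mass $m$. Each piece of $F$ is convex in $\rho$: the trapping term is linear; the interaction term rewrites as $\frac{\lambda}{2}\int \hat{w}(k) |\hat{\rho}(k)|^2 \D k$ and is convex because $\hat{w}\geq 0$ by Assumption~\ref{assumptions-w}; the kinetic term $\rho \mapsto \int |\nabla\sqrt\rho|^2 = \frac{1}{4}\int |\nabla\rho|^2/\rho$ is convex since $(\rho, p)\mapsto |p|^2/\rho$ is the jointly convex perspective of the squared Euclidean norm, composed with the linear map $\rho\mapsto(\rho,\nabla\rho)$. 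Taking $\rho := t\rho_{m_1} + (1-t)\rho_{m_2}$ as a trial density of mass $m = tm_1 + (1-t)m_2$ then gives
$$\eH(m,\lambda) \leq F[\rho] \leq tF[\rho_{m_1}] + (1-t)F[\rho_{m_2}] = t\,\eH(m_1,\lambda) + (1-t)\,\eH(m_2,\lambda).$$

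To upgrade this to strict convexity when $m_1\neq m_2$ (and $\lambda>0$, which is the case of interest throughout the paper), I extract a strictly positive gap from the interaction term. For the quadratic form $Q(\rho):=\iint \rho w\rho$, the standard convexity defect identity yields a gap of
$$\frac{\lambda t(1-t)}{2}\,Q\bigl(\rho_{m_2}-\rho_{m_1}\bigr) \;=\; \frac{\lambda t(1-t)}{2}\int_{\R^d} \hat{w}(k)\,\bigl|\widehat{\rho_{m_2}-\rho_{m_1}}(k)\bigr|^2 \D k.$$
The crucial observation is that $\widehat{\rho_{m_2}-\rho_{m_1}}(0) = m_2 - m_1 \neq 0$, while $\hat w(0) = \int_{\R^d} w > 0$ because $w>0$ pointwise. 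Both Fourier transforms are continuous (the densities lie in $L^1$ and $w\in L^1\cap L^\infty$), so the integrand is strictly positive on a neighborhood of the origin, and the gap is strictly positive.

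The main obstacle is precisely this strict convexity step: one must locate a piece of the Hartree functional whose convexity is strict along perturbations that change the total mass. The trapping term is linear, and the kinetic functional $\int |\nabla\sqrt\rho|^2$ degenerates on disjointly supported perturbations, so strictness must come from the interaction. The Fourier argument above exploits simultaneously the positivity of $\hat w$ (for convexity) and its strict positivity at the origin (for nondegeneracy precisely along mass-changing perturbations)---both guaranteed by Assumption~\ref{assumptions-w}. Note that for $\lambda=0$ the Hartree energy $\eH(m,0)=m\inf\spec(-\Delta+V)$ is linear in $m$, so strict convexity only holds in the interacting regime, which is the only case used in the sequel.
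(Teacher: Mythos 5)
Your proposal is correct, and its skeleton is the same as the paper's: the scaling identity by substituting $u=\sqrt{m}\,v$ (the paper only says ``simple scaling argument''), and convexity in $m$ by passing to the density $\rho=|u|^2$, using convexity of the functional in $\rho$, and testing the energy at mass $tm_1+(1-t)m_2$ with the convex combination $t\rho_{m_1}+(1-t)\rho_{m_2}$ of the minimizing densities. Where you genuinely diverge is the strictness step. The paper simply invokes strict convexity of the whole Hartree functional in $\rho$ (citing Lieb--Seiringer--Yngvason, Appendix~A), which hides the fact that the kinetic term is only convex and that strictness really rests on the interaction quadratic form being positive definite. You instead isolate the convexity defect of the interaction term, $\frac{\lambda}{2}t(1-t)\int \hat{w}(k)\,|\widehat{\rho_{m_2}-\rho_{m_1}}(k)|^2\,\D k$, and observe that mass-changing perturbations are visible at $k=0$, where $\hat{w}(0)=\int w>0$ and both transforms are continuous; this yields a strictly positive gap using only $w>0$, $w\in L^1$, rather than positive definiteness of the form (which in the paper's setting would require, e.g., analyticity of $\hat w$ to rule out a kernel). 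Your version is thus self-contained, quantitative, and more robust, and your closing caveat is apt: at $\lambda=0$ the energy is linear in $m$, so the strict convexity asserted in the lemma only holds for $\lambda>0$, the case actually used in Appendix~B.
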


\begin{proof}
Equation~\eqref{eq:scaling Hartree} follows from a simple scaling argument. To see the convexity of the energy as a function of the mass, we note that $\EH^\lambda [u]$ is clearly  a strictly convex functional of $\rho = |u|^2$
(see e.g.~\cite[Appendix~A]{LieSeiYng-00} for details). 
We denote by $\rho_{{\rm H} ,1} = |u_{{\rm H} ,1}| ^2$ and $\rho_{{\rm H} ,2} = |u_{{\rm H} ,2}| ^2$ 
the minimizing densities at masses $m_1$ and $m_2 \not= m_1$ and abuse notation by setting
$ \EH^\lambda [\rho_{{\rm H}, i}] := \EH^\lambda [u_{{\rm H},i}]$.
We then have for any $0< t < 1$,
\begin{align*}
t \eH (m_1, \lambda) + (1-t) \eH (m_1, \lambda) &=  t \EH [\rho_{{\rm H},1}] + (1-t) \EH [\rho_{{\rm H},2}] 
\\& >   \EH \left[ t \rho_{{\rm H} ,1} + (1-t) \rho_{{\rm H},2} \right]
\\& \geq \eH \left( t m_1 + (1-t)m_2, \lambda \right)\;,
\end{align*}
where the last inequality comes from 
$$ \int_{\R ^d } \big( t \rho_{{\rm H},1} + (1-t) \rho_{{\rm H},2} \big) = t m_1 + (1-t)m_2.$$
\end{proof}

\begin{proof}[Proof of Proposition~\ref{lem:distri part}]
Using~\eqref{eq:scaling Hartree} we get 
\begin{eqnarray*}
\dd_n  \left( n \eH \left( 1,\lambda\frac{n-1}{N-1}\right)\right)
& = & 
\dd_n  \left( n \frac{N-1}{n-1} \eH \left( \frac{n-1}{N-1}, \lambda \right) \right)\\
& = & - \frac{N-1}{(n-1)^2} \eH \left( \frac{n-1}{N-1}, \lambda \right) + \frac{n}{n-1} \frac{\partial \eH}{\partial m}
 \left( \frac{n-1}{N-1}, \lambda \right)  \\
\dd_n ^2 \left( n \eH \left( 1,\lambda\frac{n-1}{N-1}\right)\right)
& = & 2\frac{N-1}{(n-1) ^3} \eH \left( \frac{n-1}{N-1}, \lambda \right) - \frac{2}{(n-1) ^2} \frac{\partial  \eH}{\partial m} 
\left( \frac{n-1}{N-1}, \lambda \right) \\
& &
+ \frac{n}{(n-1)(N-1)} \frac{\partial^2 \eH}{\partial m^2}  \left( \frac{n-1}{N-1}, \lambda \right). 
\end{eqnarray*}
Since  $\frac{\partial^2 \eH}{\partial m^2} (m,\lambda)$ is strictly positive by Lemma~\ref{lem:conv Hartree} and
$\eH (m,\lambda)$ and $\frac{\partial \eH}{\partial m} (m,\lambda)$ 
are bounded functions of $m$ on $[0,1]$, we deduce that there is a constant $C>0$ such that
$$ \dd_n ^2 \left( n \eH \left( 1,\lambda\frac{n-1}{N-1}\right) \right)\geq \frac{C}{2 N} \quad \mbox{ if } \quad 
n\geq c_N 
$$
with $c_N = O ( N^{2/3})$ as $N \rightarrow \infty$. Moreover (note that the first eigenvalue of the Bogoliubov Hamiltonian is always non-degenerate),
$$ 
\dd_n ^2 \eB \left( \lambda \frac{n-1}{N-1}\right) = \frac{\lambda ^2}{(N-1) ^2} \eB '' \left( \lambda \frac{n-1}{N-1}\right)
$$
is clearly bounded uniformly by a $O(N^{-2})$. Thus, for large enough $N$ one has
$$ \dd_n ^2  E_{n,N-n}^\localized 
\geq \frac{C}{N}\quad  \mbox{ if } \quad n\geq c_N \quad \mbox{ and } \quad N-n \geq c_N\; .
$$
The function $E_{n,N-n}^\localized $ being symmetric around $n=N/2$, this implies that it must have a local minimum there.
One infers from a second-order Taylor expansion at $n=N/2$ and the fact that  the lower bound on the second derivative is uniform
that  
$$
E_{n,N-n}^\localized 
 \geq  
E_{N/2,N/2}^\localized + \frac{C}{N} \left|n-\frac{N}{2}\right| ^2 \quad \mbox{ if } \quad 
n\geq c_N \quad \mbox{ and }\quad  N-n \geq c_N .
$$
To see that the bounds also holds for $n<c_N$ or $N-n<c_N$, we note that for such $n$
$$ E_{n,N-n}^\localized  = N \eH ( \lambda ) + O (N^{2/3}) = N \eH\left( \frac{\lambda}{2}\right) + \frac{C N}{4}+ O (N^{2/3})
\geq  E_{N/2,N/2}^\localized + \frac{C}{N} \Big( \frac{N}{2} - n \Bigr)^2\;,
 $$
where we have used \eqref{eq:energy_localized_state} and the fact that
$\eH (1,\lambda)$ is increasing in $\lambda$. This completes the proof.
\end{proof}
\section{Spin squeezed states}\label{app:BH}

In this appendix we define the spin squeezed states and estimate in the large $N$ limit their energy 
for  the two-mode Bose Hubbard Hamiltonian (see Section~\ref{sec:heuristic2})
\begin{equation*}
  H_{\rm BH}  = 
  e_+ \N_+ + e_- \N_{-} + T_N ( a_{-}^\ast a_+ + a_+^\ast a_{-} ) +
  \frac{U_N}{2} \left( a_+^\ast a_+^\ast a_{+} a_{+} + a_{-}^\ast a_{-}^\ast a_{-} a_{-} \right)\;.
 \end{equation*}
Recall that this Hamiltonian acts on the subspace  $\gH_{\rm BH} \subset \gH ^N$ spanned by the Fock states $\ket{n,N-n}$, $n=0,\cdots, N$, and that  $\N_{-}+\N_{+} = N \1$ in this subspace. Omitting terms proportional to the identity, $H_{\rm BH} $ can be rewritten
as
$$
H_{\rm BH} =  ( e_+-e_{-} ) J_z + 2 T_N J_x + U_N J_z^2\;,
$$
where $J_x$ and $J_z$ are the kinetic momentum operators defined by\footnote{
  It is easy to see that these self-adjoint operators
 satisfy
the usual commutation relations of angular momenta. This implies in particular that
$e^{\I \phi_N J_x} J_z e^{-\I \phi_N J_x} = \cos \phi_N J_z - \sin \phi_N J_y$.
}
%
$$
J_x= \frac{1}{2} ( a_{-}^\ast a_+ + a_+^\ast a_{-} )\mbox{ , } 
J_y= \frac{1}{2\I} ( a_{-}^\ast a_+ - a_+^\ast a_{-} ) \mbox{ , }
J_z =  \frac{1}{2} (a_{-}^\ast a_- - a_+^\ast a_{+} )  = \N_{-}-\frac{N}{2} \1 \;.
$$
The total energy of a state $\Psi \in \gH_{\rm BH}$ invariant under the exchange of the two wells
is thus
\begin{equation} \label{eq-total_energy_bis}
  E_\Psi = \bra{\Psi} H_{\rm HB} \ket{\Psi} =
   2T_N \bra{\Psi} J_x \ket{\Psi}
  + U_N \left\langle \left(\Delta J_z^2\right)^2\right\rangle_\Psi\;.
\end{equation}
An arbitrary state $\Psi \in \gH_{\rm BH}$ can be represented geometrically
by a 3-dimensional vector with components $\bra{\Psi} J_i \ket{\Psi}$, $i=1,2,3$, on the Bloch sphere of radius
$N/2$, together with the corresponding fluctuations  (see e.g.~\cite{Ferrini2011}).

For vanishing interactions $U_N=0$, the ground state of $H_{\rm BH}$ 
is the delocalized state $\Psidloc$ given by \eqref{eq:ansatz deloc}. This state is 
a {\emph{spin coherent state}} centered on the intersection of the Bloch sphere with the $x$-axis,
i.e., it is an eigenstate of
$J_x$ with the highest eigenvalue $N/2$ and has fluctuations of the angular momenta in
the perpendicular directions equal to
$\langle ( \Delta J_y )^2 \rangle_{\rm dloc} = \langle ( \Delta J_z )^2 \rangle_{\rm dloc} = N/4$.
Increasing $U_N/|T_N|$ to small non-zero values, it becomes energetically more favorable to decrease
the particle number fluctuations
$\langle (\Delta \N_{-})^2\rangle= \langle (\Delta J_z)^2\rangle $
and thus the interaction energy (second term in the right-hand side of~\eqref{eq-total_energy_bis}),
to the expense
of increasing a little bit the kinetic and potential energies (first term).
One expects that the ground  state of $H_{\rm BH}$ is a {\emph{particle number spin squeezed state}}~\cite{Kitagawa1993}.
By definition, such a state has reduced fluctuations  
 of $J_z$ (\ie, of $\N_{-}$) and enhanced fluctuations of $J_y$ as compared to the coherent state $\Psidloc$, and like the latter it saturates the spin uncertainty inequality, i.e.,
\begin{equation} \label{spin_uncertainty_relation}
\langle (\Delta J_y)^2 \rangle \langle (\Delta J_z)^2 \rangle = \frac{| \langle J_x \rangle |^2}{4} \;.
\end{equation}
In contrast to coherent states, particles in a squeezed state are
correlated.

A spin squeezed state can be obtained by~\cite{Kitagawa1993}
\begin{equation} \label{eq-squeezed_state}
\begin{array}{lcl}
    \ket{\Psi_{\rm sq}}
    & = &
 \displaystyle     e^{- \I \phi_N J_x} e^{-\I \theta_N J_z^2} \ket{\Psidloc} \\
    & = & 
 \displaystyle  2^{-\frac{N}{2}} \sum_{n=0}^N   \sqrt{\frac{N!}{n!(N-n)!}} e^{-\I \theta_N ( n - N/2)^2}  e^{- \I \phi_N J_x} \ket{n,N-n} 
  \;,
  \end{array}
\end{equation}
where we have used the components 
$c_n = 2^{-N/2} \sqrt{N!/(n!(N-n)!)}$ of $\Psidloc$ in the Fock state basis.
The unitary operator $ e^{-\I \theta_N J_z^2}$ in \eqref{eq-squeezed_state} squeezes the angular momentum fluctuations in one direction while increasing them
in the perpendicular direction, and the unitary $e^{- \I \phi_N J_x}$  rotates the state on the Bloch sphere
around the $x$-axis, in such a way that the squeezing direction be along the $z$-axis.
In fact, choosing 
$\theta_N = N^{-\alpha-1/2}$ and $\phi_N$ given by $\tan \phi_N = N^{\alpha-1/2}$ with an exponent $\alpha \in (1/6,1/2)$,
a lengthly calculation gives in the limit $N\gg 1$
\begin{equation} \label{eq-variance_J_y_J_z}
\langle (\Delta J_z) ^2\rangle_{\rm sq} = \langle (\Delta \N_{-} )^2 \rangle_{\rm sq} \approx
\frac{N^{2 \alpha}}{4}\;\ll\;\frac{N}{4}
\quad , \quad
\langle (\Delta J_y) ^2\rangle_{\rm sq} \approx
\frac{N^{2(1- \alpha)}}{4} \;\gg\;\frac{N}{4}
\end{equation}
and
\begin{equation} \label{eq-expectation_J_x}
\langle \Psi_{\rm sq} | J_x | \Psi_{\rm sq} \rangle = \langle \uHmp\,,\, \gamma_{\Psi_{\rm sq}}^{(1)} \uHpm \rangle
\approx  \frac{N}{2}- \frac{N^{1-2\alpha}}{4}\;,
\end{equation}
so that
$\Psi_{\rm sq}$ satisfies the minimal spin uncertainty condition \eqref{spin_uncertainty_relation}
to leading order in $N$.

The energy of the squeezed state \eqref{eq-squeezed_state} is 
$$
\langle \Psi_{\rm sq} | H_{\rm BH} | \Psi_{\rm sq} \rangle \approx
T_N N \left( 1- \frac{1}{2} N^{-2 \alpha} \right)  + U_N \frac{N^{2 \alpha}}{4} 
$$
with error terms of order $(|T_N|+U_N) N^{|1-4 \alpha|} $.
Comparing with the energy of the coherent state,
$$
\langle \Psi_{\rm dloc} | H_{\rm BH} | \Psi_{\rm dloc} \rangle
= T_N N   + U_N \frac{N}{4}\;,
$$
we find that $\Psi_{\rm sq}$
has a lower energy than $\Psidloc$ 
when $|T_N|/U_N < N^{2 \alpha}/2$.
Since the exponent $\alpha$ can be chosen arbitrary close to $1/2$ and $U_N = O (\lambda N^{-1})$, we may
expect a transition between a delocalized
regime where the ground state of $H_{\rm BH}$ is close to   $\Psidloc$
(Rabi regime) to a localized regime where it is close to a spin squeezed state (Josephson regime) occurring for
$|T_N| \sim \lambda$, as reported in Table~\ref{tab1}.

                                                                                                                                                                                                                                                                                                                 According to \eqref{eq-diagonal_elements_one-body_density_matrix} and \eqref{eq-expectation_J_x},
the one-body density matrix of $\Psi_{\rm sq}$ is almost equal to the density matrix~\eqref{eq:DM 1 dloc} of the delocalized state, up to corrections 
of order $N^{1-2\alpha}$ in the off-diagonal elements. Thus one can conjecture that
in the Josephson regime
$\lambda N^{-2} \ll | T_N| \ll \lambda$,  the  one-body density matrix
$\gamma_{\Psi_N}^{(1)}$ of the ground state is close to
$\gamma_{\rm dloc}^{(1)}$ and has only one macroscopic eigenvalue.
This conjecture and the localization properties of the ground state
reported in Table~\ref{tab1} are supported by
numerical simulations (see e.g.~\cite{Gati_Oberthaler_2007}).

Finally, we note that the state $\Psi$ with  
Gaussian components \eqref{eq-Gaussian_state} considered in Section~\ref{sec:heuristic2}
has  properties similar to $\Psi_{\rm sq}$   in the large $N$ limit.
In fact, choosing $\sigma_N = N^{\alpha}$,
simple calculations  show that $\Psi$ and $\Psi_{\rm sq}$ have  to leading order in $N$ the same
variances of $J_z$ and $J_y$ and expectation of $J_x$, given by \eqref{eq-variance_J_y_J_z} and \eqref{eq-expectation_J_x}.



\end{document}